\documentclass{lmcs}
\pdfoutput=1

\usepackage{lastpage}

\lmcsheading{}{1--\pageref{LastPage}}{}{}%
{Jan.~09,~2018}{Oct.~16,~2018}{}

\usepackage[T1]{fontenc}
\usepackage[latin9]{inputenc}
\usepackage{color}
\usepackage{amsmath}
\usepackage{amssymb}
\usepackage{nicefrac}
\usepackage{hyperref}

\newcommand{\bi}{\begin{itemize}}
 \newcommand{\ei}{\end{itemize}}
 \newcommand{\be}{\begin{enumerate}}
 \newcommand{\ee}{\end{enumerate}}
 \newcommand{\bd}{\begin{description}}
 \newcommand{\ed}{\end{description}}
 \newcommand{\bs}{\bigskip}
 \newcommand{\ms}{\medskip}
  \newcommand{\w}{\wedge}

\makeatletter





\usepackage{footnote}

\global\long\def\setin#1#2{{\left\{  \right.\negthickspace\negthickspace\negthickspace\circ}#1\mid#2{\circ\negthickspace\negthickspace\negthickspace\left.\right\}  }}
\global\long\def\setfin#1{{\left\{  \right.\negthickspace\negthickspace\negthickspace\circ}#1{\circ\negthickspace\negthickspace\negthickspace\left.\right\}  }}
\global\long\def\class#1#2{{\left\{  \right.\negthickspace\negthickspace\negthickspace\circ}#1\hat{\,\mid\,}#2{\circ\negthickspace\negthickspace\negthickspace\left.\right\}  }}

 \global\long\def\setout#1#2{\left\{  #1\mid#2\right\}  }
 \global\long\def\imp{\rightarrow}
\global\long\def\saferst{\succ}

 \global\long\def\safe{\succ}
 
 \global\long\def\i{\succ}

\global\long\def\lng{\mathcal{L}_{RST}^{\{HF\}}}
 
 \global\long\def\lngc{\mathcal{L}_{RST}^{C}}
 
\global\long\def\inw#1{\left\Vert #1\right\Vert _{v}}
\global\long\def\inwa#1{\left\Vert #1\right\Vert _{v\left[x:=a\right]}}
\global\long\def\inwv#1{\left\Vert #1\right\Vert _{v}^{W}}

\usepackage{enumitem}






\makeatother

\begin{document}

\title{Applicable Mathematics in a Minimal Computational Theory of Sets}

\author[A.~Avron]{Arnon Avron}  
\address{Tel Aviv University,  Tel-Aviv, Israel}        
\email{aa@post.tau.ac.il}  

\author[L.~Cohen]{Liron Cohen}  
\address{Cornell University, Ithaca, NY, USA}   
\email{lironcohen@cornell.edu}  
\begin{abstract}
In previous papers on this project a general static logical framework for formalizing
and mechanizing set theories of different strength was suggested, and the power of 
some predicatively acceptable theories in that framework was explored. 
In this work we first improve that framework 
by enriching it with means for coherently extending by definitions
its theories, without destroying its static nature 
or violating any of the principles on which it is based. Then we turn to
investigate within the enriched framework the power of 
the {\em minimal} (predicatively acceptable) theory in it
that proves the existence of infinite sets. We show that that theory 
is a computational theory, in the sense that
every element of its minimal transitive model 
is denoted by some of its closed terms. (That model happens to be the second universe 
in Jensen's hierarchy.)  
Then we show that  already this minimal theory
suffices for developing very large portions (if not all) of scientifically  applicable mathematics.
This requires treating the collection of real numbers as a proper class,
that is: a unary predicate which can be introduced in the theory
by the static extension method described in the first part of the paper.
\end{abstract}

\maketitle

\section{Introduction}

Formalized mathematics and mathematical knowledge management (MKM)
are extremely fruitful and quickly expanding fields of research at
the intersection of mathematics and computer science
(see, e.g.,
 \cite{avigad2014formally,campbell2008intelligent,kamareddine2003thirty}).
The declared goal of these fields is to develop computerized systems
that effectively represent all important mathematical knowledge and
techniques, while conforming to the highest standards of mathematical
rigor. At present there is no general agreement
what should be the best framework for this task.
However, since most mathematicians view {\em set theory}
as the basic foundation of mathematics, formalized set theories should
certainly be taken as one of the most natural choice.\footnote{Already in 
\cite{cantone2001set} it was argued that
``a main asset gained from Set theory is the ability to base reasoning
on just a handful of axiom schemes which, in addition to being conceptually
simple (even though surprisingly expressive), lend themselves to good
automated support''. More recently, H. Friedman wrote
(in a message on FOM on  Sep 14, 2015): 
``I envision a large system
and various important weaker subsystems. Since so much math can be
done in systems much weaker than ZFC, this should be reflected in
the choice of Gold Standards. There should be a few major Gold Standards
ranging from Finite Set Theory to full blown ZFC''.}
\footnote{Notable set-based
automated provers are Mizar \cite{rudnicki1992overview},
Metamath \cite{Megill97metamath}, and Referee (aka AetnaNova)~\cite{Omodeo2006,SCO11}.}

In \cite{avron2008framework,avron2010new} a logical framework for
developing and mechanizing set theories was introduced. Its key properties
are that it is based on the usual (type-free) set theoretic language
and makes extensive use of {\em abstract set terms.} Such terms are extensively used
of course in all modern texts in all areas of mathematics
(including set theory itself). Therefore their availability is indispensable 
for the purpose of mechanizing
real mathematical practice and for automated or
interactive theorem proving in set theories.
Accordingly, most of the computerized systems for set theories indeed allow {\em dynamic} ways of
introducing abstract set terms. The great advantage of the framework
of \cite{avron2008framework,avron2010new} is that
it does so in a {\em static} way, so the task
of verifying that a given term or  formula in it is well-formed
is decidable, easily mechanizable, and completely separated from any task
connected with proving theorems (like finding proofs or checking
validity of given ones).
Furthermore, this framework enables the use of different logics and set theories
of different strength.  This modularity of the system has been
exploited in \cite{Cohen2014QED}, where a hierarchy of set theories
for formalizing different levels of mathematics within this framework
was presented. 

The current paper is mainly devoted to one  very basic theory, $RST_{HF}^m$, 
from the above-mentioned hierarchy, and to its 
minimal model. The latter is shown to be the universe $J_2$
in Jensen's hierarchy \cite{Jensen1972fine}. Both $RST_{HF}^m$ and $J_2$
are \emph{computational} (in a precise sense defined below). 
With the help of the formal framework
of \cite{avron2008framework,avron2010new,Cohen2014QED} they can therefore
be used 
to make explicit the potential computational content of set theories
(first suggested and partially demonstrated in \cite{cantone2001set}).
Here we show that they also suffice for
developing large portions of scientifically applicable mathematics \cite{feferman1992little},
especially analysis.\footnote{The thesis that $J_{2}$ is sufficient for core mathematics was first
put forward in \cite{weaver2005analysis}.} In 
\cite{feferman1964systems,feferman1968systems,feferman1992little} it was forcefully 
argued by Feferman that scientifically applicable mathematics, the mathematics that is 
actually indispensable to present-day natural science, can be developed using only 
predicatively acceptable mathematics. We provide here further support to this claim, 
using a much simpler framework and by far weaker theory than those employed by Feferman.

The restriction to a minimal framework has
of course its price. Not all of the standard mathematical structures 
can be treated as elements of $J_{2}$. (The real line is a case in point.) Hence we
have to handle such objects in a different manner. To do this,
 we first enrich the  framework used in 
\cite{avron2008framework,avron2010new,Cohen2014QED} 
with means for coherently extending by definitions
theories in it, without destroying its static nature, 
or violating any of the principles on which it is based. (This step is 
a very important improvement of the framework  on each own right.)
This makes it possible to introduce the collection of real numbers 
in $RST_{HF}^m$ as a {\em proper class},
that is: a legal defined unary predicate to which no closed term 
of $RST_{HF}^m$ corresponds.
(Classes are introduced here into the formal framework
of \cite{avron2008framework,avron2010new,Cohen2014QED}
for the first time.)

\bs

The paper is organized as follows: In 
Section \ref{sec:Preliminaries:-The-First-Order}
we review the formal framework and the way various standard set theoretical
notions have been introduced in in it. We also define in this
section the notions of computational theory and 
universe, and describe the 
computational theories which are minimal within the framework
(as well as the corresponding minimal universes). 
Section \ref{sec:Static-Extensions-by}
is dedicated to the introduction of standard extensions by definitions
of the framework, done in a static way. The notion of a class is then introduced
as a particular case, and is used for handling global relations and functions
in the system. In Section~\ref{sec:The-Natural-Numbers} we introduce the natural
numbers in the system. Unlike in \cite{Cohen2014QED}, this is done here using
an absolute characterization of the property of being a natural number,
and without any appeal to $\in$-induction.
In Section \ref{sec:Real-Analysis} we turn to 
real analysis, and demonstrate how it can be developed in our minimal
computational framework, although the reals are a proper class in
it. This includes the introduction of the real line and  real
functions, as well as formulating and proving classical results
concerning these notions.\footnote{\label{fn:counterpart}A few of the claims in 
Section \ref{sec:Real-Analysis}
have counterparts in \cite{Cohen2014QED}. However, the models used in that paper 
are based on universes which are more extensive than the minimal one
which is studied here. Hence the development and proofs there were much simpler. 
In particular: there was no need in \cite{Cohen2014QED} to use proper classes,
as is essential here. Another, less crucial but still important, difference is that 
unlike in \cite{Cohen2014QED}, the use of $\in$-induction is completely
avoided at the present paper.}
Section \ref{sec:Further-Research} concludes with directions for future 
continuation of the work.

\section{\label{sec:Preliminaries:-The-First-Order}The Formal System and its Minimal Model}
\subsection{\label{sec:The-First-Order}Preliminaries: the Framework and the Main Formal System}

\begin{nota}
To avoid confusion, the parentheses $\setfin{\,}$
are used in our formal languages\emph{,} for constructing
abstract set terms in it, while in the meta-language
we use the ordinary  $\left\{ \,\right\} $.\footnote{To be extremely precise, 
we should have also used different notations in the formal languages
and in the meta-language for $\in$ and $=$, as well
as for many other standard symbols which are used below.
However, for readability we shall not do so, and trust the
reader to deduce the correct use from the context.}
We use the letters $X,Y,Z,...$ for
collections; $\Phi,\Theta$ for finite sets of variables; and $x,y,z,...$
for variables in the formal language. $Fv(exp)$ denotes the set of
free variables of $exp$, and  
$\ensuremath{\varphi\left[\nicefrac{t_{1}}{x_{1}},\ldots,\nicefrac{t_{n}}{x_{n}}\right]}$
denotes the result of simultaneously substituting $t_{i}$ for $x_{i}$
in $\varphi$. When the identity of $t$ and $x$ is clear from the context,
we just write $\varphi(t)$ instead of 
$\ensuremath{\varphi\left[\nicefrac{t}{x}\right]}$.
\end{nota}

One of the foundational questions in set theory is which formulas
should be excluded from defining sets by an abstract term of the form
$\left\{ x\mid\varphi\right\} $ in order to avoid the paradoxes of
naive set theory. 
Various set theories provide different answers to this
question,  which are usually based on \emph{semantical} 
considerations (such as the limitation of size doctrine 
\cite{fraenkel1973foundations,hallett1984cantorian}).
Such an approach is not very  useful for the purpose of mechanization. In this
work we use instead the general \emph{syntactic} methodology of safety
relations developed in \cite{avron2008framework,avron2010new}.
A safety relation is a syntactic relation between formulas and sets
of variables. The addition of a safety relation to a logical system
allows to use in it statically defined abstract
set term of the form $\left\{ x\mid\varphi\right\} $, provided that
$\varphi$ is safe with respect to $\{x\}$. Intuitively, a statement
of the form ``$\varphi$ is \emph{safe} with respect to $\left\{ y_{1},...,y_{k}\right\} $\textquotedblright{},
where $Fv(\varphi)=\left\{ x_{1},...,x_{n},y_{1},...,y_{k}\right\} $,
has the meaning that for every ``accepted\textquotedblright{} sets
$a_{1},...,a_{n}$, the collection $\left\{ \left\langle y_{1},...,y_{k}\right\rangle \mid\varphi(a_{1},...,a_{n},y_{1},...,y_{k})\right\} $
is an ``accepted\textquotedblright{} set, which is constructed from
the previously ``accepted\textquotedblright{} sets
$a_{1},...,a_{n}$ (see discussion below for further details).

\begin{defi}
\label{def: safety}Let $C$ be a finite set of constants. The language
\emph{$\lngc$} and the associated safety relation $\saferst$ are
simultaneously defined as follows: 
\begin{itemize}
\item Terms:
\begin{itemize}
\item Every variable is a term.
\item Every $c\in C$ is a term (taken to be a constant). 
\item If $x$ is a variable and $\varphi$ is a formula such that $\varphi\saferst\left\{ x\right\} $,
then $\setin x{\varphi}$ is a term ($Fv\left(\setin x{\varphi}\right)=Fv\left(\varphi\right)-\left\{ x\right\} $).
\end{itemize}
\item Formulas:
\begin{itemize}
\item If $s,t$ are terms, then $t=s$, $t\in s$ are atomic formulas. 
\item If $\varphi,\psi$ are formulas and $x$ is a variable, then $\neg\varphi,\left(\varphi\wedge\psi\right),\left(\varphi\vee\psi\right)$,
$\exists x\varphi$ are formulas.\footnote{\label{foot:forall}Our official language does not include 
$\forall$ and $\rightarrow$. However, 
since the theory studied in this paper  is based on classical logic,
 we take here  $\forall x_{1}...\forall x_{n}\left(\varphi\rightarrow\psi\right)$
as an abbreviation for $\neg\exists x_{1}...\exists x_{n}\left(\varphi\wedge\neg\psi\right)$.} 
\end{itemize}
\item The safety relation $\saferst$:
\begin{itemize}
\item If $\varphi$ is an atomic formula, then $\varphi\saferst\emptyset$. 
\item If $t$ is a term such that $x\notin Fv\left(t\right)$, and 
$\varphi\in\left\{ x\neq x,x\in t,x=t,t=x\right\} $,
then $\varphi\saferst\left\{ x\right\} $.
\item If $\varphi\saferst\emptyset$, then $\neg\varphi\saferst\emptyset$. 
\item If $\varphi\saferst\Theta$ and $\psi\saferst\Theta$, then $\varphi\vee\psi\saferst\Theta$. 
\item If $\varphi\saferst\Theta$, $\psi\saferst\Phi$ and $\Phi\cap Fv\left(\varphi\right)=\emptyset$
or $\Theta\cap Fv\left(\psi\right)=\emptyset$, then $\varphi\wedge\psi\saferst\Theta\cup\Phi$. 
\item If $\varphi\saferst\Theta$ and $y\in\Theta$, then $\exists y\varphi\saferst\Theta-\left\{ y\right\} $.
\end{itemize}
\end{itemize}
\end{defi}
\noindent \emph{Notation}. We take the usual definition of $\subseteq$
in terms of $\in$, according to which $t\subseteq s\safe\emptyset$.
\begin{defi}
\label{def:system} An $RST$-theory \footnote{`RST' stands for Rudimentary
Set Theory. See Theorem \ref{thm:Arnon-1-1}  below.}
is a classical
first-order system with variable binding term operator (\cite{corcoran1972variable}),  
in a language of the form \emph{$\lngc$}, which includes
the following axioms:
\begin{itemize} 
\item Extensionality:~~~ $\forall z\left(z\in x\leftrightarrow z\in y\right)\imp x=y$ 
\item Comprehension Schema:~~~ $\forall x\left(x\in\setin x{\varphi}\leftrightarrow\varphi\right)$ 
\end{itemize} 
\end{defi}

\begin{lem}
\label{lem:example}\cite{avron2010new} 
The following notations are
available  (i.e. they can be introduced as abbreviations
and their basic properties are provable) in every $RST$-theory:
\begin{itemize}
\item $\emptyset:=\setin x{x\neq x}$.
\item $\setfin{t_{1},...,t_{n}}:=\setin x{x=t_{1}\vee...\vee x=t_{n}}$,
where $x$ is fresh. 
\item $\left\langle s,t\right\rangle :=\setfin{\setfin s,\setfin{s,t}}$.
$\left\langle t_{1},...,t_{n}\right\rangle :=\left\langle \left\langle t_{1},...,t_{n-1}\right\rangle ,t_{n}\right\rangle $.
\item $\pi_{1}\left(t\right):=\setin x{\exists y.t=\left\langle x,y\right\rangle },\,\pi_{2}\left(t\right):=\setin y{\exists x.t=\left\langle x,y\right\rangle }$.
\item $\setin{x\in t}{\varphi}:=\setin x{x\in t\wedge\varphi}$, provided
$\varphi\saferst\emptyset$ and $x\notin Fv\left(t\right)$. 
\item $\setin t{x\in s}:=\setin y{\exists x.x\in s\wedge y=t}$, where $y$
is fresh and $x\notin Fv\left(s\right)$. 
\item $s\times t:=\setin x{\exists a\exists b.a\in s\wedge b\in t\wedge x=\left\langle a,b\right\rangle }$,
where $x,a,b$ are fresh.
\item $s\cup t:=\setin x{x\in s\vee x\in t}$, where $x$ is fresh.
\item $s\cap t:=\setin x{x\in s\wedge x\in t}$, where $x$ is fresh. 
\item $\cup t:=\setin x{\exists y\in t.x\in y}$, where $x,y$ are fresh.
\item $\cap t:=\setin x{x\in\cup t\wedge\forall y\in t.x\in y}$, where
$x,y$ are fresh.
\item $\iota x.\varphi:=\bigcup\setin x{\varphi}$, provided $\varphi\saferst\left\{ x\right\} $.\footnote{Due to the Extensionality Axiom, if $\varphi\saferst\left\{ x\right\} $,
then the term above for $\iota x.\varphi$ denotes $\emptyset$ if
there is no set which satisfies $\varphi$, and it denotes the union
of all the sets which satisfy $\varphi$ otherwise. In particular:
this term has the property that if there is exactly one set which
satisfies $\varphi$, then $\iota x.\varphi$ denotes this unique
set since $\cup\left\{ a\right\} =a$. Note that the definition of
$\iota x.\varphi$ taken here is simpler than the definition used
in \cite{avron2010new}, which was $\cap\setin x{\varphi}$ (where
some caution was taken so that the term is always well defined). }
\item $\lambda x\in s.t:=\setin y{\exists x.x\in s\wedge  y=\left\langle x,t\right\rangle }$, provided $x\notin Fv\left(s\right)$.
\item $Dom\left(t\right):=\setin x{\exists z\exists v\exists y.z\in t\wedge v\in z\wedge y\in v\wedge x\in v\wedge z=\left\langle x,y\right\rangle }$,
($z,v,x, y$ fresh). 
\item $Im\left(t\right):=\setin y{\exists z\exists v\exists x.z\in t\wedge v\in z\wedge y\in v\wedge x\in v\wedge z=\left\langle x,y\right\rangle }$,
($z,v,x, y$ fresh).
\end{itemize}
\end{lem}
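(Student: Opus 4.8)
The plan is to treat the statement as a list of routine verifications, each with two parts: (a) that the displayed abstract set term is a legal term of $\lngc$, i.e.\ that whenever a subterm $\setin{x}{\varphi}$ occurs its defining formula satisfies $\varphi\saferst\{x\}$; and (b) that the stated basic properties are provable. Once (a) is settled, (b) is immediate: the membership characterizations (e.g.\ $\forall x(x\in s\cup t\leftrightarrow x\in s\vee x\in t)$, $x\in Dom(t)\leftrightarrow\exists y.\,\left\langle x,y\right\rangle\in t$, and so on) are direct instances of the Comprehension Schema once the abbreviations are unfolded, while the purely equational facts --- Kuratowski pairing $\left\langle s,t\right\rangle=\left\langle s',t'\right\rangle\leftrightarrow(s=s'\wedge t=t')$, $\pi_1\left(\left\langle a,b\right\rangle\right)=a$ and $\pi_2\left(\left\langle a,b\right\rangle\right)=b$, the identity $\cup\setfin{a}=a$ that underlies the $\iota$ clause (cf.\ its footnote), and the defining equation of $\lambda x\in s.t$ at arguments of $s$ --- follow from Comprehension together with the Extensionality Axiom by the standard arguments. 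I would stress that nothing beyond Extensionality and Comprehension is used (in particular, no $\in$-induction), which is exactly what makes the lemma hold in \emph{every} $RST$-theory.

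So the real content is (a), and the approach is to exhibit, for each notation, an explicit safety derivation. The toolkit is small: an atomic formula is safe for $\emptyset$; a formula $x\in t$, $x=t$, $t=x$ or $x\neq x$ with $x\notin Fv(t)$ is safe for $\{x\}$ (``anchoring''); $\vee$ preserves the safety set; for $\varphi\wedge\psi$ the conjunction rule needs the safety set of one conjunct to be disjoint from the free variables of the other, so in practice one reads one conjunct by the anchoring rule and the rest by the atomic rule $\saferst\emptyset$; and $\exists y$ removes $y$ once it is in the safety set. The freshness conditions attached to the notations are precisely what makes the anchoring steps and the conjunction side-conditions go through, and one checks along the way that no variable is captured. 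Each clause is then short: $x=t_1\vee\dots\vee x=t_n\saferst\{x\}$ for $\setfin{t_1,\dots,t_n}$; $x\in s\wedge x\in t\saferst\{x\}$ for $s\cap t$, reading the second conjunct atomically; $\exists x(x\in s\wedge y=t)\saferst\{y\}$ for $\setin{t}{x\in s}$, and likewise for $\lambda x\in s.t$, both succeeding because the fresh bound variable sits bare on one side of the equation; $\exists y(y\in t\wedge x\in y)\saferst\{x\}$ for $\cup t$; $\cap t$ via the $\neg\exists$ rendering of $\forall$; and $\exists a\exists b(a\in s\wedge b\in t\wedge x=\left\langle a,b\right\rangle)\saferst\{x\}$ for $s\times t$. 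The terms $\left\langle s,t\right\rangle$, $\iota x.\varphi$, $Dom$ and $Im$ are then mere compositions of constructs already justified.

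The one place where I expect real (though still routine) care is the projections. The formula $\exists y.\,t=\left\langle x,y\right\rangle$ is \emph{not} directly safe for $\{x\}$: the equation $t=\left\langle x,y\right\rangle$ is atomic and hence only safe for $\emptyset$, so the existential rule cannot reach $\{x\}$. The term therefore has to be justified through the provably equivalent form in which the equation is conjoined with membership clauses anchoring the bound variables --- exactly as is done explicitly above for $Dom(t)$ and $Im(t)$ --- for instance $\pi_1(t):=\setin{x}{\exists y\exists z\left(z\in t\wedge x\in z\wedge y\in z\wedge t=\left\langle x,y\right\rangle\right)}$. Since $\left\langle x,y\right\rangle=\setfin{\setfin{x},\setfin{x,y}}$, whenever $t=\left\langle x,y\right\rangle$ one has $x,y\in\setfin{x,y}\in t$, so the anchoring conjuncts are provably redundant and the anchored term provably equals the displayed one; with that observation, (a) and (b) for $\pi_1$ and $\pi_2$ reduce to the earlier cases plus the Kuratowski pairing law. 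Everything else is bookkeeping.
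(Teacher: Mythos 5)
Your proof is correct and is essentially the intended argument: the paper gives no proof of this lemma (it is imported from the cited earlier work), and the content is exactly the routine safety-derivation check plus Comprehension/Extensionality that you carry out, including the observation that for each conjunction exactly one conjunct must be read via the anchoring clause and the other via the atomic clause. You are also right to flag that $\exists y.\,t=\langle x,y\rangle$ is not literally derivably safe for $\{x\}$; the paper handles this via Lemma~\ref{prop:pairs} (the formulas $t\check{=}\langle x,y\rangle$) together with its stated convention of writing $\setin x{\psi}$ for $\setin x{\varphi}$ when $\varphi\leftrightarrow\psi$ is provable, which is precisely the membership-anchoring device you reconstruct for $\pi_{1}$ and $\pi_{2}$.
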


\begin{lem}
\label{prop:pairs}\cite{avron2008framework}
There are formulas, $t\check{=}\left\langle r,s\right\rangle $
and $\left\langle r,s\right\rangle \check{\in}t$ in $\lng$ such that:
\begin{enumerate}
\item $t\check{=}\left\langle x,s\right\rangle \saferst\left\{ x\right\} $,
$t\check{=}\left\langle s,x\right\rangle \saferst\left\{ x\right\} $
and $t\check{=}\left\langle x,y\right\rangle \saferst\left\{ x,y\right\} $
for $x,y\notin Fv\left(t\right)$.
\item $\left\langle x,s\right\rangle \check{\in}t\saferst\left\{ x\right\} $,
$\left\langle s,x\right\rangle \check{\in}t\saferst\left\{ x\right\} $
and $\left\langle x,y\right\rangle \check{\in}t\saferst\left\{ x,y\right\} $
for $x,y\notin Fv\left(t\right)$.
\item $r=\left\langle s,t\right\rangle \leftrightarrow r\check{=}\left\langle s,t\right\rangle $
is provable in every $RST$-theory.
\end{enumerate}
\end{lem}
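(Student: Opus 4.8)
The plan is to exhibit explicit formulas $t\check{=}\langle r,s\rangle$ and $\langle r,s\rangle\check{\in}t$ of $\lng$ whose truth conditions agree with those of $t=\langle r,s\rangle$ and of $\langle r,s\rangle\in t$ (which gives clause~3), but which are \emph{syntactically} arranged so that the rules of Definition~\ref{def: safety} actually derive the required judgements (clauses~1 and~2). The only atoms from which those rules can place a variable $x$ into the safe set are $x\neq x$, $x\in u$, $x=u$, $u=x$ with $x\notin Fv(u)$; in particular the naive formula $t=\langle r,s\rangle$, which is literally the atom $t=\setfin{\setfin r,\setfin{r,s}}$, is safe only w.r.t.\ $\emptyset$. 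So the idea is to keep \emph{all} of the set-theoretic content in that single harmless atom and to adjoin two ``scaffolding'' conjuncts, provably implied by it, whose only job is to license the promotion of the distinguished variable(s). Concretely I would set
\[
t\check{=}\langle r,s\rangle\ :=\ r\in\bigcap t\ \wedge\ s\in\bigcup t\ \wedge\ t=\langle r,s\rangle ,
\]
\[
\langle r,s\rangle\check{\in}t\ :=\ \exists p\bigl(p\check{=}\langle r,s\rangle\ \wedge\ p\in t\bigr),\qquad p\ \text{fresh},
\]
where $\bigcup$ and $\bigcap$ are the total defined operations of Lemma~\ref{lem:example}; note that $Fv(\bigcup t)=Fv(\bigcap t)=Fv(t)$, and that neither these nor $\langle\cdot,\cdot\rangle$ depend on the present lemma, so there is no circularity.

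For clause~3 I would first prove, in an arbitrary $RST$-theory and using only Comprehension and Extensionality, the identities $\bigcup\langle a,b\rangle=\setfin{a,b}$ and $\bigcap\langle a,b\rangle=\setfin a$ (a routine unfolding of the definitions, valid uniformly whether or not $a=b$). Hence $a\in\bigcap\langle a,b\rangle$ and $b\in\bigcup\langle a,b\rangle$, so if $t=\langle r,s\rangle$ then all three conjuncts of $t\check{=}\langle r,s\rangle$ hold; the converse is trivial since the third conjunct is literally $t=\langle r,s\rangle$. This gives $t=\langle r,s\rangle\leftrightarrow t\check{=}\langle r,s\rangle$, and then $\langle r,s\rangle\in t\leftrightarrow\langle r,s\rangle\check{\in}t$ follows from the logical triviality $\exists p(p=a\wedge p\in t)\leftrightarrow a\in t$.

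For clauses~1 and~2 I would run Definition~\ref{def: safety} on each of the six required instances. The basic facts are that $x\in\bigcap t$ and $x\in\bigcup t$ are safe w.r.t.\ $\{x\}$ whenever $x\notin Fv(t)$, while $s\in\bigcup t$, $s\in\bigcap t$, $t=\langle r,s\rangle$ and $p=\langle r,s\rangle$ are safe w.r.t.\ $\emptyset$; the conjunction and existential rules then propagate these. The only point that requires care is the side condition of the $\wedge$-rule, $\Phi\cap Fv(\varphi)=\emptyset$ or $\Theta\cap Fv(\psi)=\emptyset$, which in particular forbids promoting the same variable twice. In $t\check{=}\langle x,y\rangle$ the conjunct $x\in\bigcap t$ promotes $x$ and $y\in\bigcup t$ then promotes $y$ (admissible since $y\notin\{x\}\cup Fv(t)$); in $t\check{=}\langle s,x\rangle$ the left conjunct $s\in\bigcap t$ contributes $\Theta=\emptyset$, so the promotion of $x$ through $x\in\bigcup t$ succeeds even if $x\in Fv(s)$; and in $\langle r,s\rangle\check{\in}t$ placing $p\check{=}\langle r,s\rangle$ \emph{before} $p\in t$ makes the distinguished variable enter $\Theta$ while $p$ is still outside it, whereupon the leading $\exists p$ discharges $p$ (as $p\notin Fv(t)$). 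The only hypotheses used are those of the statement.

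I expect the difficulty to be entirely clerical rather than conceptual: once the formulas are fixed as above, with all content concentrated in the single atom $t=\langle r,s\rangle$ and everything else pure scaffolding, each of the six safety derivations is a short mechanical descent through Definition~\ref{def: safety}, the only subtlety being the order of the conjuncts so that the $\wedge$-rule's side condition holds at every step and no variable is promoted twice.
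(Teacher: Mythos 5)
Your construction is correct: every safety derivation you sketch goes through under Definition~\ref{def: safety} (in each conjunction the side condition holds because the non-promoting conjunct is safe w.r.t.\ $\emptyset$, or because the promoted variables are disjoint from $\{p\}\cup Fv(t)$), and clause~3 follows since $\bigcap\langle r,s\rangle=\setfin r$ and $\bigcup\langle r,s\rangle=\setfin{r,s}$ are provable from Comprehension and Extensionality. The paper itself only cites this lemma from \cite{avron2008framework} without proof, and your idea --- keep all content in the absolute atom $t=\langle r,s\rangle$ and adjoin redundant bounding conjuncts $r\in\bigcap t$, $s\in\bigcup t$ solely to license the promotion of the distinguished variables --- is essentially the construction used there (one could equally bound both components by $\bigcup t$).
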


\begin{defi}
\label{def: RST-HF}
\ 
\be
\item $RST^m$ is the minimal $RST$-theory. In other words: $RST^m$ is the theory
in $\mathcal{L}_{RST}^{\emptyset}$ whose axioms are those given in 
Definition~\ref{def:system}.
\footnote{ $RST^m$ can be shown to be equivalent to 
Gandy\textquoteright s basic set theory \cite{gandy1974set}.}
\item $RST_{HF}^m$ is the $RST$-theory in $\mathcal{L}_{RST}^{\{HF\}}$ in which 
the following axioms are added to those given in
Definition~\ref{def:system}:

\begin{itemize}
\item $\emptyset\in HF$ 
\item $\forall x\forall y\left(x\in HF\wedge y\in HF\rightarrow x\cup\setfin y\in HF\right)$
\item $\forall y\left(\emptyset\in y\wedge\forall v,w\in y.v\cup\setfin w\in y\rightarrow HF\subseteq y\right)$
\end{itemize}
\ee
\end{defi}


\noindent
{\em Discussion.}
\begin{itemize}
\item
In \cite{avron2008framework} it was suggested that the computationally
meaningful instances of the Comprehension Axiom are those which determine
the collections they define in an absolute way, independently of any
\textquotedblleft surrounding universe\textquotedblright . 
In the context of set theory, a formula $\varphi$ is ``computable''
w.r.t. $x$ if the collection $\setout x{\varphi\left(x,y_{1},...,y_{n}\right)}$
is completely and uniquely determined by the identity of the parameters
$y_{1},...,y_{n}$, and the identity of other objects referred to
in the formula (all of which are well-determined beforehand). Note
that $\varphi$ is computable for $\emptyset$ iff it is absolute
in the usual sense of set theory. In order to translate this idea
into an exact, \emph{syntactic} definition, the safety relation is
used. Thus, in an $RST$-theory
only those formulas which are safe with respect to $\left\{ x\right\} $
are allowed in the Comprehension Scheme.
It is not difficult to see  that the safety relation $\saferst$ used in
an $RST$-theory indeed possesses
the above property.\footnote{Recently it was  shown \cite{ALL18}  that up to logical equivalence,
and as long as we restrict ourselves to the basic first-order language, 
the converse holds as well. 
It is not known yet whether this is true also in the presence of abstract set terms.}
Thus the formula $x\in y$ should be safe
w.r.t. $\left\{ x\right\} $ (but not w.r.t. $\left\{ y\right\} $),
since if the identity of $y$ is computationally acceptable as a set,
then any of its elements must be previously accepted as a set, and
$\left\{ x\,|\,x\in y\right\} =y$. Another example is given by
the clause for negation. The intuitive meaning of $\left\{ x\,|\,\neg\varphi\right\} $
is the complement (with respect to some universe) of $\left\{ x\,|\,\varphi\right\} $,
which is not in general computationally accepted. However, if $\varphi$
is absolute, then so is its negation. 

\item
$RST^m$ and $RST_{HF}^m$ differ from the systems $RST$ and $RST_{HF}$
used in \cite{Cohen2014QED} with respect to the use of $\in$-induction. 
In principle, $\in$-induction does not seem to
be in any conflict with the notion of a computational theory, since
 it only imposes further restrictions on the collection of
acceptable sets.  Accordingly, it was indeed 
adopted and used in \cite{avron2008framework,avron2010new,Cohen2014QED}.
 Nevertheless, in order not to impose unnecessary constraints
on our general framework, and in particular to allow to develop in it
set theories which adopt  the anti-foundation
axiom AFA, $\in$-induction is not included in $RST^m$ and $RST_{HF}^m$.

\item
It is not difficult to prove that 
$\mathcal{HF}$, the set of all hereditary finite sets,
is a model of $RST^m$. (In fact, it is the minimal one.)
It follows that the set $\mathbb{N}$ of the natural numbers
is not definable as a set in $RST^m$. To solve this problem,
the special constant $HF$ was added in $RST_{HF}^m$, together with appropriate
axioms. (These axioms  replace in $RST_{HF}^m$ the usual infinity axiom
of $ZF$.) The intended interpretation of the new constant $HF$ 
is $\mathcal{HF}$, and the axioms for it 
ensure (as far as it is possible on
the first-order level) that $HF$ is indeed to be interpreted as this
collection. In particular, we have:
\end{itemize}

\begin{lem} \cite{Cohen2014QED}
\label{lem:HF-1}The followings are provable in $RST_{HF}^m$: 
\begin{enumerate}
\item $x\in HF\leftrightarrow x=\emptyset\vee\exists u,v\in HF.u\cup\setfin v=x$. 
\item  $\left(\ensuremath{\psi\left[\nicefrac{\emptyset}{x}\right]}
\wedge\forall x\forall y \left(\psi
\wedge\ensuremath{\psi\left[\nicefrac{y}{x}\right]}\imp
\ensuremath{\psi\left[\nicefrac{x\cup\;\setfin y}{x}\right]}
\right)\right)\rightarrow\forall x\in HF.\psi$,
for $\psi\safe\emptyset$. 
\item $\ensuremath{\psi\left[\nicefrac{HF}{a}\right]}
 \wedge\forall a\left(\psi\rightarrow HF\subseteq a\right)$,
for $\psi:=\forall x\left(x\in a\leftrightarrow x=\emptyset\vee\exists u,v\in a.u\cup\setfin v=x\right)$.
\end{enumerate}
\end{lem}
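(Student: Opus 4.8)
The plan is to read all three clauses directly off the axioms of $RST_{HF}^m$, the crucial one being the third $HF$-axiom, which serves as an induction principle for $HF$; no use of $\in$-induction will be needed.

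For clause (1), the implication from right to left is immediate from the first two $HF$-axioms: if $x=\emptyset$ then $x\in HF$, and if $x=u\cup\setfin v$ with $u,v\in HF$ then $x\in HF$ by closure. For the converse I would instantiate the third $HF$-axiom with the term $y:=\setin x{x=\emptyset\vee\exists u,v\in HF.\,u\cup\setfin v=x}$. First one must check that $y$ is a legal term, i.e. that its defining formula is $\saferst\{x\}$; here it matters that $HF$ is a constant, so that $u\in HF\saferst\{u\}$ and $v\in HF\saferst\{v\}$, whereas $u\cup\setfin v=x\saferst\{x\}$ since $x\notin Fv(u\cup\setfin v)$, and a routine bottom-up application of the clauses for $\wedge$, $\exists$ and $\vee$ then delivers $\saferst\{x\}$. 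Given $y$, the hypotheses of the third axiom are verified as follows: $\emptyset\in y$ since $\emptyset=\emptyset$; and if $v,w\in y$ then, by Comprehension together with the right-to-left direction just proved, $v,w\in HF$, hence $v\cup\setfin w\in HF$ by the closure axiom, hence $v\cup\setfin w\in y$ by taking $u:=v$ in the defining disjunction. The axiom then yields $HF\subseteq y$, which unwound by Comprehension is precisely the left-to-right implication.

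Clause (2) is the same argument, now exploiting the side condition $\psi\saferst\emptyset$. Assuming the antecedent, the term $z:=\setin{x\in HF}{\psi}$ --- that is, $\setin x{x\in HF\wedge\psi}$, which is legal by Lemma~\ref{lem:example} since $\psi\saferst\emptyset$ and $x\notin Fv(HF)$ --- is well-formed. I apply the third $HF$-axiom to $z$: $\emptyset\in z$ follows from $\emptyset\in HF$ and $\psi\left[\nicefrac{\emptyset}{x}\right]$; and if $v,w\in z$, so that $v,w\in HF$ with $\psi\left[\nicefrac{v}{x}\right]$ and $\psi\left[\nicefrac{w}{x}\right]$, then $v\cup\setfin w\in HF$ by closure, while instantiating the inductive-step hypothesis at $x:=v,\,y:=w$ (renaming bound variables as needed) gives $\psi\left[\nicefrac{v\cup\;\setfin w}{x}\right]$, so $v\cup\setfin w\in z$. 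Hence $HF\subseteq z$, i.e. $\forall x(x\in HF\imp x\in HF\wedge\psi)$, which is $\forall x\in HF.\,\psi$.

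For clause (3), let $\psi$ be the displayed formula. Then $\psi\left[\nicefrac{HF}{a}\right]$ is literally (the universal closure of) clause (1), so the first conjunct is already done. For the second, suppose $\psi$ holds of $a$. Then the right-hand side of its defining equivalence holds at $\emptyset$ (via $\emptyset=\emptyset$), so $\emptyset\in a$; and whenever $v,w\in a$ it holds at $v\cup\setfin w$ (via the existential witness $u:=v$ together with $w$), so $v\cup\setfin w\in a$. Thus $a$ satisfies the hypotheses of the third $HF$-axiom, whence $HF\subseteq a$. The only point requiring genuine care is the syntactic verification that the two abstract set terms above are well-formed --- a mechanical check with the safety clauses that uses, respectively, that $HF$ is a constant and that $\psi\saferst\emptyset$ --- together with the usual freshness bookkeeping when instantiating the inductive-step hypothesis in (2); there is no conceptual obstacle.
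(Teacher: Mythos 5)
The paper does not prove this lemma itself --- it imports it from \cite{Cohen2014QED} --- so there is no in-paper argument to compare against. Your derivation directly from the three $HF$-axioms is correct and is the expected one: right-to-left of (1) from the first two axioms; left-to-right of (1) and clause (2) by instantiating the third axiom at the abstract set terms $\setin x{x=\emptyset\vee\exists u,v\in HF.\,u\cup\setfin v=x}$ and $\setin x{x\in HF\wedge\psi}$ respectively (and your safety checks for these terms, using that $HF$ is a constant and that $\psi\saferst\emptyset$, are exactly what legitimizes them); and clause (3) by instantiating the third axiom at $a$ itself, with its first conjunct being just the universal closure of clause (1).
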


\newpage
\begin{rems}\leavevmode
\be
\item An important feature of $RST$-theories is that their two axioms
directly lead (and are equivalent) to the \emph{set-theoretical }$\beta$
and $\eta$ reduction rules (see \cite{avron2008framework}). 

\item
While the formal language allows the use of set terms, it also provides
a mechanizable static check of their validity due to the syntactic
safety relation. To obtain decidable syntax, 
logically equivalent formulas are not taken to be safe w.r.t. the
same set of variables. However, if $\varphi\leftrightarrow\psi$ is provable in 
some $RST$-theory, then so is $x\in\setin x{\varphi}\leftrightarrow\psi$. 
Therefore for such $\varphi,\psi$ we might freely write in what 
follows $\setin x{\psi}$ instead of $\setin x{\varphi}$.\footnote{Further discussion on  decidability issues for safety-based languages can be found in \cite{avron2008constructibility}.} 

\item
It is easy to verify that the system $RST_{HF}^m$ is a proper
subsystem of $ZF$. While the latter is not an $RST$-theory,
in \cite{avron2008framework}
it was shown that it can be obtained from the former by
adding the following clauses to the definition of its safety relation: 
\begin{itemize}
\item {\em Separation:}
$\varphi\succ\emptyset$ for {every} formula $\varphi$.
\item {\em Powerset:}
$x\subseteq t\succ\{x\}$ if $x\not\in Fv(t)$.
\item {\em Replacement:}
 $\exists y \varphi\wedge\forall y(\varphi\rightarrow \psi)\succ X$, 
provided $\psi\succ X$,  and $X\cap Fv(\varphi)=\emptyset$.
\end{itemize}

\item
Unlike in this paper, in general
the framework for set theories just reviewed is not confined 
to the first-order level or to classical logic. Thus in
\cite{avron2010new} it was used together with ancestral logic 
(\cite{Martin43,Myhill52,shapiro1991foundations,AvronTC03,cohen2015middle}).
(This involves adding a special clause to the definition of
$\succ$ that treats the operation of transitive closure.)
Intuitionistic versions have been investigated too. 
\item A safety relation like $\succ$ presents a difficult challenge for mechanized
logical frameworks of the Edinburgh LF's type (\cite{HHP93}). First,
it is a strictly   syntactic {\em relation} between formulas and variables, whose
direct implementation requires the use of meta-variables for the variables of 
the object language --- something which is particularly difficult
to handle in this type of logical frameworks (\cite{AHMP}).  Second,
$\succ$ does not have a fixed arity like all judgements in 
the Edinburgh LF do: it is actually a relation
between formulas and {\em finite sets} of object-level  variables. Therefore
it seems that current logical frameworks should be significantly extended and refined in
order to be able to handle the syntactic framework for set theories
that was proposed in  \cite{avron2008framework} (and is used here).
\ee
\end{rems}

\subsection{The Minimal Model}

We next recall the definition of rudimentary functions (for more on
this topic see \cite{devlin1984constructibility,hrbacek1999introduction}).\footnote{To be precise, the definition we take here is given in The Basis Lemma
in \cite{devlin1984constructibility}. It was shown there that this
definition is equivalent to the standard definition of rudimentary
functions.} Rudimentary functions are just the functions obtained by omitting
the recursion schema from the usual list of schemata for primitive
recursive set functions.
\begin{defi}
\label{Def:rudimentary-function}Every {\em rudimentary function}
 is a composition
of the following functions:

\begin{itemize}
\item $F_{0}\left(x,y\right)=\left\{ x,y\right\} $
\item $F_{1}\left(x,y\right)=x-y$
\item $F_{2}\left(x,y\right)=x\times y$
\item $F_{3}\left(x,y\right)=\left\{ \left\langle u,z,v\right\rangle \,|\,z\in x\wedge\left\langle u,v\right\rangle \in y\right\} $
\item $F_{4}\left(x,y\right)=\left\{ \left\langle z,v,u\right\rangle \,|\,z\in x\wedge\left\langle u,v\right\rangle \in y\right\} $
\item $F_{5}\left(x,y\right)=\left\{ Im\left(x|_{z}\right)\,|\,z\in y\right\} $
where $Im\left(x|_{z}\right)=\left\{ w\,|\,\exists u\in z.\left\langle u,w\right\rangle \in x\right\} $
\item $F_{6}\left(x\right)=\underset{z\in x}{\bigcup}z$
\item $F_{7}\left(x\right)=Dom\left(x\right)=\left\{ v\,|\,\exists w.\left\langle v,w\right\rangle \in x\right\} $
\item $F_{8}\left(x\right)=\left\{ \left\langle u,v\right\rangle \,|\,u\in x\wedge v\in x\wedge u\in v\right\} $
\end{itemize}
\end{defi}
\medskip

\begin{defi}
~
\begin{enumerate}
\item A function is called {\em $HF$-rudimentary} if it can be generated by composition
of the functions $F_{0},...,F_{8}$ in Definition \ref{Def:rudimentary-function},
and the following constant function:

\begin{itemize}
\item $F_{9}\left(x\right)=\mathcal{HF}$ (the set of 
hereditary finite sets).
\end{itemize}
\item An {\em $HF$-universe}  ({\em universe} in short)
 is a transitive collection of sets that is closed under
$HF$-rudimentary functions.
\end{enumerate}
\end{defi}
\noindent \emph{Terminology.} In what follows, we do not distinguish
between a universe $W$ and the structure for $\lng$ with domain
$W$ and an interpretation function $I$ that assigns the obvious
interpretations to the symbols $\in$, $=$, and $\mathcal{HF}$ to $HF$.
\begin{nota}
We denote by $v\left[x:=a\right]$ the $x$-variant of $v$ which
assigns $a$ to $x$. If $\vec{y},\overrightarrow{a}$ are two vectors
of the same length we abbreviate $v\left[y_{1}:=a_{1},...,y_{n}:=a_{n}\right]$
by $v\left[\vec{y}:=\overrightarrow{a}\right]$. We denote by $\left[x_{1}:=a_{1},...,x_{n}:=a_{n}\right]$
any assignment which assigns to each $x_{i}$ the element $a_{i}$.\footnote{As long as we apply $\left[x_{1}:=a_{1},...,x_{n}:=a_{n}\right]$
to expressions whose set of free variables is contained in $\left\{ x_{1},...,x_{n}\right\} $
the exact assignment does not matter.}
\end{nota}

\begin{defi}
\label{Norm}Let $W$ be a universe, $v$ an assignment in $W$.
For any term $t$ and formula $\varphi$ of $\lng$, we recursively
define a collection $\inwv t$ and a truth value 
$\inwv{\varphi}\in\left\{ \mathbf{t},{\bf f}\right\} $
(respectively) by:

\begin{itemize}
\item $\inwv x=v\left(x\right)$ for $x$ a variable.
\item $\inwv{HF}=\mathcal{HF}$
\item $\inwv{\setin x\varphi}=\setout{a\in W}{\inwa{\varphi}^{W}={\bf t}}$
\item $\inw v{t=s}={\bf t}$ iff $\inwv t=\inwv s$ ~;~ $\inwv{t\in s}={\bf t}$
iff $\inwv t\in\inwv s$
\item $\inwv{\neg\varphi}={\bf t}$ iff $\inwv{\varphi}={\bf f}$
\item $\inwv{\varphi\wedge\psi}={\bf t}$ iff $\inwv{\varphi}={\bf t}\wedge\inwv{\psi}={\bf t}$
\item $\inwv{\varphi\vee\psi}={\bf t}$ iff $\inwv{\varphi}={\bf t}\vee\inwv{\psi}={\bf t}$
\item $\inwv{\exists x\varphi}={\bf t}$ iff $\exists a\left(a\in W\wedge\inwa{\varphi}^{W}={\bf t}\right)$
\end{itemize}
Given $W$ and $v$, we say that the term $t$ {\em defines} the collection $\inwv t$.
\end{defi}

\begin{rem}
From Theorem \ref{prop:termdefinesset} below it follows that $\inwv t$
is an element of $W$ (and it denotes the value in $W$ that the term
$t$ gets under $v$), and $\inwv{\varphi}$ denotes the truth value
of the formula $\varphi$ under $W$ and $v$.
\end{rem}
\begin{nota}
\noindent In case $exp$ is a closed term or a closed formula, we
denote by $\left\Vert exp\right\Vert ^{W}$ the value of $exp$ in
$W$, and at times we omit the superscript $W$ and simply write $\left\Vert exp\right\Vert $.
\end{nota}

\noindent The following theorem is a slight generalization of a theorem
proved in \cite{avron2010new}.

\begin{thm}
\label{thm:Arnon-1-1}~
\begin{enumerate}
\item If $F$ is an $n$-ary $HF$-rudimentary function, then there is
a formula $\varphi_{F}$ of $\mathcal{L}_{RST}^{\{HF\}}$ s.t.: 

\begin{itemize}
\item $Fv\left(\varphi_{F}\right)\subseteq\left\{ y,x_{1},...,x_{n}\right\} $
\item $\varphi_{F}\safe\left\{ y\right\} $
\item $F\left(x_{1},...,x_{n}\right)=\left\{ y\mid\varphi_{F}\right\} $
\end{itemize}
\item If $\varphi$ is a formula of $\mathcal{L}_{RST}^{\{HF\}}$ such that:

\begin{itemize}
\item $Fv\left(\varphi\right)\subseteq\left\{ y_{1},...,y_{k},x_{1},...,x_{n}\right\} $
\item $\varphi\safe\left\{ y_{1},...,y_{k}\right\} $
\end{itemize}
then there exists a $HF$-rudimentary function $F_{\varphi}$ such
that:
\[
F_{\varphi}\left(x_{1},...,x_{n}\right)=\left\{ \left\langle y_{1},...,y_{k}\right\rangle \mid\varphi\right\} 
\]
\item If $t$ is a term of $\mathcal{L}_{RST}^{\{HF\}}$ such that $Fv\left(t\right)\subseteq\left\{ x_{1},...,x_{n}\right\} $,
then there exists a $HF$-rudimentary function $F_{t}$ such that
$F_{t}\left(x_{1},...,x_{n}\right)=t$ for every $x_{1},...,x_{n}$.
\end{enumerate}
\end{thm}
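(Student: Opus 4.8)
The plan is to prove the three parts together by simultaneous induction, since the language $\lngc$ (here with $C=\{HF\}$) is built by a mutual recursion between terms, formulas, and the safety relation. The overall strategy is to show that the class of $HF$-rudimentary functions is closed under exactly the operations that the syntax of $\lng$ and the clauses of $\succ$ permit, so that the syntactic constructions are mirrored one-for-one by compositions of $F_0,\dots,F_9$.

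For part (1), I would first recall (or verify) the standard closure properties of rudimentary functions: they are closed under composition by definition, and from the Basis Lemma functions $F_0,\dots,F_8$ one derives the usual toolkit --- the functions $x\mapsto\{x\}$, $\langle x,y\rangle$, projection, $\bigcup$, $\mathrm{Dom}$, $\mathrm{Im}$, relational composition and "shuffling" of tuples, the characteristic-like function for $\in$ (via $F_8$), and crucially that rudimentary functions are closed under \emph{rudimentary substitution/bounded quantification}: if $G$ is rudimentary then so is $(x_1,\dots,x_n,z)\mapsto\{G(x_1,\dots,x_n,w)\mid w\in z\}$ and, using $F_5$-style image constructions together with $F_1$ (set difference) to encode complements relative to a bounding set, one can form $\{w\in z \mid R(w,\vec x)\}$ whenever $R$ is decided by a rudimentary function with values among (say) $\emptyset$ and $\{\emptyset\}$. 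Adding $F_9$ gives the constant $\mathcal{HF}$, which is what lets every existential quantifier appearing in a safe formula be bounded: a key sub-lemma (essentially the absoluteness built into $\succ$) is that if $\varphi\succ\Theta$ with $y\in\Theta$, the witness for $y$ can be found inside a set rudimentarily computable from the parameters, so $\exists y\varphi$ is handled by a bounded search, and if $y$ ranges over all of $\mathcal{HF}$ that set is supplied by $F_9$. Parts (1), (2), (3) are then proved by a single induction on the build-up of $\lngc$: atomic formulas $t=s$, $t\in s$ are handled using $F_8$ and the induction hypothesis (3) for the subterms; $\neg,\vee,\wedge,\exists$ are handled by the derived closure properties just listed, where the side conditions in the $\succ$ clauses ($\Phi\cap Fv(\varphi)=\emptyset$ or $\Theta\cap Fv(\psi)=\emptyset$ for $\wedge$; $y\in\Theta$ for $\exists$) are exactly what makes the corresponding rudimentary construction legitimate; a term $\setin x\varphi$ is handled by invoking (2) for $\varphi\succ\{x\}$; and a term that is a variable or the constant $HF$ is a projection or $F_9$. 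Conversely, each of $F_0,\dots,F_9$ is directly definable by a safe formula --- one writes down $\varphi_{F_i}$ explicitly, using Lemma~\ref{prop:pairs} to get the "decorated" pairing predicates $\check{=},\check{\in}$ that are safe in the relevant variables, and checks safety mechanically against Definition~\ref{def: safety} --- and closure under composition on the syntactic side follows by substituting set terms into formulas (which the framework permits) and re-verifying safety; this gives (1), and (2)+(3) for composite $F$'s.

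I expect the main obstacle to be the $\exists$ case of part (2), i.e. showing that a safe existential quantifier corresponds to a rudimentary function. The difficulty is that rudimentary functions have no unbounded search, so one must extract from the derivation of $\varphi\succ\Theta$ a rudimentary \emph{bound} on the witnesses for the variables in $\Theta$; this amounts to proving, by induction on the safety derivation, an "effective domain" statement --- for every $\varphi$ with $\varphi\succ\{y_1,\dots,y_k\}$ there is a rudimentary $B_\varphi$ with $\{\langle y_1,\dots,y_k\rangle\mid\varphi\}\subseteq B_\varphi(\vec x)$ --- and then folding this into the main induction. The base clause $x\in t\succ\{x\}$ gives $B=t$ (rudimentary by (3)); the conjunction clause is where the asymmetric side condition does real work, since one bound must not depend on the variables the other side quantifies; and the clause $\exists y\varphi\succ\Theta-\{y\}$ needs $\mathrm{Dom}$/$\mathrm{Im}$ to project the bound. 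Once this bounding lemma is in place, the rest is bookkeeping: replace each safe formula by the characteristic-style rudimentary function that decides it on the bounding set, use $F_5$ and $F_1$ to carve out the defined set, and read off $F_\varphi$, $F_t$, $\varphi_F$ accordingly.
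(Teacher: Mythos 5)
Your proposal is mathematically on the right track, but it takes a very different route from the paper: the paper does not prove this theorem from scratch. Its proof is essentially a citation --- the corresponding result for the language without the constant $HF$ is imported from \cite{avron2010new}, and the only new content is the observation that $\varphi_{F_9}$ may be taken to be $y=HF$ (for part (1)) and that the constant $HF$ is handled immediately by $F_9$ inside the simultaneous induction for parts (2) and (3). What you have written is, in effect, a reconstruction of the cited induction itself, and its skeleton (simultaneous induction on terms, formulas and safety derivations; explicit safe formulas for each $F_i$; closure of the $HF$-rudimentary functions under exactly the operations mirrored by the clauses of $\saferst$) matches that underlying argument. One point in your sketch deserves correction: you identify the $\exists$-case as the main obstacle and propose a separate ``bounding lemma'' so that $\exists y\varphi$ becomes a bounded search, with $F_9$ supplying the bound when $y$ ranges over $\mathcal{HF}$. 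That is not how this safety relation operates: the clause for $\exists$ requires $y\in\Theta$, so by the induction hypothesis for (2) the set $\left\{ \left\langle \dots,y,\dots\right\rangle \mid\varphi\right\} $ \emph{including} the $y$-coordinate is already the value of a rudimentary function of the parameters, and $\exists y$ is just a projection in the style of $Dom$ or $Im$; no search is needed and $F_9$ plays no special role there ($F_9$ only interprets the constant $HF$, and $y\in HF\saferst\left\{ y\right\} $ is an instance of the generic $x\in t$ clause). The place where a ``decide on a bounding set'' argument is genuinely required is the conjunction clause when one conjunct is only safe w.r.t.\ $\emptyset$: there one intersects the set produced by the other conjunct with the rudimentary predicate defined by the absolute conjunct, exactly as in the proof of Theorem~\ref{prop:termdefinesset}. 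With that repair, your outline is a faithful --- though much longer --- version of the argument the paper delegates to \cite{avron2010new}.
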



\begin{proof}
The corresponding theorem in \cite{avron2010new} establishes the
connection between $\mathcal{L}_{RST}^{\{HF\}}$ without the constant
$HF$ and rudimentary functions. Thus, the only modification required
here is the treatment of the new function in (1), and the treatment
of the constant $HF$ in (2) and (3) (which are then incorporated in
the original proof that was carried out by induction). For (1), it
is easy to verify that $\varphi_{F_{9}}:=\,y=HF$. For (2) and (3) (which
are proved by simultaneous induction on the structure of terms and
formulas), the case for the constant $HF$ is immediate from the definition
of $HF$-rudimentary functions.
\end{proof}
\begin{thm}
\label{prop:termdefinesset}Let $W$ be a universe, and $v$ an assignment
in $W$.
\end{thm}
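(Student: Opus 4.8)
The plan is to prove the statement by a simultaneous induction on the structure of terms and formulas of $\lng$, with Theorem~\ref{thm:Arnon-1-1} serving as the syntactic backbone. To make the induction go through it is convenient to strengthen the claim into the following joint hypothesis: for every term $t$ with $Fv(t)\subseteq\{x_{1},\dots,x_{n}\}$ one has $\inwv{t}=F_{t}(v(x_{1}),\dots,v(x_{n}))$, where $F_{t}$ is the $HF$-rudimentary function of Theorem~\ref{thm:Arnon-1-1}(3); and for every formula $\varphi$ with $\varphi\safe\{y_{1},\dots,y_{k}\}$ and $Fv(\varphi)\subseteq\{y_{1},\dots,y_{k},x_{1},\dots,x_{n}\}$, the collection $\{\langle a_{1},\dots,a_{k}\rangle : \left\Vert\varphi\right\Vert^{W}_{v[\vec{y}:=\vec{a}]}=\mathbf{t}\}$ equals $F_{\varphi}(v(x_{1}),\dots,v(x_{n}))$, the $HF$-rudimentary function of Theorem~\ref{thm:Arnon-1-1}(2). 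Once these equalities are established, the conclusion that $\inwv{t}$ (respectively, the defined collection) is an element of $W$ is automatic, since $W$ is closed under $HF$-rudimentary functions and each $v(x_{i})\in W$ (note that $W\neq\emptyset$, since $v$ is an assignment into it). The case $k=0$ of the second clause is exactly the statement that $\emptyset$-safe formulas are \emph{absolute} between $W$ and the full universe, and it is this absoluteness that lets the atomic and propositional cases propagate.

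For the base cases: a variable gives $\inwv{x}=v(x)\in W$; the constant gives $\inwv{HF}=\mathcal{HF}\in W$ by closure under the constant function $F_{9}$; an atomic equality is absolute because $=$ is, together with the hypothesis for its two terms; an atomic membership $t\in s$ is absolute because $W$ is transitive and, by the hypothesis for terms, $\inwv{t},\inwv{s}\in W$; and the ``checkered'' atomic clauses of $\safe$ produce, under a given assignment, the set $\emptyset$, the set $\inwv{t}$ itself, or the singleton $\{\inwv{t}\}=F_{0}(\inwv{t},\inwv{t})$, all of which lie in $W$. For the inductive step I would mirror, on the semantic side, the very compositions of $F_{0},\dots,F_{9}$ produced in the proof of Theorem~\ref{thm:Arnon-1-1}: negation preserves absoluteness (and corresponds to $F_{1}$ applied to $1$ and the defined truth value); disjunction corresponds to the union of the two defined sets; the conjunction clause, with its side condition decoupling one conjunct's output variables from the other conjunct's free variables, corresponds to the product-and-select operations $F_{2},F_{3},F_{4},F_{5}$; and the existential clause $\exists y\varphi$ with $y\in\Theta$ corresponds to a projection built from $F_{6}$ and $F_{7}$. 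Finally, the abstract-term case $\setin{x}{\varphi}$ with $\varphi\safe\{x\}$ follows at once from the second clause: the collection $\{a\in W : \left\Vert\varphi\right\Vert^{W}_{v[x:=a]}=\mathbf{t}\}$, which by Definition~\ref{Norm} is exactly $\inwv{\setin{x}{\varphi}}$, equals $F_{\varphi}(v(x_{1}),\dots,v(x_{n}))\in W$, and this agrees with the corresponding $HF$-rudimentary function of Theorem~\ref{thm:Arnon-1-1}(3), establishing the first clause for this term.

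The main obstacle is that the clauses of Definition~\ref{Norm} for $\exists$ and for abstract terms are \emph{relativized to $W$} --- existential witnesses are sought only in $W$, and an abstract term collects only members of $W$ --- so a priori their values could differ from the unrelativized $HF$-rudimentary values. Transitivity of $W$ is what closes this gap. In the existential case $\exists y\varphi$ with $y\in\Theta$, any witness $b$ occurs as a coordinate of a tuple belonging to the $\varphi$-collection; by the induction hypothesis that collection is an element of $W$, so $b\in W$ by transitivity, and hence the $W$-restricted existential coincides with the unrestricted one. The same remark licenses applying the induction hypothesis inside the conjunction case, where the parameter values range over coordinates of tuples in a set already known to lie in $W$. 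The residual difficulty is purely organizational: keeping the free-variable bookkeeping synchronized between the two clauses and matching the conjunction side condition to the correct composite of the $F_{i}$. Since exactly this bookkeeping is carried out in full in the proof of Theorem~\ref{thm:Arnon-1-1}, here it only needs to be transcribed onto the semantic side rather than re-derived.
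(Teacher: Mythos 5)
Your treatment of claims (1) and (2a) is essentially the paper's: both reduce to Theorem~\ref{thm:Arnon-1-1} plus closure of a universe under $HF$-rudimentary functions, and your extra care about the $W$-relativized clauses of Definition~\ref{Norm} (using transitivity of $W$ to show the relativized existential and abstraction clauses agree with the unrelativized rudimentary values) is a legitimate point that the paper passes over as ``immediate.''

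However, there is a genuine gap: you never prove claim (2b), which is the only part of the theorem the paper's proof actually spends effort on. Claim (2b) concerns a formula $\varphi$ with $\varphi\safe\emptyset$ and free variables $y_{1},\dots,y_{n}$, and asserts that for any $X$ in the universe the set $\setout{\left\langle a_{1},\dots,a_{n}\right\rangle\in X^{n}}{\left\Vert\varphi\right\Vert_{v[\vec{y}:=\vec{a}]}^{W}=\mathbf{t}}$ is an element of $W$. Your strengthened induction hypothesis only covers the case $k=0$ of the second clause as the statement that $\emptyset$-safe formulas are \emph{absolute} (a truth value), which says nothing about the collection of tuples from $X^{n}$ satisfying $\varphi$ being a set of $W$; indeed the full class $\setout{\vec{a}}{\varphi(\vec{a})}$ defined by an absolute formula is in general a proper class, so membership in $W$ of its trace on $X^{n}$ does not follow from absoluteness or from Theorem~\ref{thm:Arnon-1-1}(2) (which applies only when the $y_{i}$ are among the safe variables). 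The missing argument is the one the paper supplies: from the $HF$-rudimentary predicate $P_{\varphi}$ one builds the auxiliary $HF$-rudimentary function $H$ returning $\{\left\langle y_{1},\dots,y_{n}\right\rangle\}$ or $\emptyset$ according to $P_{\varphi}$ (this step needs Lemma~1.1 of \cite{devlin1984constructibility}, not just composition of $F_{0},\dots,F_{9}$), and then observes that $F(z,\vec{x})=\bigcup_{\vec{y}\in z^{n}}H(\vec{x},\vec{y})=z^{n}\cap\setout{\vec{y}}{P_{\varphi}(\vec{x},\vec{y})}$ is $HF$-rudimentary, so $F(X,v(x_{1}),\dots,v(x_{k}))\in W$. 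Without some such construction, part (2b) remains unproved.
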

\begin{itemize}
\item For $t$ a term of $\lng$, $\inwv t\in W$.
\item For $\varphi$ a formula of $\lng$:

\begin{itemize}
\item If $\varphi\safe\left\{ y_{1},...,y_{n}\right\} $ and $n>0$, then:
\[
\setout{\left\langle a_{1},...,a_{n}\right\rangle \in W^{n}}{\left\Vert \varphi\right\Vert _{v\left[\vec{y}:=\overrightarrow{a}\right]}^{W}={\bf t}}\in W
\]
.
\item If $\varphi\safe\emptyset$ and $\left\{ y_{1},...,y_{n}\right\} \subseteq Fv\left(\varphi\right)$,
then for any $X\in J_{2}$:
\[
\setout{\left\langle a_{1},...,a_{n}\right\rangle \in X^{n}}{\left\Vert \varphi\right\Vert _{v\left[\vec{y}:=\overrightarrow{a}\right]}^{W}={\bf t}}\in W
\]
\end{itemize}
\end{itemize}

\begin{proof}
The proof is straightforward using Theorem \ref{thm:Arnon-1-1}. Claims
(1) and (2a) are immediate. For (2b) let $\varphi$ be a formula s.t.
$\varphi\safe\emptyset$ and $\left\{ y_{1},...,y_{n}\right\} \subseteq Fv\left(\varphi\right)$.
Using Theorem \ref{thm:Arnon-1-1} we get that $\varphi$ defines
a $HF$-rudimentary predicate, $P_{\varphi}$ (i.e. one whose characteristic
function is $HF$-rudimentary). Define:
\[
H\left(x_{1},...,x_{k},y_{1},...,y_{n}\right)=\begin{cases}
\left\{ \left\langle y_{1},...,y_{n}\right\rangle \right\}  & \mathrm{if}\,\,\,P_{\varphi}\left(x_{1},...,x_{k},y_{1},...,y_{n}\right)\\
\emptyset & \mathrm{otherwise}
\end{cases}
\]
$H$ is a $HF$-rudimentary function (see Lemma 1.1 in \cite{devlin1984constructibility}).
Now, define:
\[
F\left(z,x_{1},...,x_{k}\right)=z^{n}\cap\left\{ \left\langle y_{1},...,y_{n}\right\rangle |P_{\varphi}\left(x_{1},...,x_{k},y_{1},...,y_{n}\right)\right\} 
\]
$F$ is also $HF$-rudimentary since
\[
F\left(z,x_{1},...,x_{k}\right)=\bigcup_{\left\langle y_{1},...,y_{n}\right\rangle \in z^{n}}H\left(x_{1},...,x_{k},y_{1},...,y_{n}\right)
\]
Now, the fact that $W$ is a universe entails that for every assignment
$v$ in $W$ and every 
$X\in W$, $F\left(X,v\left(x_{1}\right),....,v\left(x_{k}\right)\right)\in W$,
and so
$\setout{\left\langle a_{1},...,a_{n}\right\rangle \in X^{n}}{\left\Vert \varphi\right\Vert _{v\left[\vec{y}:=\overrightarrow{a}\right]}^{W}={\bf t}}\in W$.
\end{proof}

\begin{prop}
\label{prop:TranRud} Let $W$ be a universe. Then, $W$ is a model
of $RST_{HF}^m$.
\end{prop}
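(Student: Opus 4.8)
The plan is to verify directly that every axiom of $RST_{HF}^m$ holds in $W$ under the standard interpretation of Definition~\ref{Norm}, in which $\in$ and $=$ are the real membership and equality relations and $HF$ is interpreted as $\mathcal{HF}$. I would first invoke Theorem~\ref{prop:termdefinesset}(1): every term of $\lng$ denotes an element of $W$ under every assignment, so $W$ is genuinely an $\lng$-structure and ``true in $W$'' makes sense. After that the two schematic axioms (Extensionality and Comprehension) are handled abstractly, while the three axioms governing $HF$ reduce to elementary facts about $\mathcal{HF}$ together with the closure of $W$ under the rudimentary function $F_{9}$.

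For \emph{Extensionality}: if $a,b\in W$ and $\left\Vert \forall z(z\in x\leftrightarrow z\in y)\right\Vert _{v[x:=a,y:=b]}^{W}=\mathbf{t}$, then unwinding the clauses for $\forall$ (i.e.\ $\neg\exists\neg$) and $\in$ gives that for every $c\in W$ we have $c\in a\iff c\in b$; since $W$ is transitive every element of $a$ and of $b$ already lies in $W$, so $a$ and $b$ have the same elements and $a=b$ by metatheoretic extensionality. For \emph{Comprehension}: fix an instance determined by a formula $\varphi$ with $\varphi\succ\left\{ x\right\}$ and an assignment $v$; then $\setin x{\varphi}$ is a legal term and by Theorem~\ref{prop:termdefinesset}(1) its value $\left\Vert \setin x{\varphi}\right\Vert _{v}^{W}=\left\{ a\in W\mid\left\Vert \varphi\right\Vert _{v[x:=a]}^{W}=\mathbf{t}\right\}$ lies in $W$, so for every $a\in W$ we get $\left\Vert x\in\setin x{\varphi}\right\Vert _{v[x:=a]}^{W}=\mathbf{t}$ iff $a\in\left\Vert \setin x{\varphi}\right\Vert _{v}^{W}$ iff $\left\Vert \varphi\right\Vert _{v[x:=a]}^{W}=\mathbf{t}$, which is precisely the biconditional $\forall x(x\in\setin x{\varphi}\leftrightarrow\varphi)$.

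For the \emph{$HF$-axioms}: since $W$ is closed under $F_{9}$ we have $\mathcal{HF}\in W$, hence $\mathcal{HF}\subseteq W$ by transitivity. Unwinding the abbreviations $\emptyset$, $\setfin{\cdot}$ and $\cup$ from Lemma~\ref{lem:example} via Definition~\ref{Norm} (using transitivity of $W$) one checks $\left\Vert \emptyset\right\Vert ^{W}=\emptyset$, $\left\Vert x\cup\setfin{y}\right\Vert _{v[x:=a,y:=b]}^{W}=a\cup\left\{ b\right\}$ for all $a,b\in W$, and $\left\Vert HF\subseteq y\right\Vert _{v[y:=c]}^{W}=\mathbf{t}$ iff $\mathcal{HF}\subseteq c$. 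Given this, the first axiom holds because $\emptyset\in\mathcal{HF}$; the second because $a\cup\left\{ b\right\} \in\mathcal{HF}$ whenever $a,b\in\mathcal{HF}$; and the third because $\mathcal{HF}$ is the least collection containing $\emptyset$ and closed under $(v,w)\mapsto v\cup\left\{ w\right\}$: if $c\in W$ has those two closure properties, a metatheoretic induction on rank shows $\mathcal{HF}\subseteq c$, since a hereditarily finite $x$ of positive rank is $\left\{ d_{1},\dots,d_{k}\right\}$ with each $d_{i}\in\mathcal{HF}$ of strictly smaller rank, hence $d_{i}\in c$ by the induction hypothesis, and then $x$ is obtained from $\emptyset\in c$ by finitely many adjunctions.

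The real content of the argument is packaged into Theorem~\ref{prop:termdefinesset}, which we may assume, so I do not expect a serious obstacle; the proof should be short. The only steps that are not pure unwinding of the semantics are (a) confirming that the object-language terms occurring in the $HF$-axioms — chiefly $x\cup\setfin{y}$ — evaluate in $W$ to the expected metatheoretic sets, which leans on $W$ being closed under the relevant $HF$-rudimentary operations (and, in these cases, on transitivity), and (b) the metatheoretic verification that $\mathcal{HF}$ is minimal among collections closed under $\emptyset$ and adjunction. Of these, (b) is the one I would treat most carefully, but it is entirely routine.
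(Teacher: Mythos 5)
Your proof is correct and takes essentially the same approach as the paper's: Extensionality from transitivity of $W$, Comprehension from Theorem~\ref{prop:termdefinesset} (which guarantees every term denotes an element of $W$), and direct verification of the three $HF$-axioms under the interpretation of $HF$ as $\mathcal{HF}$. The paper merely asserts these steps without the detailed unwinding (and the minimality argument for $\mathcal{HF}$) that you spell out.
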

\begin{proof}
The Extensionality axiom is clearly satisfied in any universe. Theorem
\ref{prop:termdefinesset} entails that the interpretation of any
term is an element of the universe. This immediately implies
that the other axioms are satisfied in any universe. It is also straightforward
to verify that the interpretation of $HF$ as $\mathcal{HF}$ satisfies
the three axioms for $HF$.
\end{proof}



\subsection{\label{sec:computational} Computational Theories and Universes}

\
\ms

Computations within a set of objects require concrete representations
of these objects. Accordingly,
we call a theory \emph{computational}
if its set of closed terms induces in a natural way a minimal model of the theory,
and it enables the key properties of these elements to be provable
within it. Next we provide a  more formal definition
for the case of  set theories which are defined within our general
framework.  Note that from  a Platonist point of view, the set of closed terms of such a theory
$\mathcal{T}$  induces some subset $\mathcal{ S_{T}}$
of the cumulative universe of sets $V$, 
 as well as some subset $\mathcal{ M_{T}}$ of any transitive
model $\mathcal{ M}$ of $\mathcal{T}$. 

\begin{defi}
\label{def: computability}
\ 
\begin{enumerate}
\item
A theory $\mathcal{T}$ in the above framework is called
\emph{computational} if the set $\mathcal{ S_{T}}$ it induces is a transitive 
model of $\mathcal{T}$, and the identity of $\mathcal{ S_{T}}$
is absolute in the sense that $\mathcal{ M_{T}}=\mathcal{ S_{T}}$ for any
transitive model $\mathcal{ M}$ of $\mathcal{T}$ (implying that $\mathcal{ S_{T}}$
is a \emph{minimal} transitive model of $\mathcal{T}$). 
\item A set is called {\em computational} if it is $\mathcal{ S_{T}}$ for some
computational theory $\mathcal{T}$.
\end{enumerate}
\end{defi}

The most basic computational theories are $RST^m$ and $RST_{HF}^m$,
which are the two minimal theories
in the hierarchy of systems developed in \cite{Cohen2014QED}.
This fact, as well as the corresponding computational universes, are described
in the following three results from \cite{Cohen2014QED}.

\begin{prop}
\label{prop:TranRud-1}Let $J_{1}$ and $J_{2}$ be the first two elements
in Jensen's hierarchy \cite{Jensen1972fine}.\footnote{$J_{1}=\mathcal{HF}$ and 
$J_{2}=Rud\left(J_1\right)$,
where $Rud\left(x\right)$ denotes the smallest set $y$ such that
$x\subseteq y$, $x\in y$, and $y$ is closed under application of
all rudimentary functions.}
\begin{enumerate}
\item $J_{1}$ is a model of $RST$.
\item $J_{2}$ with the interpretation of $HF$ as $J_{1}$ is a model of
$RST_{HF}^m$. 
\end{enumerate}
\end{prop}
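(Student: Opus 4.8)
The plan is to derive both parts from machinery already in place --- the correspondence between the safety relation and rudimentary functions (the $\mathcal{L}_{RST}^{\emptyset}$-version of Theorem~\ref{thm:Arnon-1-1} established in \cite{avron2010new}, and Theorem~\ref{thm:Arnon-1-1} itself for $\mathcal{L}_{RST}^{\{HF\}}$) together with Proposition~\ref{prop:TranRud} --- supplemented by two standard closure facts about rudimentary functions (cf.\ \cite{devlin1984constructibility,Jensen1972fine}): $(\ast)$ $\mathcal{HF}$ is closed under the rudimentary generators $F_{0},\dots,F_{8}$, and $(\ast\ast)$ the rudimentary closure of a transitive set is again transitive.

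For~(1). I would first note that $J_{1}=\mathcal{HF}$ is transitive and, being contained in $V_{\omega}$, well-founded. By $(\ast)$ it is closed under all rudimentary functions, so by the $\mathcal{L}_{RST}^{\emptyset}$-analogue of Theorem~\ref{thm:Arnon-1-1}(3) the interpretation $\left\Vert t\right\Vert_{v}$ of every $\mathcal{L}_{RST}^{\emptyset}$-term $t$ under every assignment $v$ into $\mathcal{HF}$ is an element of $\mathcal{HF}$ --- the exact analogue, for $\mathcal{HF}$ in place of a universe $W$, of Theorem~\ref{prop:termdefinesset}(1). Given this, the verification of Extensionality (immediate from transitivity) and of every instance of the Comprehension Schema (immediate from the fact that terms denote in $\mathcal{HF}$, by Definition~\ref{Norm}) is word for word the argument of Proposition~\ref{prop:TranRud}, now for the language without the constant $HF$; hence $\mathcal{HF}\models RST^{m}$. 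Well-foundedness of $\mathcal{HF}$ additionally yields $\in$-induction, so $\mathcal{HF}\models RST$.

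For~(2). By the definition recalled in the footnote, $J_{2}=Rud(J_{1})$ is the $\subseteq$-least set $y$ with $J_{1}\subseteq y$, $J_{1}\in y$, and $y$ closed under rudimentary functions. I would show that $J_{2}$ is an $HF$-universe and then invoke Proposition~\ref{prop:TranRud}. Transitivity: $J_{1}$ is transitive, hence so is $J_{1}\cup\{J_{1}\}$, hence by $(\ast\ast)$ so is its rudimentary closure $J_{2}$. Closure under $HF$-rudimentary functions: $J_{2}$ is closed under $F_{0},\dots,F_{8}$ by definition of $Rud$, and it is trivially ``closed under'' the constant function $F_{9}(x)=\mathcal{HF}=J_{1}$ because $J_{1}\in J_{2}$; being closed under every generator it is closed under every composition of them, i.e.\ under every $HF$-rudimentary function. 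Thus $J_{2}$ is a universe, and Proposition~\ref{prop:TranRud} gives $J_{2}\models RST_{HF}^{m}$, with $HF$ interpreted --- per the Terminology following the definition of universe --- as $\mathcal{HF}=J_{1}$, which is consistent since indeed $J_{1}\in J_{2}$.

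I expect no serious obstacle: the two facts $(\ast)$, $(\ast\ast)$ are classical and can simply be cited, and everything else is bookkeeping around definitions already given. The one point genuinely worth stressing is why the two clauses say different things --- $J_{1}$ itself is \emph{not} a universe in the sense of this paper, since it is not closed under $F_{9}$ (its value $\mathcal{HF}$ is not hereditarily finite, and the constant $HF$ cannot even be interpreted in $\mathcal{HF}$), which is exactly why part~(1) can only assert a model of $RST$ and one must pass to $J_{2}$ to obtain a model of $RST_{HF}^{m}$.
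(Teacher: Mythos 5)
Your proposal is correct and follows essentially the same route as the paper, whose proof simply states that (1) is trivial and that (2) follows from Proposition~\ref{prop:TranRud} because $J_{2}$ is clearly a universe; you have merely filled in the routine verifications (closure of $\mathcal{HF}$ under the rudimentary generators, transitivity and $HF$-closure of $Rud(J_{1})$) that the paper leaves implicit. Your additional remark that well-foundedness of $\mathcal{HF}$ supplies $\in$-induction, so that part~(1) really does yield a model of $RST$ rather than merely $RST^{m}$, is a point the paper glosses over and is worth making explicit.
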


\begin{proof}
The first claim is trivial. The second claim follows from Corollary
\ref{prop:TranRud}, since $J_{2}$ is clearly a universe.
\end{proof}

\begin{thm}
~\label{thm:in J2} 
\be
\item $X\in J_{1}$ iff there is a closed term $t$ of $\mathcal{L}_{RST}$
s.t. $\left\Vert t\right\Vert ^{J_{1}}=X$.
\item $X\in J_{2}$ iff there is a closed term $t$ of $\lng$ such that
$\left\Vert t\right\Vert ^{J_{2}}=X$.
\ee
\end{thm}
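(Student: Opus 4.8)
The plan is to reduce both claims to Theorems \ref{thm:Arnon-1-1} and \ref{prop:termdefinesset}, and to the known characterization of Jensen's hierarchy via rudimentary functions. For the direction ``if there is a closed term $t$ then $\|t\|^{W} \in W$'', both parts are immediate: by Theorem \ref{prop:termdefinesset} (first bullet), for any universe $W$ and in particular for $W = J_2$, the value $\|t\|^{J_2}$ of a closed term $t$ of $\lng$ lies in $J_2$; similarly, restricting to the $HF$-free sublanguage $\mathcal{L}_{RST}$ and to the universe $J_1 = \mathcal{HF}$ (which is readily checked to be a universe once one drops the clause for $F_9$, or simply because $\mathcal{HF}$ is closed under the $F_0,\dots,F_8$ and transitive), gives $\|t\|^{J_1} \in J_1$ for every closed term $t$ of $\mathcal{L}_{RST}$. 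So the substance is entirely in the forward direction: every element of $J_i$ is named by a closed term.

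For the forward direction of (1), I would argue by appeal to the fact that $J_1 = \mathcal{HF}$ and that every hereditarily finite set is built up from $\emptyset$ by finitely many applications of pairing/adjunction; concretely, one shows by $\in$-induction in the meta-theory (on the well-founded finite set $X$) that if $X = \{a_1,\dots,a_k\}$ with each $a_i$ named by a closed term $t_i$ of $\mathcal{L}_{RST}$, then $X$ is named by $\setfin{t_1,\dots,t_k}$ (using Lemma \ref{lem:example}), with the base case $\emptyset = \|\setin x{x \neq x}\|$. For the forward direction of (2), the key is the characterization $J_2 = Rud(J_1)$ from the footnote to Proposition \ref{prop:TranRud-1}: every $X \in J_2$ is of the form $F(J_1, \dots, J_1)$ — or, using the standard fact that $Rud(u)$ equals the closure of $\{u\} \cup u$ under rudimentary functions together with the observation that $J_1 \in J_2$ and $J_1$ is transitive, of the form $F(\vec{c})$ where $F$ is a rudimentary function and each argument $\vec c$ is either $J_1$ itself or (by part (1)) a member of $J_1$ hence the value of some closed $\mathcal{L}_{RST}$-term. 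In either presentation, $F$ together with the constant function $F_9$ with value $\mathcal{HF} = J_1$ is an $HF$-rudimentary function $G$ of the relevant arity. By Theorem \ref{thm:Arnon-1-1}(1) there is a formula $\varphi_G$ of $\lng$ with $\varphi_G \safe \{y\}$, $Fv(\varphi_G) \subseteq \{y, x_1, \dots, x_n\}$, and $G(x_1, \dots, x_n) = \setin y{\varphi_G}$; substituting for each $x_i$ a closed term naming the $i$-th argument (either the constant $HF$, whose value is $J_1$, or a closed $\mathcal{L}_{RST}$-term supplied by part (1)), and noting that the substitution instance of $\varphi_G$ is still safe w.r.t.\ $\{y\}$ and is now closed, yields a closed term $\setin y{\varphi_G[\vec t / \vec x\,]}$ of $\lng$ whose value in $J_2$ is exactly $X$.

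The main obstacle I anticipate is the bookkeeping at the interface between the ``rudimentary'' presentation of $J_2$ and the ``closed term'' side: one must be careful that $J_1$ appears among the admissible arguments (this is where the constant $HF$, and hence the function $F_9$ and the enlargement from rudimentary to $HF$-rudimentary functions, is genuinely used — without it $\mathbb{N}$ and all of $J_2 \setminus J_1$ would be unreachable), and that substitution of closed terms into $\varphi_G$ preserves safety w.r.t.\ $\{y\}$, which follows since the safety relation $\saferst$ of Definition \ref{def: safety} is closed under substitution of terms for free variables not in the safety set (the free variables $x_1, \dots, x_n$ of $\varphi_G$ are distinct from $y$). A minor additional point is handling the nullary case of (2) — i.e.\ when $X \in J_2$ is obtained with no rudimentary application at all, so $X \in \{J_1\} \cup J_1$ — which is covered directly by the constant $HF$ and by part (1) respectively. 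Everything else is a routine induction already carried out in \cite{avron2010new} for the $HF$-free fragment and merely transported here through Theorem \ref{thm:Arnon-1-1}.
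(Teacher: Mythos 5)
Your proposal is correct and follows essentially the same route as the paper: the easy direction is Theorem \ref{prop:termdefinesset}, and the substantive direction rests on the rudimentary generation of $J_2$ from $\{\emptyset, J_1\}$ together with the correspondence between ($HF$-)rudimentary functions and safe terms. The only real difference is one of packaging: the paper runs the induction directly at the level of closed terms, exhibiting an explicit term of $\lng$ for each of $F_0(A,B),\dots,F_8(A)$ applied to already-named sets, whereas you collapse the generation into a single composite $HF$-rudimentary function and invoke Theorem \ref{thm:Arnon-1-1}(1) plus a substitution/safety-preservation argument --- a legitimate shortcut given that Theorem \ref{thm:Arnon-1-1} is already available.
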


\begin{proof}
We prove the second item, leaving the easy proof of the first to the reader. 
Theorem \ref{prop:termdefinesset} entails the right-to-left implication.
The converse is proved by induction, using Lemma~\ref{lem:example}. 
Clearly, $\left\Vert \setin x{x\in x}\right\Vert ^{J_{2}}=\emptyset$
and $\left\Vert HF\right\Vert ^{J_{2}}=J_{1}$. Now, suppose that
for $A,B\in J_{2}$ there are closed terms $t_{A}$ and $t_{B}$ such
that $\left\Vert t_{A}\right\Vert ^{J_{2}}=A$ and $\left\Vert t_{B}\right\Vert ^{J_{2}}=B$.
We show that there are closed terms for any of the results of applications
of $F_{0},...,F_{8}$ to $A$ and $B$.

\begin{itemize}
\item $F_{0}\left(A,B\right)=\left\Vert \setfin{t_{A},t_{B}}\right\Vert ^{J_{2}}$
\item $F_{1}\left(A,B\right)=\left\Vert t_{A}-t_{B}\right\Vert ^{J_{2}}$
\item $F_{2}\left(A,B\right)=\left\Vert t_{A}\times t_{B}\right\Vert ^{J_{2}}$
\item $F_{3}\left(A,B\right)=\left\Vert \setin x{\exists z\in t_{A}\exists u,v.\left\langle u,v\right\rangle \check{\in}t_{B}\wedge x=\left\langle u,z,v\right\rangle }\right\Vert ^{J_{2}}$
\item $F_{4}\left(A,B\right)=\left\Vert \setin x{\exists z\in t_{A}\exists u,v.\left\langle u,v\right\rangle \check{\in}t_{B}\wedge x=\left\langle z,v,u\right\rangle }\right\Vert ^{J_{2}}$
\item $F_{5}\left(A,B\right)=\left\Vert \setin{Im\left(\setin w{w\in t_{A}\wedge\pi_{1}\left(w\right)\in z}\right)}{z\in t_{B}}\right\Vert ^{J_{2}}$
\item $F_{6}\left(A\right)=\left\Vert \setin x{\exists u\in t_{A}.x\in u}\right\Vert ^{J_{2}}$
\item $F_{7}\left(A\right)=\left\Vert Dom\left(t_{A}\right)\right\Vert ^{J_{2}}$
\item $F_{8}\left(A\right)=\left\Vert \setin x{\exists u\in t_{A}\exists v\in t_{a}.u\in v\wedge x\check{=}\left\langle u,v\right\rangle }\right\Vert ^{J_{2}}$
\qedhere
\end{itemize}
\end{proof}

\begin{cor}
$RST^m$ and $RST_{HF}^m$ are computational, and $J_{1}$
and $J_{2}$ are their computational models.
\end{cor}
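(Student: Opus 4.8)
The plan is to unwind Definition~\ref{def: computability} for each of the two theories. For $\mathcal{T}$ one of $RST^m$, $RST_{HF}^m$ I must show: (i)~the set $\mathcal{S_T}\subseteq V$ that $\mathcal{T}$ induces is a transitive model of $\mathcal{T}$; and (ii)~$\mathcal{M_T}=\mathcal{S_T}$ for every transitive model $\mathcal{M}$ of $\mathcal{T}$. I will carry out the argument for $RST_{HF}^m$ and $J_2$; the case of $RST^m$ and $J_1$ is word for word the same with $\mathcal{L}_{RST}$ replacing $\lng$ and clauses~(1) of Theorems~\ref{thm:in J2} and~\ref{prop:TranRud-1} replacing their clauses~(2). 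Everything rests on one fact: a closed term of $\lng$ receives the same value in every transitive model of $RST_{HF}^m$. Granting this, $\mathcal{S_T}$ is well defined (it is the common value of the various $\mathcal{M_T}$'s), and it can be computed inside whichever transitive model is convenient.

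For~(i): each abstract set term $\setin x\varphi$ is a term of the language, so the Comprehension schema forces it to denote an element of any transitive model $\mathcal{M}\models RST_{HF}^m$; hence every closed term does, and by the absoluteness fact this denotation does not depend on $\mathcal{M}$. Taking $\mathcal{M}=J_2$ with $HF$ interpreted as $J_1$ --- a model of $RST_{HF}^m$ by Proposition~\ref{prop:TranRud-1}(2), transitive as a level of Jensen's hierarchy --- Theorem~\ref{thm:in J2}(2) says precisely that the closed terms of $\lng$ denote, in $J_2$, exactly the elements of $J_2$. Hence $\mathcal{S}_{RST_{HF}^m}=J_2$, which is a transitive model of $RST_{HF}^m$; this is~(i), and it identifies $J_2$ as the computational model.

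For~(ii): if $\mathcal{M}$ is any transitive model of $RST_{HF}^m$, then $\mathcal{M_T}$ consists of the values $\left\Vert t\right\Vert^{\mathcal{M}}$ of closed terms $t$, which by the absoluteness fact coincide with the values $\left\Vert t\right\Vert^{J_2}$, i.e.\ with $J_2=\mathcal{S_T}$. So $\mathcal{M_T}=\mathcal{S_T}$, whence $RST_{HF}^m$ is computational with computational model $J_2$, and symmetrically $RST^m$ is computational with computational model $J_1$. This proves the corollary.

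The real content, and the step I expect to be the main obstacle, is the absoluteness fact for an \emph{arbitrary} transitive model $\mathcal{M}$ of $RST_{HF}^m$ --- Theorem~\ref{prop:termdefinesset} supplies it only for universes. One argues by a simultaneous induction on terms and formulas of $\lng$ that every term evaluates in $\mathcal{M}$ to its standard value and every safe formula is absolute for $\mathcal{M}$; the heart of this is the property, built into the safety relation, that safe formulas are absolute between transitive models (the superficially unbounded existential quantifiers in the defined terms of Lemma~\ref{lem:example} are in fact bounded by transitivity). The one genuinely new base case is the constant $HF$: one checks $HF^{\mathcal{M}}=\mathcal{HF}$ from the fact that $\mathcal{M}$ satisfies the three $HF$-axioms and Lemma~\ref{lem:HF-1}(1), together with the well-foundedness of membership on the transitive set $\mathcal{M}$ --- an $\in$-induction carried out in the metatheory, not in the theory --- the first two $HF$-axioms giving $\mathcal{HF}\subseteq HF^{\mathcal{M}}$ and Lemma~\ref{lem:HF-1}(1) with well-foundedness giving the reverse inclusion. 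A byproduct of the induction, via Theorem~\ref{thm:Arnon-1-1}, is that $\mathcal{M}$ is closed under $HF$-rudimentary functions, hence is itself a universe, and $J_2=Rud(J_1)\subseteq\mathcal{M}$; this is the precise sense in which $J_2$ is the \emph{minimal} transitive model of $RST_{HF}^m$.
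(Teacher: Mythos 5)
Your overall route is the intended one: the corollary is meant to follow from Proposition~\ref{prop:TranRud-1} and Theorem~\ref{thm:in J2} once one knows that every closed term of $\lng$ receives the same value in every transitive model, and you are right that this absoluteness claim --- in particular the clause $\mathcal{M_T}=\mathcal{S_T}$ of Definition~\ref{def: computability} for an \emph{arbitrary} transitive model $\mathcal{M}$ --- is the only real content beyond what Theorems~\ref{thm:Arnon-1-1} and~\ref{thm:in J2} already supply (the paper itself states the corollary without proof, deferring to the cited earlier work). The problem lies in the one place where you actually argue rather than organize: the base case $HF^{\mathcal{M}}=\mathcal{HF}$.

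The inclusion $\mathcal{HF}\subseteq HF^{\mathcal{M}}$ from the first two $HF$-axioms is fine. But the reverse inclusion does \emph{not} follow from Lemma~\ref{lem:HF-1}(1) together with well-foundedness of $\in$ on $\mathcal{M}$, because the decomposition $x=u\cup\{v\}$ that the lemma provides is not $\in$-decreasing in the component that matters: $u$ is a \emph{subset} of $x$ (possibly even $u=x$), not an element of it, so there is no rank to induct on. Concretely, the collection $H(\omega_1)$ of hereditarily countable sets is transitive, well-founded, contains $\emptyset$, is closed under $u,v\mapsto u\cup\{v\}$, and every nonempty member decomposes as $u\cup\{v\}$ with $u,v\in H(\omega_1)$; yet it properly contains $\mathcal{HF}$. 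So a structure can satisfy everything you invoke and still fail your conclusion. What actually forces $HF^{\mathcal{M}}\subseteq\mathcal{HF}$ is the third $HF$-axiom, and to apply it one must exhibit a witness $y\in\mathcal{M}$ with $\emptyset\in y$, closed under $u,v\mapsto u\cup\{v\}$, and disjoint from $HF^{\mathcal{M}}\setminus\mathcal{HF}$ --- essentially, one must show that $\mathcal{HF}$ itself (or a suitable surrogate) is an \emph{element} of $\mathcal{M}$. That is the missing step, and it is not obviously available from the axioms of $RST_{HF}^m$ alone (note the paper's own caveat that the $HF$-axioms fix the interpretation only ``as far as it is possible on the first-order level''). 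Until it is supplied, clause~(ii) of your argument and the minimality byproduct $J_2\subseteq\mathcal{M}$ remain unproved.
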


Now $J_{1}$, the minimal computational set, is the set of hereditary finite sets.
Its use captures the
standard data structures used in computer science, like strings
and lists. However, in order to be able to capture computational structures
with infinite objects, we have to move to $RST_{HF}^m$,
whose computational universe, $J_{2}$,
seems to be the minimal universe that suffices for this purpose. 
$RST_{HF}^m$ still allows 
for a very concrete, computationally-oriented interpretation, and
it is appropriate for mechanical manipulations and interactive
theorem proving. As noted in the introduction, the main goal of this paper
is to show that this theory and its corresponding  
universe $J_2$ are sufficiently rich for a systematic development of
(great parts of) applicable mathematics.


\section{\label{sec:Static-Extensions-by}Classes and Static Extensions by Definitions}

When working in a minimal computational universe such as $J_{2}$
(as done in the next section), many of the standard mathematical objects
(such as the real line and real functions) are only available in our
framework as proper classes. Thus, in order to be able to formalize
standard theorems regarding such objects we must enrich our language
to include them. However, introducing classes into our framework
is a part of the more general method of extensions by definitions,
which is an essential part of every mathematical research and its
presentation. There are two principles that govern this process
in our framework. First, the static nature of our framework demands
that conservatively expanding the language of a given theory should
be reduced to the use of \emph{abbreviations}. Second, since the introduction
of new predicates and function symbols creates new atomic formulas
and terms, one should be careful that the basic conditions concerning
the underlying safety relation $\succ$ are preserved. Thus only formulas
$\varphi$ s.t. $\varphi\succ\emptyset$ can be used for defining
new predicate symbols. 

We start with the problem of introducing new predicate symbols. Since
$n$-ary predicates can be reduced in the framework of set theory
to unary predicates, we focus on the introduction of new unary
predicates.  In standard practice such extensions are carried out by introducing
a new unary predicate symbol $P$ and either treating $P\left(t\right)$
as an abbreviation for $\varphi\left[ \nicefrac{t}{x}\right] $
 for some formula $\varphi$ and variable $x$,
or (what is more practical) adding $\forall x\left(P\left(x\right)\leftrightarrow\varphi\right)$
as an axiom to the (current version of the base) theory, obtaining
by this a conservative theory in the extended language. However, in
the set theoretical framework it is possible and frequently more convenient
to uniformly use class terms, rather than introduce a new predicate
symbol each time.  
Thus, instead of writing ``$P\left(t\right)$'' one uses an appropriate
class term $S$ and writes ``$t\in S$''. Whatever approach is chosen
\textendash{} in order to respect the definition of a safety relation,
class terms should be restricted so that ``$t\in S$'' is safe w.r.t.
$\emptyset$. Accordingly, we extend our language by incorporating
class terms, which are objects of the form $\class x{\varphi}$, where
$\varphi\saferst\emptyset$. The use of these terms is done in the
standard way. 
  In particular, $t\in\class x{\varphi}$
(where $t$ is free for $x$ in $\varphi$) is equivalent to (and
may be taken as an abbreviation for) $\varphi\left[ \nicefrac{t}{x}\right] $.
It should be emphasized that a class term is not a valid  term
in the language, but only a definable predicate. Thus the  addition
of the new notation does not enhance the expressive power of 
languages like \emph{$\lng$},
but only increases the ease of using them.

\begin{rem}
{\rm
Further standard abbreviations
(see \cite{levy1979basic}) are:
\begin{itemize}
\item $t\subseteq\class x{\varphi}$ is an abbreviation for $\forall z\left(z\in t\rightarrow z\in\class x{\varphi}\right)$.
\item $t=\class x{\varphi}$ and $\class x{\varphi}=t$ stand for $\forall z\left(z\in t\leftrightarrow z\in\class x{\varphi}\right)$.
\item $\class x{\varphi}=\class y{\psi}$ is an abbreviation for $\forall z\left(z\in\class x{\varphi}\leftrightarrow z\in\class y{\psi}\right)$.
\item $\class x\varphi\in t$ is an abbreviation for $\exists z.z=\class x\varphi\wedge z\in t$.
\item $\class x\varphi\in\class y\psi$ is an abbreviation for $\exists z.z=\class x\varphi\wedge z\in\class y\psi$.
\end{itemize}
Note that these formulas are merely abbreviations for formulas which
are \emph{not} necessarily atomic (even though, $t\subseteq\class x\varphi$
also happens to be safe with respect to $\emptyset$).
}\end{rem}

A further conservative extension of the language that we shall use
incorporates free class variables, $\boldsymbol{X},\boldsymbol{Y},\boldsymbol{Z}$,
and free function variables, $\boldsymbol{F},\boldsymbol{G}$, into
$\lng$ (as in free-variable second-order logic  \cite{shapiro1991foundations}).
These variables stand for arbitrary class or function terms (the latter
is defined in Def. \ref{def:function}), and they may only appear
as \emph{free} variables, \emph{never to be quantified}. We allow
occurrences of such variables inside a formula in a class term or
a function term. One may think of a formula with such variables as
a schema, where the variables play the role of ``place holders'',
and whose substitution instances abbreviate official formulas of the
language. (See Example \ref{Example2}.) In effect, a formula $\psi\left(\boldsymbol{X}\right)$
with free class variable $\boldsymbol{X}$ can be intuitively interpreted
as ``for any \emph{given} class $X$, $\psi\left(X\right)$ holds''.
Thus, a free-variable formulation has the flavor of a universal formula.
Therefore, this addition allows statements about \emph{all
}potential classes and \emph{all }potential functions. 

\begin{defi}
\label{Norm-class}
Let $W$ be a universe, $v$ an assignment in $W$, and let $\varphi\succ\emptyset$. Define:
$$\inwv{\class x{\varphi}}:=\setout{a\in W}{\left\Vert \varphi\right\Vert _{v\left[x:=a\right]}^{W}={\bf t}}$$
Given $W$ and $v$, we again say that the class term on the left {\em defines}
here the collection on the right (even though it might
not be an element of $W$).
\end{defi}

\begin{defi}
\label{Def class}Let $X$ be a collection of elements in a universe $W$. 
\begin{itemize}
\item $X$ is a $\i$\emph{-set} (in $W$) if there is a closed term that defines
it. (See Definition~\ref{Norm}.)  If $X$ is a \emph{$\i$}-set, $\widetilde{X}$ denotes some
closed term that defines it.
\item $X$ is a \emph{$\i$-class} (in $W$) if there is a closed class term that
defines it. If $X$ is a \emph{$\i$}-class, $\bar{X}$ denotes some
closed class term that defines it.
\end{itemize}
\end{defi}
Note that by Corollary \ref{prop:termdefinesset}, if $X$ is a $\i$\emph{-set} in $W$
then $X\in W$. 
\begin{prop}
\label{prop:Basic setclass}The following holds for every universe $W$:  
\begin{enumerate}
\item Every $\i$-set is a \emph{$\i$}-class. 
\item The intersection of a \emph{$\i$}-class with a \emph{$\i$}-set is
a \emph{$\i$}-set. 
\item Every $\i$-class that is contained in a \emph{$\i$}-set is a \emph{$\i$}-set.
\end{enumerate}
\end{prop}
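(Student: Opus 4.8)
The plan is to prove the three parts in order, each reducing to a statement about closed (class) terms and the results already established, chiefly Corollary~\ref{prop:termdefinesset} (that $\succ$-sets are elements of $W$) and Proposition~\ref{prop:Basic setclass} itself being built on the fact that $t\in\class x\varphi$ abbreviates $\varphi[\nicefrac t x]$ with $\varphi\succ\emptyset$.

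\textbf{Part (1).} Suppose $X$ is a $\succ$-set in $W$, so there is a closed term $t$ with $\inwv t = X$ (equivalently $\|t\|^W = X$). I need to produce a closed \emph{class} term defining $X$. Take $\varphi := x \in t$, where $x$ is a fresh variable. By Definition~\ref{def: safety}, $x\in t$ is atomic, hence $x\in t\succ\emptyset$; and since $t$ is closed, $Fv(\varphi)=\{x\}$, so $\class x{x\in t}$ is a legitimate closed class term. By Definition~\ref{Norm-class}, $\inwv{\class x{x\in t}}=\setout{a\in W}{\|x\in t\|^W_{v[x:=a]}={\bf t}}=\setout{a\in W}{a\in \|t\|^W}=\|t\|^W\cap W$. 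Since $\|t\|^W\in W$ and $W$ is transitive, $\|t\|^W\subseteq W$, so this equals $\|t\|^W=X$. Hence $X$ is a $\succ$-class.

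\textbf{Part (2).} Let $X$ be a $\succ$-class with closed class term $\bar X=\class x{\varphi}$ ($\varphi\succ\emptyset$, $Fv(\varphi)=\{x\}$), and let $Y$ be a $\succ$-set with closed term $\widetilde Y$. I claim $\setin{x\in \widetilde Y}{\varphi}$ — which by Lemma~\ref{lem:example} is available precisely because $\varphi\succ\emptyset$ and $x\notin Fv(\widetilde Y)$ — is a closed term defining $X\cap Y$. Indeed it abbreviates $\setin x{x\in \widetilde Y\wedge\varphi}$, and by Definition~\ref{Norm}, $\inwv{\setin x{x\in\widetilde Y\wedge\varphi}}=\setout{a\in W}{a\in\|\widetilde Y\|^W \text{ and } \|\varphi\|^W_{v[x:=a]}={\bf t}}=Y\cap X$ (again using $Y\subseteq W$ by transitivity, and that $X$ as a $\succ$-class is a subcollection of $W$ by Definition~\ref{Norm-class}). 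By Corollary~\ref{prop:termdefinesset} the value of this closed term lies in $W$, so $X\cap Y$ is a $\succ$-set.

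\textbf{Part (3).} Let $X$ be a $\succ$-class contained in a $\succ$-set $Y$. Then $X = X\cap Y$, so $X$ is a $\succ$-set by part (2). I expect the only point requiring any care — and the main (mild) obstacle — is being careful everywhere about the distinction between the ``real'' collection $\|t\|^W$ and its intersection with $W$: the arguments all go through because $W$ is transitive, so every $\succ$-set, being an element of $W$, is a subset of $W$, and every $\succ$-class is by definition a subcollection of $W$; without transitivity the identifications $\|t\|^W\cap W=\|t\|^W$ would fail. Beyond that, everything is a direct unwinding of Definitions~\ref{Norm} and \ref{Norm-class} together with the syntactic facts about $\succ$ recorded in Definition~\ref{def: safety} and Lemma~\ref{lem:example}.
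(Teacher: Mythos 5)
Your proposal is correct and follows essentially the same route as the paper: part (1) via the class term $\class x{x\in t}$, part (2) via the set term for the bounded/conjunctive comprehension $\setin x{x\in\widetilde Y\wedge\varphi}$ whose safety w.r.t.\ $\{x\}$ comes from conjoining an absolute formula with $x\in\widetilde Y\succ\{x\}$, and part (3) by reduction to (2) via $X=X\cap Y$. The only cosmetic differences are that you justify $x\in t\succ\emptyset$ directly from atomicity rather than from $x\in\widetilde X\succ\{x\}$, and you unfold the class term where the paper keeps $z\in\bar X$ as the absolute conjunct.
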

\begin{proof}\leavevmode
\begin{enumerate}
\item If $X$ is a \emph{$\i$}-set, then $x\in\widetilde{X}\succ\{x\}$.
Hence (see \cite{avron2008framework}) $x\in\widetilde{X}\safe\emptyset$. 
This implies that $\class x{x\in\widetilde{X}}$
is a class term which defines $X$, and so $X$ is a \emph{$\i$}-class.
\item Let $X$ be a \emph{$\i$}-class and $Y$ be a \emph{$\i$}-set. Then,
$X\cap Y$ can be defined by the term $\setin z{z\in\bar{X}\wedge z\in\widetilde{Y}}$.
Since $z\in\bar{X}\safe\emptyset$ and $z\in\widetilde{Y}\safe\left\{ z\right\} $
we get that $z\in\bar{X}\wedge z\in\widetilde{Y}\safe\left\{ z\right\} $.
Hence $X\cap Y$ is a \emph{$\i$}-set. 
\item Follows from (2), since if $X\subseteq Y$ then $X=X\cap Y$.  \qedhere
\end{enumerate}
\end{proof}

\begin{rem}
\label{note:class} {\rm A semantic counterpart of our notion of a $\i$-class
was used in \cite{weaver2005analysis}, and is called there an
$\iota$-class. It is defined  as a definable subset of $J_{2}$
whose intersection with any element of $J_{2}$ is in $J_{2}$.
The second condition in this definition seems somewhat
ad hoc. More importantly, it is  unclear how it can be
checked in general, and what kind of set theory is needed to establish
 that certain collections are $\iota$-classes. In contrast, the definition
of a $\i$-class used here is motivated by, and based
on, purely syntactical considerations. It is also  a simplification of
the notion of $\iota$-class, as by Prop. \ref{prop:Basic setclass}(2)
every $\i$-class is an $\iota$-class.}\footnote{Two other ideas that appear in the sequel were adopted from \cite{weaver2005analysis}:
treating the collection of reals as a proper class, and the use of
codes for handling certain classes. It should nevertheless be emphasized
that the framework in \cite{weaver2005analysis} is exclusively based
on semantical considerations, and it is unclear how it can be turned
into a formal (and suitable for mechanization) theory like $ZF$ or $PA$.} 
\end{rem}
\begin{prop}
\label{Pro:SetPairs-1}The following holds for every universe $W$: 
\begin{itemize}
\item Let $Y$ be a \emph{$\i$}-set. If $\varphi\safe\emptyset$ and $Fv\left(\varphi\right)\subseteq\left\{ x\right\} $,
then $\left\{ x\in Y\mid\varphi\right\} $ is a \emph{$\i$}-set. 
\item If $\varphi\safe\left\{ x_{1},...,x_{n}\right\} $, then $\left\{ \left\langle x_{1},...,x_{n}\right\rangle \mid\varphi\right\} $
is a \emph{$\i$}-set. 
\end{itemize}
\end{prop}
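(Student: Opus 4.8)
The plan is to reduce each bullet to machinery already in place: the first to the relativized-comprehension term $\setin{x\in t}{\varphi}$ of Lemma~\ref{lem:example}, and the second to the equivalence between safe formulas and $HF$-rudimentary functions of Theorem~\ref{thm:Arnon-1-1}. In both cases I would exhibit a \emph{closed} term of $\lng$ whose value in $W$ is the collection at hand, and then conclude by Definition~\ref{Norm} together with the definition of a $\i$-set (using, for the first bullet, that $W$ is transitive).

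For the first bullet, let $\widetilde Y$ be a closed term defining $Y$ (which exists because $Y$ is a $\i$-set; recall also that then $Y\in W$). Since $\widetilde Y$ is closed, $x\notin Fv(\widetilde Y)$, so the side conditions of the $\setin{x\in t}{\varphi}$ clause of Lemma~\ref{lem:example} ($\varphi\succ\emptyset$ and $x\notin Fv(t)$) hold and $t_0:=\setin{x\in\widetilde Y}{\varphi}=\setin{x}{x\in\widetilde Y\wedge\varphi}$ is a well-formed term; since $Fv(\varphi)\subseteq\{x\}$ and $Fv(\widetilde Y)=\emptyset$ it is closed. Unwinding Definition~\ref{Norm} gives $\|t_0\|^W=\{a\in W:a\in Y\text{ and }\|\varphi\|^W_{v[x:=a]}=\mathbf{t}\}$, and since $Y\in W$ and $W$ is transitive we have $Y\subseteq W$, whence $\|t_0\|^W=\{x\in Y\mid\varphi\}$. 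Thus $\{x\in Y\mid\varphi\}$ is a $\i$-set.

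For the second bullet I read the hypothesis as including $Fv(\varphi)\subseteq\{x_1,\dots,x_n\}$ (as in the first bullet; otherwise the collection depends on parameters and cannot be defined by a closed term). The case $n=1$ is immediate, as $\langle x_1\rangle$ is $x_1$ and $\setin{x_1}{\varphi}$ is a well-formed closed term defining $\{x_1\mid\varphi\}$. For general $n$, I would apply Theorem~\ref{thm:Arnon-1-1}(2) with $x_1,\dots,x_n$ playing the role of its ``$y$''-variables and with no further parameters; this yields a $0$-ary (constant) $HF$-rudimentary function $F_\varphi$ equal to $\{\langle x_1,\dots,x_n\rangle\mid\varphi\}$. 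Applying Theorem~\ref{thm:Arnon-1-1}(1) to this constant produces a formula $\varphi_{F_\varphi}$ with $Fv(\varphi_{F_\varphi})\subseteq\{y\}$ and $\varphi_{F_\varphi}\succ\{y\}$ such that $\{y\mid\varphi_{F_\varphi}\}=F_\varphi$. Then $\setin{y}{\varphi_{F_\varphi}}$ is a well-formed closed term defining $\{\langle x_1,\dots,x_n\rangle\mid\varphi\}$, so that collection is a $\i$-set. (Alternatively, one can use Theorem~\ref{prop:termdefinesset} and Theorem~\ref{thm:Arnon-1-1}(2) to see that the collection lies in $J_2$, and then invoke Theorem~\ref{thm:in J2}(2).)

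The verifications of the safety clauses and the unwindings of Definition~\ref{Norm} are routine; the genuinely delicate points are in the second bullet. First, a naive direct construction---writing $\{\langle x_1,\dots,x_n\rangle\mid\varphi\}$ as $\setin{z}{\exists x_1\cdots\exists x_n(z\check{=}\langle x_1,\dots,x_n\rangle\wedge\varphi)}$---does \emph{not} go through, because the conjunction clause of $\succ$ does not permit conjoining $z\check{=}\langle x_1,\dots,x_n\rangle$ (safe w.r.t.\ $\{x_1,\dots,x_n\}$) with $\varphi$ once the $x_i$ occur free in $\varphi$; this is what forces the detour through rudimentary functions. Second, Theorem~\ref{thm:Arnon-1-1} is phrased in a universe-independent way (its equalities are between genuine sets), so to conclude that $\setin{y}{\varphi_{F_\varphi}}$ defines the right collection \emph{in an arbitrary universe $W$} one uses that safe formulas are absolute, so that $\{\langle a_1,\dots,a_n\rangle\in W^n:\|\varphi\|^W_{v[\vec{x}:=\vec{a}]}=\mathbf{t}\}$ does not depend on $W$ and coincides with the value of $F_\varphi$; the same remark, via Theorem~\ref{thm:Arnon-1-1}(3), applies to the closed term $\setin{y}{\varphi_{F_\varphi}}$. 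This absoluteness bridge is the step I would be most careful about; everything else is bookkeeping.
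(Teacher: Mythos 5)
Your first bullet coincides with the paper's proof (the closed term $\setin x{x\in\widetilde{Y}\wedge\varphi}$), just spelled out in more detail. For the second bullet, however, you abandon the direct construction on the grounds that the safety relation does not license it, and that claim is wrong --- the paper's proof \emph{is} the direct construction you reject. The paper's term is $\setin z{\exists x_{1}\ldots\exists x_{n}\left(\varphi\wedge z=\left\langle x_{1},\ldots,x_{n}\right\rangle \right)}$ with $z$ fresh. The second conjunct is not the derived formula $z\check{=}\left\langle x_{1},\ldots,x_{n}\right\rangle$ (safe w.r.t.\ $\left\{ x_{1},\ldots,x_{n}\right\}$) that you consider, but the genuine atomic formula $z=t$ with $t$ the term $\left\langle x_{1},\ldots,x_{n}\right\rangle$; since $z\notin Fv\left(t\right)$, the atomic clause of Definition~\ref{def: safety} gives $z=\left\langle x_{1},\ldots,x_{n}\right\rangle \succ\left\{ z\right\}$. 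The conjunction clause then applies with $\Theta=\left\{ x_{1},\ldots,x_{n}\right\}$ and $\Phi=\left\{ z\right\}$ through its \emph{first} disjunct, because $\Phi\cap Fv\left(\varphi\right)=\left\{ z\right\} \cap Fv\left(\varphi\right)=\emptyset$ ($z$ being fresh); whether the $x_{i}$ occur free in $\varphi$ is irrelevant to that disjunct. Hence the conjunction is safe w.r.t.\ $\left\{ x_{1},\ldots,x_{n},z\right\}$, the $n$ existential quantifiers strip off $x_{1},\ldots,x_{n}$, and the abstraction term is well formed, closed (under your reading $Fv\left(\varphi\right)\subseteq\left\{ x_{1},\ldots,x_{n}\right\}$, which is indeed the intended one), and defines the collection in every universe.

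The detour through Theorem~\ref{thm:Arnon-1-1} is therefore unnecessary, and it is also the shakier route: it requires a $0$-ary ``$HF$-rudimentary function'' (Definition~\ref{Def:rudimentary-function} only yields functions of positive arity by composition), and your fallback via Theorem~\ref{thm:in J2}(2) covers only $W=J_{2}$, whereas the proposition is asserted for every universe $W$. The absoluteness transfer you flag can be made to work, but none of this machinery is needed once the direct term is seen to be safe.
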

\begin{proof}\leavevmode
\begin{itemize}
\item $\left\{ x\in Y\mid\varphi\right\} $ is defined by $\setin x{x\in\widetilde{Y}\wedge\varphi}$. 
\item $\left\{ \left\langle x_{1},...,x_{n}\right\rangle \mid\varphi\right\} $
is defined by $\setin z{\exists x_{1}...\exists x_{n}\left(\varphi\wedge z=\left\langle x_{1},...,x_{n}\right\rangle \right)}$,
where $z$ is fresh. \qedhere
\end{itemize}
\end{proof}

\begin{prop}
\label{cor:Every-rudimentary-functionTERM}For every $n$-ary $HF$-rudimentary
function $f$ there is a term $t$ with $Fv\left(t\right)\subseteq\left\{ x_{1},...,x_{n}\right\} $
s.t. for any $\left\langle A_{1},...,A_{n}\right\rangle \in W^{n}$,
$f$ returns the \emph{$\i$}-set $\left\Vert t\right\Vert _{\left[x_{1}:=A_{1},...,x_{n}:=A_{n}\right]}^{W}$.
\end{prop}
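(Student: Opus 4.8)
The plan is to read the statement off from Theorem~\ref{thm:Arnon-1-1}(1), which is its purely syntactic counterpart. Given an $n$-ary $HF$-rudimentary function $f$, I would first apply Theorem~\ref{thm:Arnon-1-1}(1) to obtain a formula $\varphi_f$ of $\lng$ with $Fv(\varphi_f)\subseteq\{y,x_1,\ldots,x_n\}$, with $\varphi_f\safe\{y\}$, and with $f(x_1,\ldots,x_n)=\setin y{\varphi_f}$. Since $\varphi_f\safe\{y\}$, the expression $t:=\setin y{\varphi_f}$ is a legitimate term of $\lng$, and by the clause for abstract set terms in Definition~\ref{def: safety} we have $Fv(t)=Fv(\varphi_f)-\{y\}\subseteq\{x_1,\ldots,x_n\}$. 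This $t$ is the term the proposition asks for.

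It then remains to verify that for every universe $W$ and every $\langle A_1,\ldots,A_n\rangle\in W^n$ we have $\left\Vert t\right\Vert_{[x_1:=A_1,\ldots,x_n:=A_n]}^{W}=f(A_1,\ldots,A_n)$. Unwinding Definition~\ref{Norm}, the left-hand side equals $\setout{a\in W}{\left\Vert\varphi_f\right\Vert_{[x_1:=A_1,\ldots,x_n:=A_n,\,y:=a]}^{W}={\bf t}}$, so what is needed is that $\varphi_f$ still ``computes'' $f$ correctly when its quantifiers range over $W$ rather than over the ambient set-theoretic universe. I would extract this from the construction underlying Theorem~\ref{thm:Arnon-1-1}: the identity $f=\setin y{\varphi_f}$ is proved by induction on how $f$ is generated from the basic functions $F_0,\ldots,F_9$, and every clause of that induction uses only $\Delta_0$-type reasoning in the language with the constant $HF$; such reasoning is absolute for any transitive collection that is closed under $HF$-rudimentary functions and interprets $HF$ as $\mathcal{HF}$ --- that is, for any universe. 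Concretely, for an assignment $v$ in $W$ one obtains $\left\Vert\varphi_f\right\Vert_{v[y:=a]}^{W}={\bf t}$ iff $a\in f(v(x_1),\ldots,v(x_n))$, and since $W$ is closed under $f$ we have $f(v(x_1),\ldots,v(x_n))\subseteq W$, so the restriction ``$a\in W$'' is vacuous and the displayed equality follows.

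Finally I would note that $\left\Vert t\right\Vert_{[\vec{x}:=\vec{A}]}^{W}\in W$ by Theorem~\ref{prop:termdefinesset}, and that when each $A_i$ is a $\i$-set with a closed defining term $\widetilde{A_i}$, substituting $\widetilde{A_i}$ for $x_i$ in $t$ yields a closed term of $\lng$ defining $f(A_1,\ldots,A_n)$; hence that value is indeed a $\i$-set. The only step I expect to require real care is the transfer of the syntactic identity of Theorem~\ref{thm:Arnon-1-1} from the background universe to an arbitrary universe $W$ --- everything else is bookkeeping. But this transfer is precisely what transitivity of $W$ together with its closure under $HF$-rudimentary functions (and the fixed interpretation of $HF$) is designed to provide, so I expect no genuine difficulty there.
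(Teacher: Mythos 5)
Your proposal is correct and takes essentially the same route as the paper: the paper's proof likewise obtains the term by applying Theorem~\ref{thm:Arnon-1-1}(1) to get $\varphi_f\safe\{y\}$ with $f(x_1,\ldots,x_n)=\setin y{\varphi_f}$, and then observes that substituting closed defining terms for the arguments yields a $\i$-set. The only difference is one of detail: the paper states that the conclusion ``easily follows,'' while you explicitly carry out the transfer of the identity to an arbitrary universe $W$ via transitivity and closure under $HF$-rudimentary functions, which is exactly the step the paper leaves implicit.
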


\begin{proof}
It is easy to see that if $X_{1},...,X_{n}$ are $\i$-sets, and $\varphi$
is a formula such that $Fv\left(\varphi\right)\subseteq\left\{ y,v_{1},...v_{n}\right\} $
and $\varphi\safe\left\{ y\right\} $, then $\left\{ y\mid\varphi\left\{ \frac{\widetilde{X_{1}}}{v_{1}},...,\frac{\widetilde{X_{n}}}{v_{n}}\right\} \right\} $
is a \emph{$\i$}-set. Therefore the proposition easily follows from
Theorem \ref{thm:Arnon-1-1}.
\end{proof}

\begin{prop}
\label{pro:classes}If $X,Y$ are $\i$-classes (in a universe $W$), so are $X\cup Y$,
$X\cap Y$, $X\times Y$, $J_{2}-X$, and $P_{J_{2}}\left(X\right)=\left\{ z\in J_{2}\mid z\subseteq X\right\} $.
\end{prop}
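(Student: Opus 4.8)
The plan is to exhibit, for each of the five collections, a closed class term whose defining formula is safe w.r.t.\ $\emptyset$; that each such term then \emph{defines} (in the sense of Definition~\ref{Norm-class}) the intended collection in $W$ will follow easily from the absoluteness of $\in$, $=$ and pairing in the transitive model $W$ (recall $W\models RST_{HF}^m$ by Proposition~\ref{prop:TranRud}). Fix closed class terms $\bar{X}=\class x{\varphi_{X}}$ and $\bar{Y}=\class x{\varphi_{Y}}$ defining $X$ and $Y$, so that $\varphi_{X},\varphi_{Y}\succ\emptyset$ with at most $x$ free; then $z\in\bar{X}$ and $z\in\bar{Y}$ are themselves safe w.r.t.\ $\emptyset$, being, after renaming bound variables, variable-for-variable substitution instances of $\varphi_{X}$ and $\varphi_{Y}$ (and safety w.r.t.\ $\emptyset$ is preserved by such substitutions).

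For $X\cup Y$ and $X\cap Y$ I would take $\class z{z\in\bar{X}\vee z\in\bar{Y}}$ and $\class z{z\in\bar{X}\wedge z\in\bar{Y}}$; their matrices are safe w.r.t.\ $\emptyset$ by the clauses for $\vee$ and for $\wedge$ (in the latter the side condition $\Theta\cap Fv(\psi)=\emptyset$ is met vacuously, since $\Theta=\Phi=\emptyset$). For $J_{2}-X$ I would use $\class z{z\in\overline{J_{2}}\wedge\neg(z\in\bar{X})}$, where $\overline{J_{2}}$ is a closed class term for $J_{2}$ (that $J_{2}$ is itself a $\succ$-class is a point orthogonal to this proposition, e.g.\ witnessed by $\class x{x=x}$ when $W=J_{2}$); safety then follows from the clause for $\neg$ together with the clause for $\wedge$. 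For $P_{J_{2}}(X)=\setout{z\in J_{2}}{z\subseteq X}$ I would use $\class z{z\in\overline{J_{2}}\wedge z\subseteq\bar{X}}$, noting that $z\subseteq\bar{X}$ abbreviates $\forall w(w\in z\imp w\in\bar{X})$ and is safe w.r.t.\ $\emptyset$ by exactly the bookkeeping that makes ordinary $t\subseteq s$ safe w.r.t.\ $\emptyset$ (as already recorded for both sets and class terms), so that the conjunction with $z\in\overline{J_{2}}$ is safe w.r.t.\ $\emptyset$ too.

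The one case that needs real care is $X\times Y$: the naive term $\class z{\exists a\exists b(a\in\bar{X}\wedge b\in\bar{Y}\wedge z=\langle a,b\rangle)}$ will \emph{not} do, since $z=\langle a,b\rangle$ is only safe w.r.t.\ $\emptyset$ (it matches none of the forms in the clause that yields safety w.r.t.\ a singleton), so neither $a$ nor $b$ may be existentially quantified. Instead I would use the ``safe pairing'' formula of Lemma~\ref{prop:pairs}, taking $\class z{\exists a(a\in\bar{X}\wedge\exists b(b\in\bar{Y}\wedge z\check{=}\langle a,b\rangle))}$. Picking $a,b$ fresh, Lemma~\ref{prop:pairs}(1) gives $z\check{=}\langle a,b\rangle\succ\{a,b\}$; then, applying the $\wedge$-clause through its (vacuously satisfied) side condition $\Theta\cap Fv(\psi)=\emptyset$, one gets $b\in\bar{Y}\wedge z\check{=}\langle a,b\rangle\succ\{a,b\}$, hence $\exists b(b\in\bar{Y}\wedge z\check{=}\langle a,b\rangle)\succ\{a\}$ by the $\exists$-clause, hence $a\in\bar{X}\wedge\exists b(\dots)\succ\{a\}$, hence $\exists a(\dots)\succ\emptyset$. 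Finally, by Lemma~\ref{prop:pairs}(3) (valid in $W$ since $W\models RST_{HF}^m$) this term defines $\setout{c\in W}{\exists a,b\in W\,(a\in X\wedge b\in Y\wedge c=\langle a,b\rangle)}$, which is exactly $X\times Y$ because $W$ is closed under pairing and the ordered pair is absolute. So the only genuine obstacle is this safety-bookkeeping step for the product; the other four items are routine applications of the clauses in Definition~\ref{def: safety}.
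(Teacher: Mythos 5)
Your proof is correct and follows essentially the same route as the paper: exhibit the evident class terms, using the safe pairing formula $\check{=}$ of Lemma~\ref{prop:pairs} for the product (the paper's term is the flat conjunction $\class x{\exists a\exists b\left(a\in\bar{X}\wedge b\in\bar{Y}\wedge x\check{=}\left\langle a,b\right\rangle\right)}$, which passes the same safety check). The extra conjuncts $z\in\overline{J_{2}}$ in your complement and power-class terms are harmless but redundant, since by Definition~\ref{Norm-class} class terms are already relativized to $W$; the paper simply writes $\class x{x\notin\bar{X}}$ and $\class z{z\subseteq\bar{X}}$.
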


\noindent
{\em Proof:}
\hfill
\begin{itemize}
\item $\overline{X\cup Y}=\class x{x\in\bar{X}\vee x\in\bar{Y}}$. 
\item $\overline{X\cap Y}=\class x{x\in\bar{X}\wedge x\in\bar{Y}}$. 
\item $\overline{X\times Y}=\class x{\exists a\exists b\left(a\in\bar{X}\wedge b\in\bar{Y}\wedge x\check{=}\left\langle a,b\right\rangle \right)}$.
(See Lemma~\ref{prop:pairs}.)
\item $\overline{J_{2}-X}=\class x{x\notin\bar{X}}$. 
\item $P_{J_{2}}\left(X\right)=\class z{z\subseteq\bar{X}}
=\class z{\forall a\left(a\in z\rightarrow a\in\bar{X}\right)}$. 
(See footnote~\ref{foot:forall}.) \qed
\end{itemize}

\ms

For a class term $s$ we denote by $2^{s}$ the class term $\class z{z\subseteq s}$.
Note that for any assignment $v$ in $W$ and class term $s$, $\inwv{2^{s}}$
is equal to $P_{W}\left(\inwv s\right)$, i.e., the intersection of
the power set of $\inwv s$ and $W$. This demonstrates the main
difference between set terms and class terms. The interpretation of
set terms is absolute, whereas the interpretation of class terms might
not be (though membership in the interpretation of a class term {\em is}
absolute). 
\begin{defi}
Let $W$ be a universe.
A \emph{$\i$}-\emph{relation (in $W$) from }a \emph{$\i$}-\emph{class $X$
to }a \emph{$\i$}-\emph{class $Y$ }is a \emph{$\i$}-class $A$
s.t. $A\subseteq X\times Y$. A $\i$-relation is called\emph{
small }if it is a \emph{$\i$}-set (of $W$).
\end{defi}


Next we extend our framework by the introduction of new function symbols.
This poses a new difficulty. While new relation symbols are commonly
introduced in a static way, new function symbols are usually introduced
\emph{dynamically}: a new function symbol is made available after
appropriate existence and uniqueness theorems had been proved. 
 However, one of the main
guiding principles of our framework is that its languages should be
treated exclusively in a \emph{static} way. Thus function symbols,
too, are introduced here only as abbreviations for definable operations
on sets.\footnote{In this paper, as in standard mathematical
textbooks, the term ``function'' is used both for collections of
ordered pairs and for set-theoretical operations (such as $\cup$). }
\begin{defi}
\label{def:function} 
Let $W$ be a universe. (The various definitions should be taken with respect to $W$.)
\begin{itemize}
\item For a closed class term  $\mathsf{T}$ and a term $t$ of $\lng$, $\lambda x\in\mathsf{T}.t$
is a \emph{function term} which is an abbreviation for 
$\class z{\exists x\exists y\left(z\check{=}\left\langle x,y\right\rangle 
\wedge x\in\mathsf{T}\wedge y=t\right)}$.\footnote{We abbreviate by 
$z\check{=}\left\langle x,y\right\rangle $ and 
$\left\langle x,y\right\rangle \check{\in}z$ the two formulas that are provably
equivalent to $z=\left\langle x,y\right\rangle $ and $\left\langle x,y\right\rangle \in z$
and are safe w.r.t. $\left\{ x,y\right\} $ which were introduced in \cite{Cohen2014QED}.}
\item A \emph{$\i$}-class $F$ is a \emph{$\i$}-\emph{function
on a} \emph{$\i$-class $X$} if there is a function term $\lambda x\in\mathsf{T}.t$
such that $X=\left\Vert\mathsf{T}\right\Vert$, $Fv\left(t\right)\subseteq\left\{ x\right\} $ and $F=\left\Vert \lambda x\in\mathsf{T}.t\right\Vert $.
$t$ is called a term which \emph{represents} $F$.
\item A $\i$-class is called a $\i$-function if it is a $\i$-function
on some $\i$-class.
\item A $\i$-function is called\emph{ small
}if it is a \emph{$\i$}-set.
\end{itemize}
\end{defi}
\noindent Note that the standard functionality condition is always
satisfied by a \emph{$\i$}-\emph{function}.

\ms

\noindent \emph{Terminology. }In what follows, claiming that an object
is \emph{available in $RST_{HF}^m$ as a $\i$-function ($\i$-relation)}
means that for every universe $W$, the object  is definable in $\lng$
as a $\i$-function \emph{($\i$-relation)} of $W$,
and that its basic properties are provable in $RST_{HF}^m$.\footnote{The ``basic properties'' of a certain object is of course a fuzzy
notion. However, it is not difficult to identify its meaning in each
particular case, as will be demonstrated in several examples below.} 
\begin{prop}
\label{prop:relation}Let $X,Y$ be $\i$-classes and $R$ a \emph{$\i$}-relation
from $X$ to $Y$. 
\begin{enumerate}
\item $R$ is small iff $Dom\left(R\right)$ and $Im\left(R\right)$ are
$\i$-sets. 
\item $R^{-1}=\left\{ \left\langle y,x\right\rangle |\left\langle x,y\right\rangle \in R\right\} $
is available in $RST_{HF}^m$ as a \emph{$\i$}-relation from $Y$
to $X$. If $R$ is small, then so is $R^{-1}$. 
\item If $Z\subseteq X$ and $U\subseteq Y$ are \emph{$\i$}-classes, then
$R\cap\left(Z\times U\right)$ is available in $RST_{HF}^m$ as
a \emph{$\i$}-relation from $Z$ to $U$ . 
\end{enumerate}
\end{prop}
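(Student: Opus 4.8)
The plan is to prove each item by writing down an explicit class (or set) term, verifying the safety side-conditions, and then reading off the basic properties from Definitions~\ref{Norm} and~\ref{Norm-class} together with Lemma~\ref{prop:pairs}. For item~(1), the left-to-right direction is immediate: if $R$ is small then $R=\left\Vert\widetilde{R}\right\Vert^{W}$ for a closed term $\widetilde{R}$, so $Dom\left(R\right)$ and $Im\left(R\right)$ are defined by the closed terms $Dom\left(\widetilde{R}\right)$ and $Im\left(\widetilde{R}\right)$ of Lemma~\ref{lem:example} (whose characterizing properties are provable in every $RST$-theory, hence hold in the universe $W$), and are therefore $\i$-sets. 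For the converse, since $R$ is a $\i$-relation from $X$ to $Y$ every element of $R$ is an ordered pair, so $R\subseteq Dom\left(R\right)\times Im\left(R\right)$ as collections; if $Dom\left(R\right)$ and $Im\left(R\right)$ are $\i$-sets then so is $Dom\left(R\right)\times Im\left(R\right)$ by the ``$\times$'' clause of Lemma~\ref{lem:example}, and then $R$, being a $\i$-class contained in a $\i$-set, is a $\i$-set by Proposition~\ref{prop:Basic setclass}(3) --- i.e.\ small.

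For item~(2) I would set $\overline{R^{-1}}:=\class z{\exists y\exists x\left(z\check{=}\left\langle y,x\right\rangle\wedge\left\langle x,y\right\rangle\in\bar{R}\right)}$ with $x,y$ fresh, where $\left\langle x,y\right\rangle\in\bar{R}$ is the standard abbreviation obtained by substituting $\left\langle x,y\right\rangle$ into the defining formula of $\bar{R}$. The safety verification is the crux: $z\check{=}\left\langle y,x\right\rangle\succ\left\{y,x\right\}$ by Lemma~\ref{prop:pairs}(1); $\left\langle x,y\right\rangle\in\bar{R}\succ\emptyset$, since the defining formula of $\bar{R}$ is $\succ\emptyset$ and $\succ\emptyset$ is preserved under substitution of a term for a variable (see \cite{avron2008framework}); the conjunction is therefore $\succ\left\{y,x\right\}$ by the $\wedge$-clause, and peeling off $\exists x$ and then $\exists y$ via the $\exists$-clause leaves $\succ\emptyset$, so $\overline{R^{-1}}$ is a legal closed class term. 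Using $\check{=}\leftrightarrow=$ (Lemma~\ref{prop:pairs}(3), valid in every universe) one checks that $\overline{R^{-1}}$ defines $\left\{\left\langle b,a\right\rangle\mid\left\langle a,b\right\rangle\in R\right\}$ in every universe, that $R^{-1}\subseteq Y\times X$ because $R\subseteq X\times Y$ and $Y\times X$ is a $\i$-class by Proposition~\ref{pro:classes}, and that $\left\langle b,a\right\rangle\in R^{-1}\leftrightarrow\left\langle a,b\right\rangle\in R$ is provable.

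For the ``small'' part of item~(2): when $R=\left\Vert\widetilde{R}\right\Vert$ for a closed \emph{term} $\widetilde{R}$, I would define $R^{-1}$ by the \emph{set} term $\setin z{\exists y\exists x\left(\left\langle x,y\right\rangle\check{\in}\widetilde{R}\wedge z=\left\langle y,x\right\rangle\right)}$; here $\left\langle x,y\right\rangle\check{\in}\widetilde{R}\succ\left\{x,y\right\}$ by Lemma~\ref{prop:pairs}(2) (as $Fv\left(\widetilde{R}\right)=\emptyset$) and $z=\left\langle y,x\right\rangle\succ\left\{z\right\}$ by the clause of Definition~\ref{def: safety} giving $x=t\succ\left\{x\right\}$ when $x\notin Fv\left(t\right)$, so the conjunction is $\succ\left\{x,y,z\right\}$ and the two quantifiers leave $\succ\left\{z\right\}$ --- a legal set term, so $R^{-1}$ is a $\i$-set. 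For item~(3) I would simply invoke Proposition~\ref{pro:classes}: since $Z$ and $U$ are $\i$-classes so is $Z\times U$, hence so is $R\cap\left(Z\times U\right)$, defined explicitly by $\class z{z\in\bar{R}\wedge z\in\overline{Z\times U}}$ (a conjunction of two $\succ\emptyset$ formulas, hence $\succ\emptyset$); since $R\cap\left(Z\times U\right)\subseteq Z\times U$ this is a $\i$-relation from $Z$ to $U$, whose characterizing property $\left\langle a,b\right\rangle\in R\cap\left(Z\times U\right)\leftrightarrow\left(\left\langle a,b\right\rangle\in R\wedge a\in Z\wedge b\in U\right)$ is provable using Lemma~\ref{prop:pairs}.

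The only real obstacle is the safety bookkeeping in item~(2): one has to choose the ``directed'' form of each equality/membership abbreviation ($=$ versus $\check{=}$, $\in$ versus $\check{\in}$) so that the $\wedge$-clause and the $\exists$-clause of $\succ$ actually apply --- note that the class-term and the set-term versions of $R^{-1}$ differ precisely in this choice --- and one must know that $\succ\emptyset$ is preserved when a defining formula of a class is substituted into another term. Everything else is a routine unwinding of the semantics of Definitions~\ref{Norm} and~\ref{Norm-class}.
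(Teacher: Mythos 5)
Your proof is correct and follows essentially the same route as the paper's: the same $Dom\left(R\right)\times Im\left(R\right)$ argument via Proposition~\ref{prop:Basic setclass}(3) for item (1), the same class term and set term for $R^{-1}$ in item (2) (differing only in the order of conjuncts and quantifiers), and the same appeal to Proposition~\ref{pro:classes} for item (3). Your explicit safety bookkeeping for the set term defining $R^{-1}$ --- choosing $\check{\in}$ so that the existentially quantified variables actually belong to the safety set before being discharged by the $\exists$-clause --- is, if anything, slightly more careful than the paper's own wording.
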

\begin{proof}\leavevmode
\begin{enumerate}
\item $(\Rightarrow)$ If $R$ is a $\i$-set, then $\exists y.\left\langle x,y\right\rangle \check{\in}\widetilde{R}\safe\left\{ x\right\} $
and $\exists x.\left\langle x,y\right\rangle \check{\in}\widetilde{R}\safe\left\{ y\right\} $.
Thus, $Dom\left(R\right)$ is defined by $\setin x{\exists y.\left\langle x,y\right\rangle \check{\in}\widetilde{R}}$
and $Im\left(R\right)$ by $\setin y{\exists x.\left\langle x,y\right\rangle \check{\in}\widetilde{R}}$.\\
 $(\Leftarrow)$ If $Dom\left(R\right)$ and $Im\left(R\right)$ are
$\i$-sets, then $Dom\left(R\right)\times Im\left(R\right)$ is a
$\i$-set as $\times$ is a rudimentary function. Since $R$ is a
$\i$-class such that $R\subseteq Dom\left(R\right)\times Im\left(R\right)$,
Prop. \ref{prop:Basic setclass} entails that $R$ is a $\i$-set.
\item Since $R$ is a $\i$-class, we can define $\overline{R^{-1}}=\class z{\exists x\exists y\left(\left\langle x,y\right\rangle \in\bar{R}\wedge z\check{=}\left\langle y,x\right\rangle \right)}$.
Now, $\exists x\exists y\left(\left\langle x,y\right\rangle \in\bar{R}\wedge z\check{=}\left\langle y,x\right\rangle \right)\safe\emptyset$,
as $z\check{=}\left\langle y,x\right\rangle \safe\left\{ x,y\right\} $
and $\left\langle x,y\right\rangle \in\bar{R}\safe\emptyset$. It
is standard to prove in $RST_{HF}^m$ properties such as $\left\langle x,y\right\rangle \in R\leftrightarrow\left\langle y,x\right\rangle \in R^{-1}$
and $\left(R^{-1}\right)^{-1}=R$. If $R$ is a $\i$-set, $R^{-1}$
can be defined by $\setin z{\exists x\exists y\left(\left\langle x,y\right\rangle \in\widetilde{R}\wedge z=\left\langle y,x\right\rangle \right)}$,
hence $R^{-1}$ is a $\i$-set. 
\item Surely $R\cap\left(Z\times U\right)\subseteq Z\times U.$ By Prop.\ref{pro:classes},
since $R,Z,U$ are $\i$-classes, we have that $R\cap\left(Z\times U\right)$
is a $\i$-class. \qedhere
\end{enumerate}
\end{proof}

\begin{prop}
\label{prop:standard func}A $\i$-set is a function according to
the standard mathematical definition (a single-valued relation) iff it is a small $\safe$-function. 
\end{prop}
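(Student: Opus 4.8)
The plan is to prove the two directions of the biconditional separately, the converse being the substantive one; throughout, all notions are taken relative to an arbitrary universe $W$. For the easy direction, suppose $F$ is a small $\safe$-function. By Definition~\ref{def:function} there is a function term $\lambda x\in\mathsf{T}.t$ with $Fv(t)\subseteq\{x\}$ such that $F=\left\Vert \lambda x\in\mathsf{T}.t\right\Vert $, and $F$ is in addition a $\safe$-set. Unfolding the abbreviation $\lambda x\in\mathsf{T}.t=\class z{\exists x\exists y\left(z\check{=}\left\langle x,y\right\rangle \wedge x\in\mathsf{T}\wedge y=t\right)}$ and using that $W$ models $RST_{HF}^m$ (Proposition~\ref{prop:TranRud}), so that $\check{=}$ is interpreted as $=$, one sees that $F$ consists exactly of the ordered pairs $\left\langle a,b\right\rangle $ with $a\in\left\Vert \mathsf{T}\right\Vert $ and $b=\inwa{t}^{W}$. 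Hence every member of $F$ is an ordered pair, and if $\left\langle a,b_{1}\right\rangle ,\left\langle a,b_{2}\right\rangle \in F$ then $b_{1}=\inwa{t}^{W}=b_{2}$, so $F$ is single-valued; being also a $\safe$-set, $F$ is a function in the standard sense. (This direction is essentially the remark following Definition~\ref{def:function}.)

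For the converse, assume $F$ is a $\safe$-set which is a single-valued relation; then $F\in W$. First I would check that $X:=Dom(F)$ is a $\safe$-set, witnessed by the closed term $Dom(\widetilde{F})$ of Lemma~\ref{lem:example}; by Proposition~\ref{prop:Basic setclass}(1) it is then also a $\safe$-class, defined by the closed class term $\mathsf{T}:=\class x{x\in\widetilde{Dom(F)}}$. Next I would produce a term representing $F$ on $X$: take $t:=\iota y.\left(\left\langle x,y\right\rangle \check{\in}\widetilde{F}\right)$. By Lemma~\ref{prop:pairs}(2) we have $\left\langle x,y\right\rangle \check{\in}\widetilde{F}\safe\{y\}$ (since $y\notin Fv(\widetilde{F})$), so by Lemma~\ref{lem:example} this $\iota$-term is legal and $Fv(t)\subseteq\{x\}$. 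The key computation is that for every $a\in Dom(F)$ single-valuedness provides a unique $b_{a}$ with $\left\langle a,b_{a}\right\rangle \in F$, so (again using $W\models RST_{HF}^m$, whence $\check{\in}$ is interpreted as $\in$) the collection $\inwa{\setin y{\left\langle x,y\right\rangle \check{\in}\widetilde{F}}}^{W}$ is the singleton $\left\{ b_{a}\right\} $, and therefore $\inwa{t}^{W}=\cup\left\{ b_{a}\right\} =b_{a}$; while for $a\notin Dom(F)$ the term $t$ returns $\emptyset$. Substituting this into the unfolding of $\lambda x\in\mathsf{T}.t$ from the previous paragraph yields $\left\Vert \lambda x\in\mathsf{T}.t\right\Vert =\left\{ \left\langle a,b_{a}\right\rangle \mid a\in Dom(F)\right\} =F$, the last equality holding because $F$, being a single-valued set of ordered pairs, is precisely this set. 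Thus $F$ is a $\safe$-function on the $\safe$-class $Dom(F)$, and since $F$ is a $\safe$-set it is small.

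The routine parts are the unfolding of the function-term and $\iota$-term abbreviations, the identification of $Dom(F)$ as a $\safe$-set, and the safety bookkeeping for the pairing formulas $\check{=}$ and $\check{\in}$, all of which is supplied by Lemmas~\ref{lem:example} and~\ref{prop:pairs}. The one step demanding real care — and the sole place where single-valuedness of $F$ is used — is the key computation: one must verify that $\setin y{\left\langle x,y\right\rangle \check{\in}\widetilde{F}}$ denotes precisely the singleton $\left\{ b_{a}\right\} $, and not some larger collection, for each $a\in Dom(F)$, so that applying $\cup$ recovers $b_{a}$; and that for $a\notin Dom(F)$ the harmless value $\emptyset$ that $t$ then returns is discarded by the restriction to $\mathsf{T}$ in $\lambda x\in\mathsf{T}.t$, so that $F$ comes out exactly right.
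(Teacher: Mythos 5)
Your proof is correct and follows essentially the same route as the paper: the substantive direction is handled by representing the given single-valued $\safe$-set $F$ on the $\safe$-set $Dom(F)$ via the term $\iota y.\left\langle x,y\right\rangle \check{\in}\widetilde{F}$, whose legality rests on $\left\langle x,y\right\rangle \check{\in}\widetilde{F}\safe\{y\}$ and whose correctness rests on single-valuedness, while the other direction is immediate from the shape of a function term. Your write-up merely adds some explicit semantic unfolding (and the small observation that $Dom(F)$, being a $\safe$-set, yields a closed class term $\mathsf{T}$) that the paper leaves implicit.
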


\begin{proof}
Let $A$ be a $\i$-set which is a relation that satisfies the functionality
condition. Since $A$ is a \emph{$\i$}-set, there is a closed term
$\widetilde{A}$ that defines it. $A$ is a $\i$\emph{-}function
on the\emph{ }$\i$-set $Dom\left(A\right)$ since the term $t=\iota y.\left\langle x,y\right\rangle \check{\in}\widetilde{A}$
represents it (see Lemma \ref{lem:example}). The term $t$ is legal
and it represents $A$ since $\left\langle x,y\right\rangle \check{\in}\widetilde{A}\safe\left\{ y\right\} $
and $A$ satisfies the functionality condition. The converse is trivial,
since for every small $\safe$-function there is a term representing
it, and thus the functionality condition clearly holds by the equality
axioms of $FOL$.
\end{proof}

\noindent \emph{Notation}. Let $F=\left\Vert \lambda x\in\bar{X}.t\right\Vert $
be a \emph{$\i$}-function.  We employ standard $\beta$-reduction
for $\lambda$ terms. Thus, we write $F\left(s\right)$ for $t\left[ \nicefrac{s}{x}\right] $ if $s$ is  free for $x$ in
$t$.
Hence $F\left(s\right)=y$ stands for $t\left [ \nicefrac{s}{x}\right] =y$,
and so if $y\notin Fv\left[t\right]\cup Fv\left[s\right]\setminus\left\{ x\right\} $,
then $F\left(s\right)=y\safe\left\{ y\right\} $. 
\begin{prop}
[Replacement axiom in class form] \label{prop:replacement}
Let $F$ be a \emph{$\i$}-function on a \emph{$\i$}-class $X$.
Then for every $\i$-set $A\subseteq X$, $F\left[A\right]=\left\{ F\left(a\right)\,|\,a\in A\right\} $
is a \emph{$\i$}-set.
\end{prop}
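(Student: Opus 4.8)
The plan is to exhibit an explicit set term that defines $F[A]$ and then check that it is safe with respect to the empty set of variables (or rather, that it is a legal closed set term once we substitute in the defining closed terms for $A$ and $X$). First I would unfold the hypotheses: since $F$ is a $\i$-function on the $\i$-class $X$, there is a function term $\lambda x\in\mathsf{T}.t$ with $\|\mathsf{T}\| = X$, $Fv(t)\subseteq\{x\}$, and $F = \|\lambda x\in\mathsf{T}.t\|$; and since $A$ is a $\i$-set with $A\subseteq X$, there is a closed term $\widetilde{A}$ defining it. The natural candidate term for $F[A]$ is then $\setin y{\exists x\left(x\in\widetilde{A}\wedge y=t\right)}$, i.e., the image of $\widetilde{A}$ under the operation represented by $t$; note this is essentially the term $\setin t{x\in s}$ from Lemma~\ref{lem:example} with $s$ instantiated to $\widetilde{A}$.

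The key step is to verify that this term is legal, i.e., that $\exists x\left(x\in\widetilde{A}\wedge y=t\right)\saferst\{y\}$. This follows from the clauses of Definition~\ref{def: safety}: $x\in\widetilde{A}\saferst\{x\}$ since $\widetilde{A}$ is closed (so $x\notin Fv(\widetilde{A})$) and this is one of the designated safe atomic forms; $y=t\saferst\{y\}$ since $y\notin Fv(t)$ (as $Fv(t)\subseteq\{x\}$, and we may assume $y$ is chosen fresh); then the conjunction clause gives $x\in\widetilde{A}\wedge y=t\saferst\{x,y\}$ because $\{y\}\cap Fv(x\in\widetilde{A}) = \emptyset$; and finally the existential clause removes $x$, yielding $\saferst\{y\}$. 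Hence $\setin y{\exists x\left(x\in\widetilde{A}\wedge y=t\right)}$ is a genuine closed set term of $\lng$, so by Theorem~\ref{prop:termdefinesset} its interpretation in any universe $W$ is an element of $W$, i.e., a $\i$-set.

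It remains to check that this term actually defines $F[A]$ in every universe $W$. Unwinding Definition~\ref{Norm}, $\|\setin y{\exists x(x\in\widetilde{A}\wedge y=t)}\|^{W}$ is the set of $b\in W$ for which there is $a\in W$ with $a\in A$ and $b = \|t\|_{v[x:=a]}$; by the standard $\beta$-reduction notation for $\lambda$-terms recalled just before the statement, $\|t\|_{v[x:=a]} = F(a)$, so this set is exactly $\{F(a)\mid a\in A\}= F[A]$. The main obstacle is essentially bookkeeping: ensuring the variable $y$ is genuinely fresh for $t$ and for $\widetilde{A}$ so that the safety derivation goes through, and matching the informal $\beta$-reduction notation $F(a)$ with the semantic clause for substitution. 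Neither presents a real difficulty; the one subtlety worth a sentence is that we use $A\subseteq X$ only implicitly — it guarantees that $t$ is defined (i.e., $F$ is applicable) at every $a\in A$, which is what makes $\{F(a)\mid a\in A\}$ the intended collection.
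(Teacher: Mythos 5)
Your proposal is correct and follows essentially the same route as the paper, whose entire proof is the single term $\setin y{\exists a\in\widetilde{A}.F\left(a\right)=y}$ — identical to your candidate up to renaming the bound variable and the orientation of the equality. Your added safety derivation and the semantic check are just the details the paper leaves implicit.
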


\begin{proof}
The term $\setin y{\exists a\in\widetilde{A}.F\left(a\right)=y}$
defines $F\left[A\right]$. 
\end{proof}

\noindent Below is a natural generalization of Def. \ref{def:function}
to functions of several variables. 
\begin{lem}
\label{lem:Functions Severl variables}If \textup{$X_{1},...,X_{n}$
}\emph{are $\i$}-classes and $t$ is a term s.t. 
$Fv\left(t\right)\subseteq\left\{ x_{1},...,x_{n}\right\} $,
then $F=\left\Vert \lambda x_{1}\in\bar{X_{1}},...,x_{n}\in\bar{X_{n}}.t\right\Vert $
is available in $RST_{HF}^m$ as a\emph{ }$\i$\emph{-}function
on\emph{ }\textup{$X_{1}\times...\times X_{n}$. }\textup{\emph{(Here}}
$\lambda x_{1}\in\bar{X_{1}},...,x_{n}\in\bar{X_{n}}.t$ abbreviates
$\class{\left\langle \left\langle x_{1},...,x_{n}\right\rangle ,t\right\rangle }{\left\langle x_{1},...,x_{n}\right\rangle \in\bar{X_{1}}\times...\times\bar{X_{n}}}$\textup{\emph{).}}
\end{lem}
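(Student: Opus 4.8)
The plan is to reduce the multi-variable case to the single-variable case already established in Definition \ref{def:function}, by treating the tuple $\langle x_1,\ldots,x_n\rangle$ as a single bound variable ranging over the product class. First I would observe that, by Proposition \ref{pro:classes} applied $n-1$ times, the collection $X_1\times\cdots\times X_n$ is a $\i$-class, and a closed class term defining it is $\overline{X_1}\times\cdots\times\overline{X_n}$ (using the iterated product notation from Lemma \ref{lem:example} and the $\check{\in}/\check{=}$ pairing formulas of Lemma \ref{prop:pairs}). Call this closed class term $\mathsf{T}$; so $X_1\times\cdots\times X_n = \|\mathsf{T}\|$ in every universe $W$.

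Next I would unwind the stated abbreviation. By definition, $\lambda x_1\in\overline{X_1},\ldots,x_n\in\overline{X_n}.t$ abbreviates the class term $\class{\left\langle \left\langle x_{1},...,x_{n}\right\rangle ,t\right\rangle }{\left\langle x_{1},...,x_{n}\right\rangle \in\bar{X_{1}}\times...\times\bar{X_{n}}}$. I would check that this is a legal class term, i.e. that the defining formula is $\succ\emptyset$: the body is $\exists z\,(z \check{=}\langle\langle x_1,\ldots,x_n\rangle,t\rangle \wedge \langle x_1,\ldots,x_n\rangle\,\check{\in}\,\mathsf{T})$-style, where $\langle x_1,\ldots,x_n\rangle\,\check{\in}\,\mathsf{T}$ is safe w.r.t.\ $\emptyset$ because $\mathsf{T}$ is a closed class term and membership in a class term is always $\succ\emptyset$, while the pairing equation pins down the value coordinate from $t$ via the $\check{=}$ formula of Lemma \ref{prop:pairs}. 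Then I would exhibit a term $t'$ with a single free variable $w$ and a closed class term $\mathsf{T}$ witnessing that $F$ is a $\i$-function on $X_1\times\cdots\times X_n$ in the sense of Def.\ \ref{def:function}: take $w$ fresh, and let $t' := t\left[\nicefrac{\pi_1(\ldots\pi_1(w))}{x_1},\ldots,\nicefrac{\pi_2(w)}{x_n}\right]$, i.e.\ substitute for each $x_i$ the appropriate iterated projection of $w$ (these projections are available by Lemma \ref{lem:example}). Provided $Fv(t)\subseteq\{x_1,\ldots,x_n\}$ we get $Fv(t')\subseteq\{w\}$, and one checks that $\lambda w\in\mathsf{T}.t'$ defines exactly the same $\i$-class as $\lambda x_1\in\overline{X_1},\ldots,x_n\in\overline{X_n}.t$ in every universe, using that for $w=\langle a_1,\ldots,a_n\rangle\in\|\mathsf{T}\|$ the projections recover the $a_i$ and hence $t'$ evaluates to the value of $t$ at $(a_1,\ldots,a_n)$. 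This gives availability as a $\i$-function.

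Finally I would note that ``availability in $RST_{HF}^m$'' also requires the basic properties to be provable in the theory, not just semantically true in each universe. Here the relevant basic property is functionality together with the characterization of the graph, $\langle\langle a_1,\ldots,a_n\rangle,b\rangle \in F \leftrightarrow (\langle a_1,\ldots,a_n\rangle \in X_1\times\cdots\times X_n \wedge b = t(a_1,\ldots,a_n))$, which follows from the Comprehension Schema of Def.\ \ref{def:system} together with Lemma \ref{prop:pairs}(3) relating $\check{=}$ and $=$, exactly as in the single-variable case. So the proof is essentially: set up $\mathsf{T}$ via Proposition \ref{pro:classes}, encode the tuple as one variable via the projections of Lemma \ref{lem:example}, and invoke Def.\ \ref{def:function} and Lemma \ref{prop:pairs}. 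The main obstacle I anticipate is purely bookkeeping: verifying that the substitution of nested projections for the $x_i$ preserves safety and that the resulting class term provably coincides with the abbreviated one; this is routine given the pairing machinery of Lemma \ref{prop:pairs}, and no genuinely new idea beyond the single-variable construction is needed.
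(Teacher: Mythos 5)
The paper states this lemma without any proof, presenting it as a routine generalization of Definition~\ref{def:function}, and your argument correctly supplies exactly the intended justification: form a closed class term for $X_1\times\cdots\times X_n$ via Proposition~\ref{pro:classes}, check the unwound abbreviation is a legal class term using Lemma~\ref{prop:pairs}, and reduce to the single-variable case of Definition~\ref{def:function} by substituting iterated projections of a fresh variable $w$ for the $x_i$. This is correct and consistent with the paper's machinery, so there is nothing further to compare.
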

\begin{cor}
Every $HF$-rudimentary function is available in $RST_{HF}^m$ as
a \emph{$\i$}-function.
\end{cor}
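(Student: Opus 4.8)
The plan is to obtain this corollary by assembling the preceding results, the key observation being that the whole universe can itself be treated as a $\i$-class. First I would fix an $n$-ary $HF$-rudimentary function $f$. By Proposition~\ref{cor:Every-rudimentary-functionTERM} (which itself rests on Theorem~\ref{thm:Arnon-1-1}) there is a term $t$ of $\lng$ with $Fv(t)\subseteq\{x_1,\dots,x_n\}$ such that, in every universe $W$ and for every $\langle A_1,\dots,A_n\rangle\in W^n$, the value $\left\Vert t\right\Vert_{\left[x_1:=A_1,\dots,x_n:=A_n\right]}^{W}$ is precisely the $\i$-set $f(A_1,\dots,A_n)$.

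Next I would note that in every universe $W$ the collection $W$ is a $\i$-class: since $x=x$ is atomic we have $x=x\succ\emptyset$, so $\class{x}{x=x}$ is a legal closed class term, and $\left\Vert\class{x}{x=x}\right\Vert^{W}=W$. Hence the same closed class term $\class{x_i}{x_i=x_i}$ can serve as $\bar{X_i}$ for $X_i=W$, uniformly across all universes.

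Then I would invoke Lemma~\ref{lem:Functions Severl variables} with $X_1=\dots=X_n=W$ and with the term $t$ from the first step. It yields that
\[
F:=\left\Vert\lambda x_1\in\class{x_1}{x_1=x_1},\dots,x_n\in\class{x_n}{x_n=x_n}.\,t\right\Vert
\]
is available in $RST_{HF}^m$ as a $\i$-function on $W\times\dots\times W=W^{n}$. Finally, unwinding the abbreviation for the multivariable $\lambda$-term and applying the $\beta$-reduction convention (as in the Notation preceding Proposition~\ref{prop:replacement}), one gets $F(\langle A_1,\dots,A_n\rangle)=\left\Vert t\right\Vert_{\left[x_1:=A_1,\dots,x_n:=A_n\right]}^{W}=f(A_1,\dots,A_n)$ for every tuple in $W^{n}$, so $F$ coincides with $f$; since this holds in every universe with the same defining class term, $f$ is available in $RST_{HF}^m$ as a $\i$-function.

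There is essentially no real obstacle here beyond bookkeeping: the one thing to double-check is that the ``basic properties'' of $f$ qua $\i$-function --- that its domain is $W^{n}$ and that it returns the prescribed values --- are provable in $RST_{HF}^m$, but this is exactly what Lemma~\ref{lem:Functions Severl variables} already delivers, together with the $RST_{HF}^m$-provability of the defining equation for $t$ coming out of Theorem~\ref{thm:Arnon-1-1}. So the whole proof reduces to identifying the universe as a $\i$-class and feeding it, alongside the rudimentary-function term, into the multivariable $\lambda$-abstraction lemma.
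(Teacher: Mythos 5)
Your proposal is correct and follows exactly the route the paper intends: the corollary is stated without proof precisely because it is the immediate combination of Proposition~\ref{cor:Every-rudimentary-functionTERM} (a term $t$ computing the $HF$-rudimentary function) with Lemma~\ref{lem:Functions Severl variables} (multivariable $\lambda$-abstraction over $\i$-classes), instantiated with the universe itself as each $X_i$. Your explicit observation that $W$ is a $\i$-class via the class term $\class x{x=x}$ (legal since $x=x$ is atomic, hence safe w.r.t.\ $\emptyset$) is exactly the small piece of bookkeeping the paper leaves implicit.
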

\begin{prop}
\label{prop:functions}Let $F$ be a \emph{$\i$}-function on a \emph{$\i$}-class
$X$. 
\begin{enumerate}
\item \label{enu:-is-small}$F$ is small iff $X$ is a \emph{$\i$}-set. 
\item If $Y_{0}$ is a \emph{$\i$}-class, then $F^{-1}\left[Y_{0}\right]=\left\{ a\in X\mid F\left(a\right)\in Y_{0}\right\} $
is a \emph{$\i$}-class. If $F$ is small, then $F^{-1}\left[Y_{0}\right]$
is a \emph{$\i$}-set. 
\item If $X_{0}\subseteq X$ is a \emph{$\i$}-class, then $F\upharpoonright_{X_{0}}$
is available in $RST_{HF}^m$ as a \emph{$\i$}-function. 
\item $G\circ F$ is available in $RST_{HF}^m$ as a \emph{$\i$}-function
on $X$, in case $G$ is a \emph{$\i$}-function on a \emph{$\i$}-class
$Y$ and $Im(F)\subseteq Y$.
\item If $G$ is a \emph{$\i$}-function on a \emph{$\i$}-class $Y$ and
$F$ and $G$ agree on $X\cap Y$, then $G\cup F$ is available in
$RST_{HF}^m$ as a \emph{$\i$}-function on $X\cup Y$. 
\item If $Z$ is a \emph{$\i$}-class then the identity on $Z$ and any
constant function on $Z$ are available in $RST_{HF}^m$ as \emph{$\i$}-functions.
\end{enumerate}
\end{prop}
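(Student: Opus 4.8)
The plan is to dispatch each of the six items by writing down an explicit function term or class term, checking that it is admissible in the sense of Definition~\ref{def:function} and Definition~\ref{def: safety} (that is, the relevant body is safe with respect to the prescribed set of variables), and noting that in each case the ``basic properties'' amount to routine applications of set-theoretic $\beta$-reduction, together with Lemmas~\ref{lem:example} and~\ref{prop:pairs}, all provable in $RST_{HF}^m$. Fix a universe $W$, and let $t$ be a term with $Fv(t)\subseteq\{x\}$ representing $F$, so that $F=\left\Vert\lambda x\in\bar X.t\right\Vert$ with $X=\left\Vert\bar X\right\Vert$.

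Items (1), (3) and (6) are bookkeeping. For (1): if $X$ is a $\i$-set then $\widetilde X$ is a closed term, so $\lambda x\in\widetilde X.t$ is a legal set term (Lemma~\ref{lem:example}) that defines $F$, whence $F$ is a $\i$-set, i.e.\ small; conversely, if $F$ is small then $X=Dom(F)$, and since $Dom$ is $HF$-rudimentary, $Dom(\widetilde F)$ is a closed term defining $X$. For (3): $F\upharpoonright_{X_0}$ is $\left\Vert\lambda x\in\bar{X_0}.t\right\Vert$, a $\i$-function on $X_0$, and $(F\upharpoonright_{X_0})(a)=F(a)$ for $a\in X_0$ holds by $\beta$-reduction. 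For (6): the identity on $Z$ is $\left\Vert\lambda x\in\bar Z.x\right\Vert$, and a constant function on $Z$ with value a $\i$-set $b$ is $\left\Vert\lambda x\in\bar Z.\widetilde b\right\Vert$ (a parameter constant being covered by the free-variable/schema convention underlying ``available in $RST_{HF}^m$''); in each case the body has free variables contained in $\{x\}$, and the characteristic equations are immediate.

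For (2): $F^{-1}[Y_0]$ is defined by the class term $\class x{x\in\bar X\wedge F(x)\in\bar{Y_0}}$, where $F(x)$ denotes the representing term $t$. This is legal because $x\in\bar X\succ\emptyset$ and, membership in a class term being safe w.r.t.\ $\emptyset$ by the very construction of class terms, $F(x)\in\bar{Y_0}\succ\emptyset$; hence the conjunction is $\succ\emptyset$, so $F^{-1}[Y_0]$ is a $\i$-class. If moreover $F$ is small, then $X$ is a $\i$-set by (1), so $F^{-1}[Y_0]\subseteq X$ is a $\i$-set by Proposition~\ref{prop:Basic setclass}(3). For (4), let $s$ with $Fv(s)\subseteq\{x'\}$ represent $G$; after a harmless renaming so that $x'$ is free for $t$, $G\circ F=\left\Vert\lambda x\in\bar X.s[\nicefrac{t}{x'}]\right\Vert$, which is legal since $Fv(s[\nicefrac{t}{x'}])\subseteq Fv(t)\subseteq\{x\}$, and $(G\circ F)(a)=G(F(a))$ — meaningful by the hypothesis $Im(F)\subseteq Y$ — follows by two $\beta$-reductions.

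The one genuinely constructive step is (5). With $s$ representing $G$ as above, set
\[
\psi\ :=\ \bigl(x\in\bar X\wedge w=F(x)\bigr)\vee\bigl(x\in\bar Y\wedge w=G(x)\bigr),
\]
where $w$ is fresh and $F(x)$, $G(x)$ abbreviate the representing terms of $F$ and $G$. Since $w\notin Fv(F(x))\cup Fv(G(x))\subseteq\{x\}$ we have $w=F(x)\succ\{w\}$ and $w=G(x)\succ\{w\}$; as $x\in\bar X\succ\emptyset$ and $\{w\}\cap Fv(x\in\bar X)=\emptyset$ (likewise for $\bar Y$), each disjunct is $\succ\{w\}$, hence $\psi\succ\{w\}$, so $\iota w.\psi$ is a legal term with $Fv(\iota w.\psi)\subseteq\{x\}$, and $\left\Vert\lambda x\in\overline{X\cup Y}.\iota w.\psi\right\Vert$ (with $\overline{X\cup Y}$ a $\i$-class by Proposition~\ref{pro:classes}) is a $\i$-function on $X\cup Y$. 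It equals $G\cup F$: for $a\in X\setminus Y$ only the first disjunct can hold and it forces the value $F(a)$; for $a\in Y\setminus X$ symmetrically the value is $G(a)$; and for $a\in X\cap Y$ both disjuncts force the common value $F(a)=G(a)$, so (using $\cup\{c\}=c$) $\iota w.\psi$ picks it out in every case. That the resulting class is single-valued, extends both $F$ and $G$, and has domain $X\cup Y$ is then provable in $RST_{HF}^m$ by unfolding $\iota w.\psi$ and invoking the agreement hypothesis. I expect the main obstacle to be precisely this construction in (5) — choosing the right $\iota$-term and verifying its defining equations — together with the pervasive but routine obligation of checking that the ``basic properties'' in (3)--(6) are formally derivable in $RST_{HF}^m$; everything else is direct manipulation of the safety clauses.
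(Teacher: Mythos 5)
Your proposal is correct and matches the paper's proof essentially item by item: the same term constructions appear for (1)--(4) and (6), and your treatment of (5) via $\left\Vert\lambda x\in\bar X\cup\bar Y.\iota w.\psi\right\Vert$ is the paper's construction up to one cosmetic difference — the paper guards the second disjunct with $x\in\bar Y-\bar X$ so the disjuncts are syntactically exclusive, whereas you use $x\in\bar Y$ and invoke the agreement hypothesis (together with $\cup\{c\}=c$) to see that the $\iota$-term is still single-valued on $X\cap Y$. Both variants are fine; there is no gap.
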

\begin{proof}\leavevmode
\begin{enumerate}
\item $(\Rightarrow)$ If $F$ is a $\i$-set, then $Dom\left(F\right)=X$
is also a $\i$-set, since $Dom$ is  rudimentary.\\
 $(\Leftarrow)$ Suppose $t$ represents $F$. If $X$
is a $\i$-set, then $F=\left\Vert \lambda x\in\widetilde{X}.t\right\Vert $
which is a $\i$-set.
\item $\overline{F^{-1}\left[Y_{0}\right]}=\class a{a\in\bar{X}\wedge F\left(a\right)\in\bar{Y_{0}}}$
. Since $a\in\bar{X}\wedge F\left(a\right)\in\bar{Y_{0}}\safe\emptyset$,
we get that $F^{-1}\left[Y_{0}\right]$ is a $\i$-class. If $F$
is small, then by (\ref{enu:-is-small}) we have that $X$ is a $\i$-set.
The fact that $F^{-1}\left[Y_{0}\right]\subseteq X$ entails that
$F^{-1}\left[Y_{0}\right]$ is a $\i$-set. 
\item If $t$ is a term that represents $F$, it also represents $F\upharpoonright_{X_{0}}$.
\item Denote by $t_{F},t_{G}$ terms that represent the $\i$-functions
$F,G$ respectively. Thus, $G\circ F=\left\Vert \lambda x\in\overline{X}.t_{G}\left\{ \frac{t_{F}\left(x\right)}{x}\right\} \right\Vert $.
It is easy to see that standard properties, such as the associativity
of $\circ$, are provable in $RST_{HF}^m$. 
\item Denote by $t_{F},t_{G}$ terms that represent the $\i$-functions
$F,G$ respectively. Then:
$$F\cup G=\left\Vert \lambda x\in\bar{X}\cup\bar{Y}.\iota y.\left(x\in\bar{X}\wedge y=t_{F}\right)\vee\left(x\in\bar{Y}-\bar{X}\wedge y=t_{G}\right)\right\Vert$$ 
Since each of the disjuncts is safe w.r.t $\left\{ y\right\} $, we
get that the term is valid. It is easy to verify that in $RST_{HF}^m$
basic properties, such as $\forall x\in\bar{X}$.$\overline{G\cup F}\left(x\right)=\bar{F}\left(x\right)$,
are provable.
\item $id_{Z}=\left\Vert \lambda z\in\bar{Z}.z\right\Vert $, and for any
$A\in J_{2}$, $const_{A}=\left\Vert \lambda z\in\bar{Z}.\tilde{A}\right\Vert $.
Proving basic properties such as $\forall x,y\in\bar{Z}.const_{A}\left(x\right)=const_{A}\left(y\right)$
in $RST_{HF}^m$ is routine. \qedhere
\end{enumerate}
\end{proof}

\section{\label{sec:The-Natural-Numbers}The Natural Numbers}

We introduce the natural numbers by following their
standard construction in set theory:
$0:=\emptyset;\,\,\,\,\,\,n+1:=S\left(n\right)$,
where $S\left(x\right)=x\cup\left\{ x\right\} $. Obviously, 
each  $n\in\mathbb{N}$
is a \emph{$\i$}-set in any universe, and $\mathbb{N}$ (the set of natural numbers)
is contained in $\mathcal{HF}=J_1$. What is more, the property
of being a natural number is defined in any universe by the following formula:
$$ N(x) := \ 
\forall y\in x\cup\{x\}.y=\emptyset\vee\exists
w\in x.y=w\cup{ \{}w{ \}}$$
Note that this formula  has the same extension in any transitive set which includes
$\emptyset$ and is closed under the operation 
$\lambda x. x\cup\{x\}$.\footnote{This is a significant
improvement on \cite{Cohen2014QED}, in which another formula, 
$Ord\left(x\right)$,
has been used for characterizing $\mathbb{N}$
as $\{x\mid x\in HF\w Ord(x)\}$. However, $Ord\left(x\right)$ is 
actually true for all ordinals, and so it lacks
the strong absoluteness property that $N(x)$ has.}
It follows that from a semantic point of view it should be taken as
safe with respect to $x$. However, syntactically we have only that
$N(x)\succ\emptyset$ (that is: $N(x)$ is absolute), but not that 
$N(x)\succ\{x\}$. 
As a result, $\mathbb{N}$ is available in $RST^m$ 
(and its minimal model $J_1$) only as a proper $\safe$-class. 
In contrast,  $\mathbb{N}$ is available as a $\safe$-set 
in $RST_{HF}^m$,  
since it is definable  in all the universes (including of course
$J_2$) by the term:
$$\mathbb{\mbox{\ensuremath{\mathbb{\widetilde{N}}}}}:=\ \setin x{x\in HF\w N(x)} $$ 
Now we show that appropriate counterparts of Peano's axioms for $\mathbb{N}$ 
are provable in $RST_{HF}^m$. For this we need the following two lemmas.

\begin{lem}
\label{lem:HF-founded}
$\vdash_{RST_{HF}^m}\forall x\in H\!F\forall y. y\in x\imp  x\not\in y$.
\end{lem}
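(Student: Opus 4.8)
The statement to prove is $\vdash_{RST_{HF}^m}\forall x\in H\!F\,\forall y.\,y\in x\imp x\notin y$, i.e.\ that no hereditarily finite set $x$ is a member of any of its own members. The natural approach is induction on $HF$ via Lemma~\ref{lem:HF-1}(2), taking $\psi(x)$ to be the formula $\forall y.\,y\in x\imp x\notin y$. One must first check that $\psi\safe\emptyset$: the subformula $x\notin y$ is the negation of an atomic formula, hence safe w.r.t.\ $\emptyset$; $y\in x$ is atomic, hence safe w.r.t.\ $\emptyset$; their conjunction (after rewriting the implication as $\neg(y\in x\wedge\neg(x\notin y))$ per footnote~\ref{foot:forall}) is safe w.r.t.\ $\emptyset$; and prefixing $\forall y$ (an abbreviation for $\neg\exists y\neg$) keeps it safe w.r.t.\ $\emptyset$. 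So the induction principle of Lemma~\ref{lem:HF-1}(2) applies.

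For the base case one needs $\psi[\nicefrac{\emptyset}{x}]$, i.e.\ $\forall y.\,y\in\emptyset\imp\emptyset\notin y$, which holds vacuously since $y\in\emptyset$ is refutable (from $\emptyset:=\setin x{x\neq x}$ and Comprehension). For the induction step, assume $\psi(x)$ and $\psi[\nicefrac{y}{x}]$ (renaming the second bound variable to avoid clash; write the hypotheses as: for all $z$, $z\in x\imp x\notin z$, and for all $z$, $z\in y\imp y\notin z$), and prove $\psi[\nicefrac{x\cup\setfin y}{x}]$, i.e.\ for all $z$, if $z\in x\cup\setfin y$ then $x\cup\setfin y\notin z$. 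Fix such a $z$; by the definition of $\cup$ and $\setfin{\,}$, either $z\in x$ or $z=y$. Suppose for contradiction that $x\cup\setfin y\in z$. In the first case $z\in x$, so by the first induction hypothesis $x\notin z$; but from $x\cup\setfin y\in z$ we could derive a membership cycle — here one wants $x\cup\setfin y\in z\in x$ together with, say, that $x\in x\cup\setfin y$, to contradict the hypothesis $x\notin z$ only after one more link; the cleanest route is to note $x\in x\cup\setfin y\in z$, so $z$ has $x$ as a member of a member, and combine with $z\in x$. In the second case $z=y$, so $x\cup\setfin y\in y$; since $y\in x\cup\setfin y$ we again get a two-step cycle $y\in x\cup\setfin y\in y$, contradicting $\psi[\nicefrac{y}{x}]$ instantiated appropriately (with $y\in x\cup\setfin y$ giving, via the hypothesis applied to the member $x\cup\setfin y$... ) — so one derives a contradiction in each case.

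The main obstacle is that $RST_{HF}^m$ has \emph{no} $\in$-induction (as emphasized in the discussion after Def.~\ref{def: RST-HF}), so one cannot simply invoke well-foundedness; all the work must be funneled through the single available induction principle, Lemma~\ref{lem:HF-1}(2), and hence the inductive hypothesis gives information only about the ``one-step predecessors'' $x$ and $y$ of $x\cup\setfin y$, not about arbitrary members. The delicate point in the step case is therefore arranging the membership cycle so that it is refuted by exactly the two hypotheses $\psi(x)$ and $\psi[\nicefrac{y}{x}]$: from $x\cup\setfin y\in z$ with $z\in x\cup\setfin y$ one should first observe that this already says ``some member of $x\cup\setfin y$ (namely $z$) contains $x\cup\setfin y$'', then case on whether $z\in x$ or $z=y$ and push the cycle down one level to land inside $x$ or inside $y$, where the corresponding hypothesis (applied to $z$, which then has $x\cup\setfin y$ and hence $x$ resp.\ $y$ among the members of its members) yields the contradiction. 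Once the bookkeeping of these membership links is set up carefully, the rest is routine first-order reasoning from Extensionality, Comprehension, and Lemma~\ref{lem:example}.
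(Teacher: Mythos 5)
Your overall strategy (induction via Lemma~\ref{lem:HF-1}(2)) is the right one, and your base case and safety check are fine, but the induction step has a genuine gap. With the unstrengthened predicate $\psi(x):=\forall y.\,y\in x\imp x\notin y$, the hypotheses $\psi(x)$ and $\psi[\nicefrac{y}{x}]$ are too weak to handle the case $z\in x$: there you know $z\in x\cup\setfin y$ and (for contradiction) $x\cup\setfin y\in z$, while $\psi(x)$ only yields $x\notin z$ --- and nothing makes $x\cup\setfin y\in z$ conflict with that, since $x\cup\setfin y$ and $x$ are in general distinct sets. Your proposed repair rests on the claim ``$x\in x\cup\setfin y$'', which is false: $x\cup\setfin y$ contains the \emph{elements} of $x$ together with $y$, not $x$ itself. (The case $z=y$ does go through: $y\in x\cup\setfin y$ is true, and $\psi[\nicefrac{y}{x}]$ applied to $x\cup\setfin y\in y$ gives $y\notin x\cup\setfin y$, a contradiction; but that does not save the other case.) Since $RST_{HF}^m$ deliberately omits $\in$-induction, you cannot appeal to well-foundedness, and a circular configuration such as $y=\emptyset$, $x=\setfin z$, $z=\setfin{x\cup\setfin y}$ satisfies $\psi(x)$ and $\psi(y)$ while violating $\psi(x\cup\setfin y)$; so the unrestricted step $\forall x\forall y(\psi\wedge\psi[\nicefrac{y}{x}]\imp\psi[\nicefrac{x\cup\,\setfin y}{x}])$ required by Lemma~\ref{lem:HF-1}(2) is not derivable for your choice of $\psi$.

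The fix, which is what the paper does, is to strengthen the induction predicate by one level of membership: take $\psi(z):=\forall u\forall v.\,u\in z\wedge v\in u\imp u\notin v$, i.e.\ assert the no-two-cycle property for all \emph{members} of $z$ rather than for $z$ itself. Then in the step a member $u$ of $x\cup\setfin y$ is either a member of $x$ (so $\psi(x)$ applies to $u$ directly) or equal to $y$; in the latter case, for $v\in u=y$, supposing $u\in v$ lets you apply $\psi(y)$ to the member $v$ of $y$ and its member $u$, giving $v\notin u$ and contradicting $v\in u$. Having proved $\forall z\in HF.\psi$, you recover the stated lemma by noting that $\setfin x\in HF$ whenever $x\in HF$ and instantiating $\psi(\setfin x)$ with $u:=x$.
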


\begin{proof}
Let $\psi := \forall u\forall v. u\in z\w v\in u\imp u\not\in v$. 
We first show that  $\vdash_{RST_{HF}^m}\forall z\in H\!F\psi$.
By Lemma~\ref{lem:HF-1}(2), it suffices to show that 
$\ensuremath{\psi\left[\nicefrac{\emptyset}{z}\right]}$ and
$\forall x\forall y \left(\ensuremath{\psi\left[\nicefrac{x}{z}\right]}
\wedge\ensuremath{\psi\left[\nicefrac{y}{z}\right]}\imp
\ensuremath{\psi\left[\nicefrac{x\cup\;\setfin y}{z}\right]}
\right)$ are theorems of $RST_{HF}^m$. This is obvious for 
$\ensuremath{\psi\left[\nicefrac{\emptyset}{z}\right]}$.
To prove the other formula, we reason in $RST_{HF}^m$ as follows. Assume 
$\ensuremath{\psi\left[\nicefrac{x}{z}\right]}$ and
$\ensuremath{\psi\left[\nicefrac{y}{z}\right]}$. We show 
$\psi$ for $z=x\cup\;\setfin y$. So suppose
that $u\in z$ and $v\in u$. Then either $u\in x$ or $u=y$. In the first case
the assumptions $\ensuremath{\psi\left[\nicefrac{x}{z}\right]}$ and $v\in u$
implies that $u\not\in v$. In the second case $v\in y$, and so it follows
from the assumption $\ensuremath{\psi\left[\nicefrac{y}{z}\right]}$
that if $u\in v$ then $v\not\in u$, contradicting the assumption that $v\in u$.
Hence $u\not\in v$ in this case as well.

  To complete the proof of the lemma, let $x\in H\!F$. Then $\{x\}\in H\!F$
as well, and so $\ensuremath{\psi\left[\nicefrac{\{x\}}{z}\right]}$. Since
$x\in\{x\}$, this implies that $\forall y. y\in x\imp  x\not\in y$.
\end{proof}

\begin{lem}
\label{lem:HF-down}
$\vdash_{RST_{HF}^m}\forall x\in H\!F\forall y. y\subseteq x\imp  y\in H\!F$.
\end{lem}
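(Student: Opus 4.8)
The plan is to prove the slightly more flexible fact $\vdash_{RST_{HF}^m}\forall x\in HF\,\forall y.\ x\cap y\in HF$ and then read the lemma off: if $y\subseteq x$ then $x\cap y=y$ by Extensionality, hence $y\in HF$. The reason for routing through $x\cap y$ is that $x\cap y$ is a genuine term of $\lng$, so $x\in HF\wedge x\cap y\in HF$ is a conjunction of atomic formulas and in particular $\succ\emptyset$; by contrast the ``obvious'' induction invariant $\forall y(y\subseteq x\imp y\in HF)$ is \emph{not} $\succ\emptyset$ (its leading quantifier is bounded by $\subseteq$, which is only $\succ\emptyset$), so Lemma~\ref{lem:HF-1}(2) cannot be applied to it. Passing to $x\cap y$ removes this obstruction.

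Concretely, I would prove $\forall x\in HF\,\forall y.\ x\cap y\in HF$ by $HF$-induction (Lemma~\ref{lem:HF-1}(2)) on $x$, keeping $y$ as a free variable, with the invariant $\psi(x):=x\in HF\wedge x\cap y\in HF$. The base case is immediate: $\emptyset\cap y=\emptyset\in HF$. For the induction step, assume $\psi(u)$ and $\psi(v)$. Then $u\cup\setfin v\in HF$ by the second axiom for $HF$; and by distributivity $(u\cup\setfin v)\cap y=(u\cap y)\cup(\setfin v\cap y)$, where $u\cap y\in HF$ by the hypothesis on $u$, while $\setfin v\cap y$ is $\emptyset$ or $\setfin v$ according as $v\notin y$ or $v\in y$ (a classical case split), so it lies in $HF$ because $v\in HF$ (hypothesis on $v$) gives $\setfin v=\emptyset\cup\setfin v\in HF$; one further application of the second $HF$-axiom then yields $(u\cup\setfin v)\cap y\in HF$, i.e.\ $\psi(u\cup\setfin v)$. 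Generalising on $y$ completes the proof of the auxiliary fact, and the lemma follows as above.

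The single real obstacle is recognising that one should \emph{not} run the induction on the statement as written: not only does its natural invariant violate the safety requirement of Lemma~\ref{lem:HF-1}(2), but the corresponding induction step would also need the hypothesis for $y-\setfin v$ --- a different subset --- which a free-variable invariant cannot deliver. Replacing $y\in HF$ by the membership $x\cap y\in HF$ (equivalent to it when $y\subseteq x$) repairs both defects simultaneously, and what remains is a routine computation with $\cap$, $\cup$, Extensionality, and the two generating axioms for $HF$; in particular neither $\in$-induction nor Lemma~\ref{lem:HF-founded} is needed.
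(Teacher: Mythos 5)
Your proof is correct, but it takes a genuinely different route from the paper's. The paper runs the $HF$-induction of Lemma~\ref{lem:HF-1}(2) directly on the invariant $\psi(z):=\forall u.\,u\subseteq z\imp u\in HF$, and in the step decomposes an arbitrary $u\subseteq x\cup\setfin y$ as $v$ or $v\cup\setfin y$ for some $v\subseteq x$. You instead induct on $\psi(x):=x\in HF\wedge x\cap y\in HF$ with $y$ held as a free parameter, and recover the lemma at the end via $y\subseteq x\imp x\cap y=y$ (Extensionality). Your diagnosis of why the direct route is delicate is accurate on both counts: a free-variable invariant $y\subseteq x\imp y\in HF$ indeed cannot supply the induction hypothesis for the modified subset needed in the step, and the paper's quantified invariant --- unfolded per the official grammar, with $\forall$ as $\neg\exists$ and $u\subseteq z$ only $\succ\emptyset$ --- does not visibly satisfy the side condition $\psi\succ\emptyset$ of Lemma~\ref{lem:HF-1}(2), since $u\subseteq z\wedge\neg(u\in HF)$ is not $\succ\setfin{u}$. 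Your invariant is a conjunction of atomic formulas, hence transparently $\succ\emptyset$, so your argument is the more robust one with respect to the safety discipline; what the paper's version buys is only a slightly shorter derivation with no detour through $\cap$. The one point to tidy is a variable clash: the parameter you call $y$ must be renamed away from the $y$ that Lemma~\ref{lem:HF-1}(2) binds in its induction step. Otherwise the base case, the case split on $v\in y$, and the use of the second $HF$-axiom all go through exactly as you describe.
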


\begin{proof}
Let $\psi:=\forall u. u\subseteq z\imp u\in H\!F$. 
We have to show that  $\vdash_{RST_{HF}^m}\forall z\in H\!F\psi$.
By Lemma~\ref{lem:HF-1}(2), it suffices to show that
$\ensuremath{\psi\left[\nicefrac{\emptyset}{z}\right]}$ and
$\forall x\forall y \left(\ensuremath{\psi\left[\nicefrac{x}{z}\right]}
\wedge\ensuremath{\psi\left[\nicefrac{y}{z}\right]}\imp
\ensuremath{\psi\left[\nicefrac{x\cup\;\setfin y}{z}\right]}
\right)$ are theorems of $RST_{HF}^m$. This is obvious for 
$\ensuremath{\psi\left[\nicefrac{\emptyset}{z}\right]}$.
To prove the other formula, we reason in $RST_{HF}^m$ as follows. Assume
$\ensuremath{\psi\left[\nicefrac{x}{z}\right]}$ and
$\ensuremath{\psi\left[\nicefrac{y}{z}\right]}$. We show
$\psi$ for $z=x\cup\;\setfin y$. So suppose
that $u\subseteq z$. Then there exists $v\subseteq x$ such that either
$u=v$ or $u=v\cup\;\setfin y$. Now the assumption that 
$\ensuremath{\psi\left[\nicefrac{x}{z}\right]}$ implies that
$v\in HF$, while the assumption that
$\ensuremath{\psi\left[\nicefrac{y}{z}\right]}$ implies that $y\in HF$.
Hence Lemma~\ref{lem:HF-1}(1) entails that $v\cup\;\setfin y\in HF$.
It follows that in both cases $u\in HF$. 
\end{proof}

\begin{prop}
\label{prop:NP}
\ 
\be
\item $\vdash_{RST_{HF}^m}0\in\mathbb{\mbox{\ensuremath{\mathbb{\widetilde{N}}}}}$
\item $\vdash_{RST_{HF}^m} \forall x.S(x)\neq 0$
\item $\vdash_{RST_{HF}^m} \forall x. 
x\in\mathbb{\mbox{\ensuremath{\mathbb{\widetilde{N}}}}}\leftrightarrow
S(x)\in\mathbb{\mbox{\ensuremath{\mathbb{\widetilde{N}}}}}$ 
\item $\vdash_{RST_{HF}^m} \forall x\in 
\mathbb{\mbox{\ensuremath{\mathbb{\widetilde{N}}}}} \forall y\in
\mathbb{\mbox{\ensuremath{\mathbb{\widetilde{N}}}}}.S(x)=S(y)\imp x=y$
\ee
\end{prop}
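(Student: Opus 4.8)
The plan is to establish the four items in turn, reasoning inside $RST_{HF}^m$ and freely unfolding $t\in\widetilde{\mathbb{N}}$ to $t\in HF\wedge N(t)$. Items (1) and (2) are immediate. For (1), $\emptyset\in HF$ is an axiom, and $N(\emptyset)$ holds vacuously: $\emptyset\cup\setfin{\emptyset}=\setfin{\emptyset}$, whose only member is $\emptyset$, which satisfies the first disjunct. For (2), $x\in\setfin{x}$ gives $x\in x\cup\setfin{x}=S(x)$, so $S(x)$ has a member while $0=\emptyset$ has none; hence $S(x)\neq 0$.

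The only item requiring real care is (3). For the left-to-right direction, assume $x\in HF$ and $N(x)$. Then $S(x)=x\cup\setfin{x}\in HF$ by the closure axiom for $HF$ with both of its arguments taken to be $x$. To check $N(S(x))$, take any $y\in S(x)\cup\setfin{S(x)}$ and distinguish two cases: if $y\in S(x)=x\cup\setfin{x}$, apply $N(x)$ to $y$ to get either $y=\emptyset$ or a witness $w\in x$ with $y=w\cup\setfin{w}$, and note $w\in x\subseteq S(x)$; if $y=S(x)=x\cup\setfin{x}$, take the witness $w=x\in S(x)$. For the right-to-left direction, assume $S(x)\in HF$ and $N(S(x))$. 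Since $x\subseteq x\cup\setfin{x}=S(x)\in HF$, Lemma~\ref{lem:HF-down} gives $x\in HF$. To check $N(x)$, take $y\in x\cup\setfin{x}$; since $x\cup\setfin{x}=S(x)\subseteq S(x)\cup\setfin{S(x)}$, applying $N(S(x))$ yields either $y=\emptyset$ (done) or $y=w\cup\setfin{w}$ for some $w\in S(x)=x\cup\setfin{x}$, so $w\in x$ (done) or $w=x$; in the last case $y=x\cup\setfin{x}=S(x)$, and then $y\in S(x)$ forces $S(x)\in S(x)$, contradicting Lemma~\ref{lem:HF-founded} applied to the $HF$-element $S(x)$.

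For (4), assume $x,y\in\widetilde{\mathbb{N}}$ and $S(x)=S(y)$; in particular $x\in HF$. From $x\in x\cup\setfin{x}=y\cup\setfin{y}$ we get $x\in y$ or $x=y$, and symmetrically $y\in x$ or $y=x$. If $x\neq y$, then both $x\in y$ and $y\in x$ hold, but $y\in x$ together with $x\in HF$ gives $x\notin y$ by Lemma~\ref{lem:HF-founded}, a contradiction; hence $x=y$.

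The main obstacle, such as it is, lies in item (3), and its genuine content has already been packaged into the two preceding lemmas: the right-to-left direction uses well-foundedness of $\in$ over $HF$ (Lemma~\ref{lem:HF-founded}) to rule out $S(x)\in S(x)$, and downward closure of $HF$ under $\subseteq$ (Lemma~\ref{lem:HF-down}) to recover $x\in HF$ from $S(x)\in HF$; what remains is the routine case analysis above. Note in particular that, in keeping with the paper's stated aim, no $\in$-induction is used: the only inductive principle invoked is the $HF$-induction of Lemma~\ref{lem:HF-1}(2), and that only indirectly, via Lemmas~\ref{lem:HF-founded} and~\ref{lem:HF-down}.
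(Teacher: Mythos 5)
Your proof is correct and takes essentially the same approach as the paper's: both rest on Lemmas~\ref{lem:HF-founded} and~\ref{lem:HF-down}, with the only (immaterial) difference being that in the right-to-left direction of (3) you split on $w\in x$ versus $w=x$ and derive $S(x)\in S(x)$, whereas the paper splits on $y=x$ versus $y\in x$ — both reaching the same contradiction via well-foundedness over $HF$.
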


\begin{proof}
The first two items, and the fact that $RST_{HF}^m$ proves that
if $x\in\mathbb{\mbox{\ensuremath{\mathbb{\widetilde{N}}}}}$
then $S(x)\in\mathbb{\mbox{\ensuremath{\mathbb{\widetilde{N}}}}}$,
are very easy, and are left to the reader.

 To prove the other direction of (3), assume that 
$S(x)\in\mathbb{\mbox{\ensuremath{\mathbb{\widetilde{N}}}}}$. We show
that $x\in\mathbb{\mbox{\ensuremath{\mathbb{\widetilde{N}}}}}$.
Since $S(x)\in\mathbb{\mbox{\ensuremath{\mathbb{\widetilde{N}}}}}$,
by definition $S(x)\in HF$. Hence Lemma~\ref{lem:HF-down} implies 
that $x\in H\!F$  as well. To show
that also $N(x)$, let $y\in x\cup\{x\}$,
and suppose that $y\neq\emptyset$. These assumptions about $y$, and the assumption that
$S(x)\in\mathbb{\mbox{\ensuremath{\mathbb{\widetilde{N}}}}}$,  together imply that
there is $w\in S(x)$ such that $y=w\cup\{w\}$. It remains to show
that actually $w\in x$. This is obvious in case $y=x$, since $w\in y$
(because $y=w\cup\{w\}$). If $y\in x$ then $x\not\in y$ by Lemma~\ref{prop:NP} 
(because $x\in HF$), while the assumption that 
$y=w\cup\{w\}$ implies that $w\in y$. It follows that $w\neq x$.
Since $w\in S(x)$, this implies that $w\in x$ in this case too.

Finally, to prove item (4), we show that 
$\vdash_{RST_{HF}^m} \forall x\in HF \forall y.S(x)=S(y)\imp x=y$.
So suppose  that $x\in HF$ and $S(x)=S(y)$. Assume for contradiction that $x\neq y$.
Since $S(x)=S(y)$, this implies that both $x\in y$  and $y\in x$, which
is impossible by Lemma~\ref{prop:NP}.
\end{proof}


The induction rule is available in $RST_{HF}^m$ as well,
but only for $\varphi\safe\emptyset$. 
\begin{prop}
\label{prop:inductionN}$\vdash_{RST_{HF}^m}\left(\ensuremath{\varphi\left[\nicefrac{0}{x}\right]}
\wedge\forall 
x\left(\varphi\rightarrow\ensuremath{\varphi\left[\nicefrac{S\left(x\right)}{x}\right]}\right)
\right)\rightarrow\forall x\in\mathbb{\mbox{\ensuremath{\mathbb{\widetilde{N}}}}}.\varphi$,
for $\varphi\safe\emptyset.$
\end{prop}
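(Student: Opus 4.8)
The plan is to obtain the statement from the $HF$-induction scheme of Lemma~\ref{lem:HF-1}(2), applied not to $\varphi$ itself but to a ``downward closed'' auxiliary formula. First I would discharge the antecedent, so assume $\varphi\left[\nicefrac{0}{x}\right]$ and $\forall x\left(\varphi\rightarrow\varphi\left[\nicefrac{S(x)}{x}\right]\right)$, and take as induction formula
$$\psi\ :=\ \forall u\left(u\in\widetilde{\mathbb{N}}\wedge u\in S(z)\rightarrow\varphi\left[\nicefrac{u}{x}\right]\right),$$
with $z,u$ chosen fresh for $\varphi$ (so that all the substitutions below are capture-free). Because $u\in\widetilde{\mathbb{N}}$ and $u\in S(z)$ are atomic (hence $\succ\emptyset$), because moreover $u\in\widetilde{\mathbb{N}}\succ\setfin u$ (as $\widetilde{\mathbb{N}}$ is a closed term), and because $\varphi\succ\emptyset$ by hypothesis, unfolding the $\forall/\rightarrow$ abbreviation of footnote~\ref{foot:forall} and running the clauses of Definition~\ref{def: safety} gives $\psi\succ\emptyset$, so Lemma~\ref{lem:HF-1}(2) does apply to $\psi$. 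I would then note that $\forall z\in HF.\,\psi$ already yields the goal: if $x\in\widetilde{\mathbb{N}}$ then $x\in HF$ (by Comprehension, Definition~\ref{def:system}), so $\psi\left[\nicefrac{x}{z}\right]$ holds, and instantiating its inner quantifier at $u:=x$ --- legitimate since $x\in S(x)$ --- delivers $\varphi$. The base case $\psi\left[\nicefrac{\emptyset}{z}\right]$ is immediate, since $u\in S(\emptyset)$ forces $u=\emptyset$ and $\varphi\left[\nicefrac{\emptyset}{x}\right]$ is the assumed base case.

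The crux is the inductive step. Fixing $z,v$ (fresh for $\varphi$) and assuming $\psi$ and $\psi\left[\nicefrac{v}{z}\right]$, I must establish $\psi\left[\nicefrac{z\cup\;\setfin v}{z}\right]$; so let $u\in\widetilde{\mathbb{N}}$ with $u\in S(z\cup\;\setfin v)$, i.e.\ $u\in z$, or $u=v$, or $u=z\cup\;\setfin v$, the goal being $\varphi\left[\nicefrac{u}{x}\right]$. If $u\in z$ then $u\in S(z)$, so the hypothesis $\psi$ applies to $u$; if $u=v$ then $u\in S(v)$, so $\psi\left[\nicefrac{v}{z}\right]$ applies to $u$; in either case $\varphi\left[\nicefrac{u}{x}\right]$. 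In the remaining case $u=z\cup\;\setfin v$: from $u\in\widetilde{\mathbb{N}}$ and Comprehension we get $N(u)$, and unfolding the definition of $N(u)$ and instantiating its bounded universal quantifier at the point $u$ itself (legitimate as $u\in S(u)$) while using $v\in u$ to exclude $u=\emptyset$, we obtain $w\in u$ with $u=w\cup\;\setfin w=S(w)$. Since $S(w)=u\in\widetilde{\mathbb{N}}$, Proposition~\ref{prop:NP}(3) gives $w\in\widetilde{\mathbb{N}}$. Now $w\in u=z\cup\;\setfin v$, so either $w\in z$, whence $w\in S(z)$ and $\psi$ applies to $w$, or $w=v$, whence $w\in S(v)$ and $\psi\left[\nicefrac{v}{z}\right]$ applies to $w$; thus $\varphi\left[\nicefrac{w}{x}\right]$. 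Finally the assumed step hypothesis $\forall x\left(\varphi\rightarrow\varphi\left[\nicefrac{S(x)}{x}\right]\right)$ yields $\varphi\left[\nicefrac{S(w)}{x}\right]$, hence $\varphi\left[\nicefrac{u}{x}\right]$ since $u=S(w)$. This closes the induction.

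The one delicate point is the choice of $\psi$. The naive candidate $z\in\widetilde{\mathbb{N}}\rightarrow\varphi\left[\nicefrac{z}{x}\right]$ does not work, because the $HF$-generation operation $z,v\mapsto z\cup\;\setfin v$ does not respect the successor structure of $\mathbb{N}$: when $u=z\cup\;\setfin v$ happens to be a successor $S(w)$, neither $z$ nor $v$ need be a natural number (for instance $3=\setfin{0,2}\cup\setfin 1$ with $\setfin{0,2}\notin\widetilde{\mathbb{N}}$), so an induction hypothesis speaking only about $z$ would tell us nothing about the predecessor $w$ one has to peel off. Passing to ``$\varphi$ holds of every natural number in $S(z)$'' repairs precisely this, at the modest cost of the three-way case split above and of re-checking $\psi\succ\emptyset$; this is also why $\psi$ is phrased with $u\in S(z)$ rather than $u\subseteq z$, the latter forcing a fussier case analysis that is not needed anywhere.
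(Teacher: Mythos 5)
Your proposal is correct and takes essentially the same route as the paper: the paper's auxiliary formula $(x\in\widetilde{\mathbb{N}}\imp\varphi)\wedge\left(\forall z\in x.\,z\in\widetilde{\mathbb{N}}\imp\varphi\left[\nicefrac{z}{x}\right]\right)$ is just your $\psi$ unpacked as a conjunction, and the subsequent case analysis (elements of $z\cup\setfin v$ versus $z\cup\setfin v$ itself, peeling off a predecessor $w$ via $N$ and Proposition~\ref{prop:NP}(3), then applying the step hypothesis) matches the paper's step for step. Your safety check for $\psi$ and your closing remark explaining why the naive induction formula fails are both sound.
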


\begin{proof}
Let $\varphi\safe\emptyset$, and assume that
$$ (*)\ \ \ensuremath{\varphi\left[\nicefrac{0}{x}\right]} \wedge\forall
x\left(\varphi\rightarrow\ensuremath{\varphi\left[\nicefrac{S\left(x\right)}{x}\right]}\right)$$
We show that $\forall x\in HF.\varphi$. 
Let $\psi$ be the formula 
$(x\in\mathbb{\mbox{\ensuremath{\mathbb{\widetilde{N}}}}}\imp\varphi)\w
\left(\forall z\in x.z\in\mathbb{\mbox{\ensuremath{\mathbb{\widetilde{N}}}}}
\imp\ensuremath{\varphi\left[\nicefrac{z}{x}\right]}\right)$.
Since $\varphi\safe\emptyset$, also $\psi\safe\emptyset$. Hence 
Lemma \ref{lem:HF-1}(2) implies: 
$$(**)\ \left(\ensuremath{\psi\left[\nicefrac{\emptyset}{x}\right]}
\wedge\forall x\forall y \left(\psi
\wedge\ensuremath{\psi\left[\nicefrac{y}{x}\right]}\imp
\ensuremath{\psi\left[\nicefrac{x\cup\;\setfin y}{x}\right]}
\right)\right)\rightarrow\forall x\in HF.\psi$$
Clearly $\psi\left[\emptyset\right]$
is provable in $RST_{HF}^m$,\footnote{To make the text more readable, at the
rest of the proof we write $\psi\left[t\right]$ and 
$\varphi\left[t\right]$ instead of 
$\ensuremath{\psi\left[\nicefrac{t}{x}\right]}$ and
$\ensuremath{\varphi\left[\nicefrac{t}{x}\right]}$ (respectively).}
 since we have $\varphi\left[\emptyset\right]$ by (*).
Now assume $\psi\left[x\right]\wedge\psi\left[y\right]$. We show
that $\psi\left[x\cup\setfin y\right]$. 
\be
\item Let $z\in x\cup\setfin y$, and suppose that 
$z\in\mathbb{\mbox{\ensuremath{\mathbb{\widetilde{N}}}}}$. Then 
either $z\in x$ or $z=y$, and in both cases the assumptions
$\psi\left[x\right]$, $\psi\left[y\right]$, and 
$z\in\mathbb{\mbox{\ensuremath{\mathbb{\widetilde{N}}}}}$ immediately imply
that $\varphi\left[z\right]$. 
\item Let $z= x\cup\setfin y$, and suppose that 
$z\in\mathbb{\mbox{\ensuremath{\mathbb{\widetilde{N}}}}}$.
The first of these assumptions implies that $z\neq\emptyset$, 
and so the second one entails that there exists $w$ such that
$z=w\cup\{w\}$. Then $w\in z$, and so 
$w\in\mathbb{\mbox{\ensuremath{\mathbb{\widetilde{N}}}}}\imp
\varphi\left[w\right]$ by item (1). But 
$w\in\mathbb{\mbox{\ensuremath{\mathbb{\widetilde{N}}}}}$ 
by Proposition~\ref{prop:NP}(2) and the assumption
$z\in\mathbb{\mbox{\ensuremath{\mathbb{\widetilde{N}}}}}$.
It follows that $\varphi\left[w\right]$. Therefore
(*) implies that $\varphi\left[z\right]$ in this case too. 
\ee
From (1) and (2) it follows that indeed 
$\psi\left[x\cup\setfin y\right]$ follows from
$\psi\left[x\right]\wedge\psi\left[y\right]$. Therefore (**) implies
that $\forall x\in HF.\psi$. This, in turn,
implies that $\forall x\in HF.\varphi$.
\end{proof}

\begin{rem}
\label{rem:res-ind}
The restriction to absolute formulas in Proposition~\ref{prop:inductionN}
is not a real problem for developing the theory of natural numbers 
that we need. With the help of Proposition~\ref{prop:+<} below,
Proposition~\ref{prop:inductionN} easily implies that all the formulas
in the language of first-order Peano's arithmetics and proofs in that theory 
can be translated into $RST_{HF}^m$ and its language.
This is because in this translation, all
the quantifications are bounded in $\mathbb{N}$, and thus they are
safe w.r.t. $\emptyset$. \footnote{It can be shown that the power
of full induction over $\mathbb{N}$ (i.e. for \emph{any} formula
$\varphi$) can be achieved by adding to $RST_{HF}^m$ the full
$\in$-induction scheme. } 
\end{rem}

Next we show that addition and multiplication on $\mathbb{N}$
are available in $RST_{HF}^{m}$ as small $\i$-functions.
In view of Propositions \ref{prop:NP} and \ref{prop:inductionN},
this suffices (as has been shown by G\"odel)
for having all recursive functions 
available in $RST_{HF}^{m}$ as small $\i$-functions.\footnote{Using
the method given in the proof of
Proposition~\ref{prop:+<} below, it is 
also not difficult to directly show 
that every primitive recursive function 
is available in $RST_{HF}^{m}$ as a small $\i$-function.}
\footnote{In 
\cite{Cohen2014QED} it was essentially shown (using a different
terminology) that every primitive recursive function is available
as a small $\i$-function in {\em the extension of $RST_{HF}^{m}$
with $\in$-induction}. Our present results show that $\in$-induction
is not really needed for this.}

\begin{prop}
\label{prop:+<}
\
\begin{enumerate}
\item The standard ordering $<$ on $\mathbb{N}$ is available in $RST_{HF}^m$
as a small $\i$-relation.
\item The standard addition and multiplication of natural numbers are available
in $RST_{HF}^m$ as small $\i$-functions.
\end{enumerate}
\end{prop}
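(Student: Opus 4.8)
The plan is to realize $<$, $+$, and $\cdot$ on $\mathbb{N}$ concretely inside $J_2$ (and every universe), using the fact that these objects, being subsets of $\mathcal{HF}\times\mathcal{HF}$ or $\mathcal{HF}\times\mathcal{HF}\times\mathcal{HF}$, already live in $J_1 = \mathcal{HF}$ as genuine sets. The key point is thus not whether the collections exist, but whether they are \emph{$\i$-sets}, i.e.\ defined by closed terms of $\lng$, and whether their defining properties are provable in $RST_{HF}^m$. For part~(1), I would note that $<$ on $\mathbb{N}$ is just the restriction of $\in$ to pairs of natural numbers: $m < n$ iff $m \in n$. So I would exhibit the closed term $\setin z{\exists x\exists y(x\in\widetilde{\mathbb N}\w y\in\widetilde{\mathbb N}\w x\in y\w z\check{=}\langle x,y\rangle)}$, using Lemma~\ref{prop:pairs} to get $z\check{=}\langle x,y\rangle\safe\{x,y\}$ and $x\in\widetilde{\mathbb N}$, $y\in\widetilde{\mathbb N}$ safe w.r.t.\ $\{x\}$ and $\{y\}$ respectively (since $\widetilde{\mathbb N}$ is a closed set term, by Prop.~\ref{prop:Basic setclass}(1)). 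Then the conjunction is safe w.r.t.\ $\{x,y\}$, so the existential quantifiers strip down to $\emptyset$, and the whole term is legal. That it is small follows from $Dom$ and $Im$ of this relation being subsets of $\widetilde{\mathbb N}$, hence $\i$-sets, via Prop.~\ref{prop:relation}(1). The basic properties (transitivity, trichotomy on $\mathbb N$, $m < S(m)$, $m < S(n)\leftrightarrow m<n\vee m=n$) are provable in $RST_{HF}^m$ by the restricted induction of Prop.~\ref{prop:inductionN}, noting that all the relevant formulas are bounded in $\widetilde{\mathbb N}$ and hence $\safe\emptyset$.

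For part~(2), the standard trick is to define the graph of addition (and then multiplication) directly without recursion. I would use the fact that $\mathbb N$ with $<$ is a discrete well-ordering available as a small $\i$-relation by part~(1), and encode $m + n = k$ by a first-order condition over $\widetilde{\mathbb N}$ — for instance via order-isomorphisms of initial segments, or more concretely by saying that there is a function $f$ (a small $\i$-set, an element of $J_2$) with domain $\{0,\dots,n\}$, $f(0)=m$, $f(S(i))=S(f(i))$ for $i<n$, and $f(n)=k$. Crucially, since all the relevant witnesses are hereditarily finite, such an $f$ is itself an element of $\mathcal{HF}$, so the existential quantifier over $f$ can be taken bounded in $HF$, which is legal in $\lng$; the resulting formula defining the graph of $+$ is then $\safe\emptyset$, and, being a subset of $\mathcal{HF}^3\in J_2$, Prop.~\ref{Pro:SetPairs-1} (or Prop.~\ref{prop:termdefinesset}(2b)) gives that the graph is a $\i$-set. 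That it satisfies the recursion equations $x+0=x$ and $x+S(y)=S(x+y)$, and that it is single-valued (hence a small $\i$-function by Prop.~\ref{prop:standard func}), is proved in $RST_{HF}^m$ by Prop.~\ref{prop:inductionN}; the existence of the witnessing $f$ for each $n$ is itself established by induction on $n$, building $f$ at stage $S(n)$ from $f$ at stage $n$ by a rudimentary operation. Multiplication is handled identically, now using that $+$ is already available.

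The main obstacle I anticipate is the \emph{restriction to absolute formulas} in the available induction principle (Prop.~\ref{prop:inductionN}): the naive Dedekind-style recursion-theorem argument ("for every $n$ there exists a unique $f$ satisfying the recursion equations up to $n$") quantifies existentially over the function $f$, and one must be careful that this quantifier can be taken bounded — in $HF$ — so that the induction hypothesis stays $\safe\emptyset$. This is exactly why the argument must be run inside $RST_{HF}^m$ rather than $RST^m$: the constant $HF$ gives us a set that provably contains all the finite approximating functions, so the otherwise-unbounded existential becomes a bounded, absolute one. Once that bounding is in place the rest is routine: all the arithmetic identities are $\Delta_0$-like formulas bounded in $\widetilde{\mathbb N}$, so Prop.~\ref{prop:inductionN} applies directly, and the passage from "graph is a single-valued $\i$-set relation" to "small $\i$-function" is Prop.~\ref{prop:standard func}. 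The remark following the proposition (that this then yields all recursive functions, à la Gödel's $\beta$-function) is a consequence, not part of what needs proving here.
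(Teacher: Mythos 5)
Your proposal is correct and follows essentially the same route as the paper: part (1) realizes $<$ as the restriction of $\in$ to $\widetilde{\mathbb N}\times\widetilde{\mathbb N}$ (the paper gets smallness via Prop.~\ref{Pro:SetPairs-1} rather than Prop.~\ref{prop:relation}(1), an immaterial difference), and part (2) defines the graph of $+$ via finite approximating functions whose existential witness is bounded in $HF$ — which is exactly the paper's $\psi_{+}(n,k,f)$ construction, including your correct identification that the $HF$-bounded quantifier is what keeps the formula absolute and makes the definition legal. Multiplication via the already-defined $+$ and the derivation of the recursion equations and standard properties by the restricted induction of Prop.~\ref{prop:inductionN} also match the paper.
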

\begin{proof}\leavevmode
\begin{enumerate}
\item The standard ordering $<$ on $\mathbb{N}$ coincides with $\in$.
Thus it is definable by the term 
$\setin{\left\langle m,n\right\rangle 
\check{\in}\widetilde{\mathbb{N}\times\mathbb{N}}}{ m\in n}$.
Since $\mathbb{N}$ is a $\i$-set, so is $\mathbb{N}\times\mathbb{N}$.
Hence the fact that $m\in n\safe\emptyset$ and Prop. \ref{Pro:SetPairs-1}
 imply that $<$ is a\emph{ }$\i$-set. It is now straightforward
to prove in $RST_{HF}^{m}$ its two characteristic  properties:
$\forall x.x\not<0$ and $\forall x\forall y.x<S(y)\leftrightarrow x<y\vee x=y$.
Using Proposition~\ref{prop:inductionN}, this suffices (as is well-known)
for deriving all the basic properties of <, like
its being a linear order or the existence of a <-successor for each element
in $\mathbb{N}$.
\item Define:
\bi
\item $Func\left(f\right):=\ 
\forall a,b,c\left(\left\langle a,b\right\rangle 
\check{\in}f\wedge\left\langle a,c\right\rangle 
\check{\in}f\rightarrow b=c\right)$
\item $add\left(z,u,n,f\right):=\
\left(z=0\wedge u=n\right)\vee\exists z_{1},u_{1}\in\widetilde{\mathbb{N}}\left(\left\langle z_{1},u_{1}\right\rangle \check{\in}f\wedge z=S\left(z_{1}\right)\wedge u=S\left(u_{1}\right)\right)$.
\item
$\psi_{+}(n,k,f):=
 Func\left(f\right)\wedge\forall x
\left(x\in f\leftrightarrow\exists z,u\in
\mbox{\ensuremath{\mathbb{\widetilde{N}}}}\left(z\leq k\wedge x=
\left\langle z,u\right\rangle \wedge add\left(z,u,n,f\right)\right)\right)$
\ei
Intuitively, $\psi_{+}(n,k.f)$ says that $f$ stands
for $\left\{ \left\langle 0,n\right\rangle ,\left\langle 1,n+1\right\rangle ,\left\langle 2,n+2\right\rangle ,...,\left\langle k,n+k\right\rangle \right\} $.
It is easy to check that $\psi_{+}(n,f)\safe\emptyset$. Hence
(Lemma \ref{lem:example}) addition is definable by:
\[
+:=\lambda n\in\mathbb{\mbox{\ensuremath{\mathbb{\widetilde{N}}}}},k\in\mathbb{\mbox{\ensuremath{\mathbb{\widetilde{N}}}}}.\iota m.\exists f\in HF\left(\psi_{+}(n,k,f)\wedge\left\langle k,m\right\rangle \check{\in}f\right).
\]
$+$ is a valid term since $f\in HF\safe\left\{ f\right\} $ and
$\left\langle k,m\right\rangle \check{\in}f\wedge
\psi_{+}(n,f)\safe\left\{ m\right\} $, and it 
is a small $\i$-function, as $\mathbb{N}\times\mathbb{N}$
is a $\i$-set. 

Multiplication is defined similarly. The only difference
 is that $add\left(z,u,n,f\right)$ is replaced with
$\left(z=0\wedge u=0\right)\vee\exists z_{1},u_{1}\in\mathbb{\mbox{\ensuremath{\mathbb{\widetilde{N}}}}}\left(\left\langle z_{1},u_{1}\right\rangle \check{\in}f\wedge z=S\left(z_{1}\right)\wedge u=n+u_{1}\right).$
This is legitimate, since $+$ is a small $\i$-function.

It is not difficult now to prove in $RST_{HF}^{m}$ 
the fundamental properties that characterize addition
and multiplication in first-order Peano's arithmetics:
\begin{eqnarray*}
\forall x.x+0=x & \,\,\,\mathrm{and}\,\,\, & \forall x,y.x+S\left(y\right)=S\left(x+y\right)\\
\forall x.x\cdot0=0 & \,\,\,\mathrm{and}\,\,\, & \forall x,y.x\cdot S\left(y\right)=x+\left(x\cdot y\right)
\end{eqnarray*}
Once these properties are proved, it is a standard matter to use
Prop. \ref{prop:inductionN} for proving all the standard properties of addition
and multiplication, such as commutativity, associativity, and distributivity. \qedhere
\ee
\end{proof}

\section{\label{sec:Real-Analysis}Real Analysis in \texorpdfstring{$J_{2}$}{J\_2}}

In this section we turn at last to the main goal of this paper: developing
real analysis in $J_{2}$. Now it
is not difficult to formalize the definitions, claims, and
proofs of this section in our formal framework.\emph{ }These translations
are straightforward, but rather tedious. Hence we shall
 omit them, with the exception of a few outlined examples.

\bs
\noindent \emph{Notation and Terminology}. Henceforth we restrict our attention to
the computational theory $RST_{HF}^m$ and its computational universe
$J_{2}$. Therefore 
we simply write $\left\Vert exp\right\Vert _{v}$
instead of $\left\Vert exp\right\Vert _{v}^{J_{2}}$. Similarly, 
when we talk about a 
$\safe$-set or a $\safe$-class, we mean $\safe$-set/$\safe$-class in $J_2$.


\subsection{The Construction of the Real Line}
 \ \ms

The standard construction of $\mathbb{Z}$, the set of integers, as
the set of ordered pairs $\left(\mathbb{N}\times\left\{ 0\right\} \right)\cup\left(\left\{ 0\right\} \times\mathbb{N}\right)$
can be easily carried out in $RST_{HF}^m$, as can the usual construction
of $\mathbb{Q}$, the set of rationals, in terms of ordered pairs
of relatively prime integers. There is also no difficulty in defining
the standard orderings on $\mathbb{Z}$ and $\mathbb{Q}$ as small
$\i$-relations, as well as the standard functions of addition and
multiplication as small $\i$-functions. The main properties of addition and multiplication are provable in
$RST_{HF}^m$, as the standard proofs by induction can be carried
out within it. Furthermore, all the basic properties of $\mathbb{Z}$
and $\mathbb{Q}$ (such as $\mathbb{Q}$ being a dense unbounded field)
are straightforwardly proved in $RST_{HF}^m$. 

Now we turn to the standard construction of the real line using Dedekind
cuts. Since it is well known that the real line and its open segments
are not absolute, they cannot be $\safe$-sets. 
Thus the collection of real numbers in $RST_{HF}^m$ will not
be a term but merely a \emph{definable predicate.}, that is: a 
{\em $\safe$-class}.\footnote{As noted in 
Footnote \ref{fn:counterpart}, this is in sharp difference from
the development of real analysis in \cite{Cohen2014QED}.}

\bs

Let $\psi:=\forall x,y\in\widetilde{\mathbb{Q}}.x\in u\wedge y<x\imp y\in u$, 
$\ \ \ \varphi:=\neg\exists x\in u\forall y\in u.y\leq x$.
\begin{defi}
[The Reals]\label{Def Reals} $\mathbb{R}$ is 
$\left\Vert \class{u\in\overline{P_{J_{2}}\left(\mathbb{Q}\right)\setminus
\left\{ \emptyset,\mathbb{Q}\right\} }}{\psi\wedge\varphi}\right\Vert $.
\end{defi} 
The above term is a valid class term as $P_{J_{2}}\left(\mathbb{Q}\right)\setminus\left\{ \emptyset,\mathbb{Q}\right\} $
is a $\safe$-class, and $\varphi,\psi\safe\emptyset$.
Note that it does not denote 
the ``real'' real-line (if such a thing really exists). However,
it does contain all \emph{computable} real numbers, such as $\sqrt{2}$
and $\pi$. (This can be shown by the same method that was used in  \cite{Cohen2014QED}.)

\ms

\noindent \emph{Notation}. We employ the following notations: $\mathbb{Q}^{+}=\left\{ q\in\mathbb{Q}\mid0<q\right\} $,
$\mathbb{R}^{+}=\left\{ r\in\mathbb{R}\mid0<r\right\} $, $\left(a,b\right)=\left\{ r\in\mathbb{R}\mid a<r<b\right\} $
and $\left[a,b\right]=\left\{ r\in\mathbb{R}\mid a\leq r\leq b\right\} $,
for $a,b$ real numbers.
\footnote{Notice that $\mathbb{Q}^{+}$ is a $\i$-set and $\mathbb{R}^{+}$ is a $\i$-class.}
\begin{prop}
\label{prop:R operations}The following holds: 
\begin{enumerate}
\item The standard ordering $<$ on $\mathbb{R}$ is available in $RST_{HF}^m$
as a \emph{$\i$}-relation. 
\item The standard addition and multiplication of reals are available in
$RST_{HF}^m$ as $\i$-functions. 
\end{enumerate}
\end{prop}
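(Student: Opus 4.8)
\emph{Proof plan.} The plan is to establish both items by exhibiting explicit defining (class) terms, checking that the safety bookkeeping goes through, and then observing that the stated order- and field-properties reduce, in $RST_{HF}^m$, to facts about $<$, $+$ and $\cdot$ on $\mathbb{Q}$ that have already been proved available (together with standard inductions of the kind used for $\mathbb{Z}$ and $\mathbb{Q}$).

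For (1), a real is a downward-closed, proper, nonempty subset of $\mathbb{Q}$ without a maximum (Definition~\ref{Def Reals}), so the standard ordering is proper inclusion. Since $a\subseteq b\succ\emptyset$ and $a=b\succ\emptyset$, the formula $a\subsetneq b$ (i.e. $a\subseteq b\wedge\neg(a=b)$) is $\succ\emptyset$, so I would take $<$ to be defined by the closed class term
\[
\class{z}{\exists a\exists b\left(a\in\bar{\mathbb{R}}\wedge b\in\bar{\mathbb{R}}\wedge a\subsetneq b\wedge z\check{=}\left\langle a,b\right\rangle\right)}.
\]
Here $a\in\bar{\mathbb{R}}\wedge b\in\bar{\mathbb{R}}\wedge a\subsetneq b\succ\emptyset$ while $z\check{=}\left\langle a,b\right\rangle\succ\{a,b\}$ (Lemma~\ref{prop:pairs}), so the conjunction is $\succ\{a,b\}$ and the two quantifiers bring it back to $\emptyset$; hence this is a legal class term, and as it defines a subcollection of $\mathbb{R}\times\mathbb{R}$ it is a $\i$-relation from $\mathbb{R}$ to $\mathbb{R}$ (not a small one, since its domain is all of $\mathbb{R}$, which is not a $\i$-set). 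Irreflexivity, transitivity, density, and compatibility with the embedding of $\mathbb{Q}$ then follow in $RST_{HF}^m$ from the cut conditions and the known facts about $<$ on $\mathbb{Q}$; trichotomy comes from the ``nestedness'' of cuts (if $a\not\subseteq b$, pick $p\in a\setminus b$ and use downward closure of $a$ to conclude $b\subseteq a$).

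For (2), the naive term $\setin{p}{\exists p_1\in u\,\exists p_2\in v.\,p=p_1+_{\mathbb{Q}}p_2}$ for the sum of cuts $u,v$ is \emph{not} legal as it stands: once a conjunct that is safe only with respect to $\{p\}$ has been formed, $p_1$ and $p_2$ can no longer be existentially bound. The remedy is to route the definition through the image operation of Proposition~\ref{prop:replacement}: each cut is an element of $J_{2}$, so $u\times v$ is a $\i$-set contained in $\mathbb{Q}\times\mathbb{Q}$, and, writing $t_{+}$ for a term representing the (already available) rational-addition $\i$-function, the cut-sum equals the image of $u\times v$ under it. Accordingly I would set
\[
+_{\mathbb{R}}\ :=\ \lambda u\in\bar{\mathbb{R}},v\in\bar{\mathbb{R}}.\ \setin{y}{\exists a\in(u\times v).\,t_{+}(a)=y}.
\]
Here $a\in(u\times v)\succ\{a\}$ and $t_{+}(a)=y\succ\{y\}$, and since $\{y\}\cap Fv\left(a\in(u\times v)\right)=\emptyset$ the conjunction is $\succ\{a,y\}$, so $\exists a(\dots)\succ\{y\}$; thus the body is a legal term whose free variables are among $\{u,v\}$, and by Lemma~\ref{lem:Functions Severl variables} $+_{\mathbb{R}}$ is a $\i$-function on $\mathbb{R}\times\mathbb{R}$. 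One then checks in $RST_{HF}^m$ that $+_{\mathbb{R}}(u,v)$ is again a cut (downward closure and absence of a maximum are immediate; properness follows since $q_1\notin u$ and $q_2\notin v$ force $q_1+q_2\notin u+v$, using downward closure), and that commutativity, associativity, the existence of $0$ and of additive inverses, and monotonicity with respect to $<$ are provable. Multiplication is treated the same way, with a case split on signs: the product of two \emph{positive} cuts is the fixed $\i$-set $\{q\in\mathbb{Q}\mid q\le 0\}$ together with the image of $(u\cap\mathbb{Q}^{+})\times(v\cap\mathbb{Q}^{+})$ under the rational-multiplication $\i$-function, and the general product is assembled from this, from $0$, and from negation by an $\iota$-term all of whose guards ($0\le u$, $u\le 0$, etc.) are $\succ\emptyset$. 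The resulting class term is a $\i$-function on $\mathbb{R}\times\mathbb{R}$, and distributivity, associativity, commutativity, and the existence of $1$ and of multiplicative inverses for nonzero reals are then routine. I expect no deep obstacle here: the genuine work lies in (a) pinning down the safety bookkeeping of the defining terms --- which, as above, forces the image/replacement detour rather than a naive abstraction term --- and (b) the tedious sign-case analysis needed to verify that a product of cuts is a cut and that the ordered-field laws hold; the least-upper-bound property, which is not claimed in this proposition, would be proved separately.
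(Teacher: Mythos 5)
Your overall route coincides with the paper's: $<$ is proper inclusion of cuts (a legal class term because $x\subset y\safe\emptyset$), addition is the set of elementwise sums of the two cuts, and multiplication is first defined on positive reals and then extended by a sign case split using negation and the union-of-functions construction of Proposition~\ref{prop:functions}(5). The one place where you substantively diverge rests on an error: the ``naive'' term $\setin{z}{\exists u\in x\exists v\in y.\,z=u+v}$ is in fact legal, and it is exactly the term the paper uses. The conjunction clause of Definition~\ref{def: safety} does not ``lock out'' the variables $u,v$ once a conjunct safe only w.r.t.\ $\{z\}$ has appeared; it yields the \emph{union} of the two safety sets, provided one of them is disjoint from the free variables of the other conjunct. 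Here $v\in y\safe\{v\}$ and $z=u+v\safe\{z\}$ with $\{z\}\cap Fv(v\in y)=\emptyset$, so the conjunction is safe w.r.t.\ $\{v,z\}$; the inner existential discharges $v$ leaving $\{z\}$, and repeating the step with $u\in x$ gives $\exists u\in x\exists v\in y.\,z=u+v\safe\{z\}$. This is the same pattern already used for $s\times t$, $\lambda x\in s.t$ and $Dom(t)$ in Lemma~\ref{lem:example}. Your detour through $u\times v$ and its image under rational addition is semantically correct and safety-legal, so your proof still goes through, but the justification you give for needing it misreads the safety relation --- worth correcting, since that bookkeeping is the heart of the framework. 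Everything else (the verification that sums and products of cuts are cuts, and the reduction of the ordered-field laws to known facts about $\mathbb{Q}$) matches the paper's ``straightforward'' treatment.
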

\begin{proof}\leavevmode
\begin{enumerate}
\item The relation $<$ on $\mathbb{R}$ coincides with $\subset$, thus
we can define the relation $<$ by $\class{\left\langle x,y\right\rangle \in\overline{\mathbb{R}\times\mathbb{R}}}{x\subset y}$.
We have that $x\subset y\safe\emptyset$, hence $<$ is a \emph{$\safe$}-class.
It is straightforward to prove in $RST_{HF}^m$ properties concerning
$<$, such as it being a total order on $\mathbb{R}$, the density
of the rationals in $\mathbb{R}$, the Archimedean Principle, etc.
\item The $\safe$-function $+$ can be represented (using Lemma \ref{lem:Functions Severl variables})
by the term 
\[
+=\lambda x\in\mathbb{\bar{R}},y\in\bar{\mathbb{R}}.\setin z{\exists u\in x\exists v\in y.z=u+v}
\]
since $\exists u\in x\exists v\in y.z=u+v\safe\left\{ z\right\} $.
\smallskip

To define multiplication, let $F_{1}$ be the \emph{$\safe$}-function:
\[F_{1}=  \left\Vert \lambda a\in\bar{\mathbb{R}^{+}},b\in\bar{\mathbb{R}^{+}}.
\setin z{z\leq 0\vee\exists u\in a\exists v\in b\left(0\leq u\wedge0\leq v\wedge 
z=u\cdot v\right)}\right\Vert\]
Next, define the \emph{$\safe$}-function $-$ on $\mathbb{R}$ by
\[
-=\left\Vert \lambda x\in\bar{\mathbb{R}}.\class z{\exists u\in\widetilde{\mathbb{Q\setminus}x}\exists a\in\widetilde{\mathbb{Q}}.z+b=a}\right\Vert .
\]
Then, for $0\leq a\wedge b<0$ define $F_{2}\left(\left\langle a,b\right\rangle \right):=-F_{1}\left(\left\langle a,-b\right\rangle \right)$,
for $a<0\wedge0\leq b$ define $F_{3}\left(\left\langle a,b\right\rangle \right):=-F_{1}\left(\left\langle -a,b\right\rangle \right)$,
and for $a<0\wedge b<0$ define $F_{4}\left(\left\langle a,b\right\rangle \right):=F_{1}\left(\left\langle -a,-b\right\rangle \right)$.
Now the $\safe$-function $\cdot$ on $\mathbb{R}\times\mathbb{R}$
can be defined by $\cdot:=F_{1}\cup F_{2}\cup F_{3}\cup F_{4}$. \\
Proving in $RST_{HF}^m$ basic properties regarding these $\i$-functions,
such as $\mathbb{R}$ being an ordered field, is again straightforward.
\qedhere
\end{enumerate}
\end{proof}

\subsection{The  Least Upper Bound Principle and the Topology of the Reals}
\ \ms

In this section  we examine to what extent the least upper bound principle is available in
$RST_{HF}^m$. We start with the following positive result: 
\begin{thm}
\label{prop:R-completeness}It is provable in $RST_{HF}^m$ that
every nonempty $\i$-subset of $\mathbb{R}$ that is bounded above
has a least upper bound in $\mathbb{R}$. Furthermore, the induced mapping  ($l.u.b$)
 is available in $RST_{HF}^m$ as a $\i$-function.
\end{thm}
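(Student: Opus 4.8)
The plan is to mimic the classical construction of the least upper bound of a bounded set of Dedekind cuts: the l.u.b.\ of a family of reals (= downward-closed subsets of $\mathbb{Q}$ without a maximum, distinct from $\emptyset$ and $\mathbb{Q}$) is simply the union of that family. So given a nonempty $\i$-subset $A\subseteq\mathbb{R}$ that is bounded above by some real $b$, I would set $s:=\bigcup A=\{q\in\mathbb{Q}\mid\exists u\in A.\,q\in u\}$ and prove, reasoning inside $RST_{HF}^m$, that $s$ is again a Dedekind cut, i.e.\ that $s\in\mathbb{R}$. The three things to check are: (i) $s$ is downward-closed in $\mathbb{Q}$ — immediate, since each $u\in A$ is; (ii) $s$ has no maximum — if $q\in s$ then $q\in u$ for some $u\in A$, and since $u$ has no maximum there is $q'\in u$ with $q<q'$, whence $q'\in s$; (iii) $\emptyset\neq s\neq\mathbb{Q}$ — nonemptiness because $A$ is nonempty and its members are nonempty cuts, and $s\neq\mathbb{Q}$ because $s\subseteq b\neq\mathbb{Q}$ (using that $b$ bounds $A$ above, hence $u\subseteq b$ for every $u\in A$). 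Then $s$ is an upper bound for $A$ (each $u\in A$ satisfies $u\subseteq s$, i.e.\ $u\leq s$), and it is the least one: if $t\in\mathbb{R}$ with $u\leq t$ for all $u\in A$, then $u\subseteq t$ for all $u$, so $s=\bigcup A\subseteq t$, i.e.\ $s\leq t$. All of this is first-order reasoning over $\mathbb{Q}$ together with the defining formulas $\psi,\varphi$ of Definition~\ref{Def Reals}, and requires no $\in$-induction.

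The point where one must be careful — and the reason the theorem is phrased with the hypothesis ``$\i$-subset'' rather than ``subset'' — is the claim that $s=\bigcup A$ is actually an element of $J_2$ (equivalently, that this union is a legitimate $\i$-set), so that it makes sense to assert $s\in\mathbb{R}$, which is membership in a $\i$-class and hence absolute. Here is where I would invoke the machinery of Section~\ref{sec:Static-Extensions-by}: if $A$ is a $\i$-set then, since $A\subseteq\mathbb{R}\subseteq P_{J_2}(\mathbb{Q})$ and $\mathbb{Q}\in J_2$, every element of $A$ is a subset of the fixed set $\mathbb{Q}\in J_2$, so $\bigcup A\subseteq\mathbb{Q}$; and $\bigcup$ is an $HF$-rudimentary function ($F_6$), so $\bigcup A$ is a $\i$-set by Proposition~\ref{cor:Every-rudimentary-functionTERM}, in fact definable by the term $\cup\,\widetilde{A}$. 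Thus $s$ is a $\i$-set contained in $\mathbb{Q}$ and satisfying the absolute formulas $\psi\wedge\varphi$, hence $s\in\mathbb{R}$. The main obstacle is therefore not any hard mathematics but this bookkeeping: one must confirm that ``$\bigcup A\in\mathbb{R}$'' is genuinely expressible and provable in $RST_{HF}^m$, and in particular that boundedness above is used precisely to rule out $\bigcup A=\mathbb{Q}$.

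For the ``furthermore'' clause I would exhibit the $l.u.b.$ operation explicitly as a $\i$-function. On the $\i$-class $\mathcal{B}$ of nonempty bounded-above $\i$-subsets of $\mathbb{R}$ — which is itself a $\i$-class, being carved out of $P_{J_2}(\mathbb{R})\cap P_{J_2}(P_{J_2}(\mathbb{Q}))$ by a formula that is safe w.r.t.\ $\emptyset$ (``nonempty'' is $\exists u.\,u\in A$; ``bounded above'' is $\exists b\in\overline{\mathbb{R}}\,\forall u\in A.\,u\subseteq b$, and membership in the $\i$-class $\mathbb{R}$ is absolute) — define
\[
l.u.b.:=\lambda A\in\overline{\mathcal{B}}.\;\setin x{x\in\mathbb{Q}\wedge\exists u\in A.\,x\in u}\,.
\]
The body is a legal set term since $x\in\widetilde{\mathbb{Q}}\wedge\exists u\in A.\,x\in u\succ\{x\}$ (indeed $x\in\widetilde{\mathbb{Q}}\succ\{x\}$ and the rest is safe w.r.t.\ $\emptyset$), so by Definition~\ref{def:function} this is a $\i$-function on $\mathcal{B}$; and the work above shows it lands in $\mathbb{R}$ and satisfies the characteristic properties of the supremum, which are provable in $RST_{HF}^m$ by the elementary arguments sketched in the first paragraph. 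This completes the plan.
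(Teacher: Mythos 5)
Your construction of the supremum itself --- take $s=\bigcup A$, check that it is a Dedekind cut (using boundedness only to rule out $s=\mathbb{Q}$), and observe that since $\leq$ on $\mathbb{R}$ is inclusion, $\bigcup A$ is automatically the least upper bound --- is exactly the paper's argument, and your justification that $\bigcup A$ is a $\i$-set via the rudimentary function $F_6$ (equivalently, the term $\cup\widetilde{A}$) is also what the paper uses.

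The one step that fails as written is in the ``furthermore'' clause. The formula you propose for carving out the domain $\mathcal{B}$, namely $\exists b\in\overline{\mathbb{R}}\,\forall u\in A.\,u\subseteq b$, is an \emph{unbounded} existential over a variable $b$ ranging over the proper class $\mathbb{R}$. The body $b\in\overline{\mathbb{R}}\wedge\forall u\in A.\,u\subseteq b$ is only derivably safe w.r.t.\ $\emptyset$ (since $b\in\overline{\mathbb{R}}$ is merely absolute, not safe w.r.t.\ $\{b\}$), and the existential clause of Definition~\ref{def: safety} requires the quantified variable to lie in the safety set of the body. Hence $\exists b(\dots)\succ\emptyset$ is not derivable, $\overline{\mathcal{B}}$ is not a legal class term, and your $\lambda$-term for $l.u.b.$ is not well formed. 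The repair is exactly what the paper does: let $F$ be the union map on $P_{J_2}(\mathbb{R})$ (rudimentary, hence a $\i$-function) and take the domain to be $F^{-1}[\mathbb{R}]$, which is a $\i$-class by Proposition~\ref{prop:functions}(2); its defining condition amounts to $\cup A\in\overline{\mathbb{R}}$, which is absolute because $\cup A$ is a term and membership in a class term is $\succ\emptyset$. Equivalently, you could replace your boundedness clause by the condition $\cup A\in\overline{\mathbb{R}}$, which for nonempty $A\subseteq\mathbb{R}$ is provably equivalent in $RST_{HF}^m$ to being bounded above. With that substitution your proof coincides with the paper's.
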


\begin{proof}
Let $X$ be a nonempty $\i$-subset of $\mathbb{R}$ that is bounded
above. $\cup X$ is a $\i$-set, and since $X$ is bounded above,
standard arguments show that $\cup X$ is a Dedekind cut and thus
belongs to $\mathbb{R}$. Since the order $\i$-relation $\leq$ coincides
with the inclusion relation, it follows that $\cup X$ is a least
upper bound for $X$. Moreover, the function that maps each $X$ to
$\cup X$ is a rudimentary function (from $P_{J_{2}}\left(\mathbb{R}\right)$
to $P_{J_{2}}\left(\mathbb{Q}\right)$), and hence it is a $\i$-function.
Denote it by $F$. The desired function $l.u.b$ is $F\upharpoonright_{F^{-1}\left[\mathbb{R}\right]}$,
which by Proposition \ref{prop:functions} is a $\i$-function.
\end{proof}

Encouraging as it may be, Theorem \ref{prop:R-completeness} only
states that \emph{$\safe$-subsets} of $\mathbb{R}$ have the least
upper bound property. Therefore it is insufficient for 
developing in $RST_{HF}^m$ most of standard mathematics. The reason is
that even the most basic substructures of $\mathbb{R}$, like the
intervals, are not $\safe$-sets in $RST_{HF}^m$, but only proper $\i$-classes.
Hence, a stronger version of the theorem, which ensures that
the least upper bound property holds for standard $\safe$-subclasses
of $\mathbb{R}$, is needed. Theorem \ref{thm:lubinterval}
below provides such an extension, but it requires some additional definitions and
propositions.\footnote{It should be noted that the full least upper bound principle has 
not been derivable also in Weyl's approach \cite{weyl1932kontinuum}.
To obtain the principle for standard mathematical objects, we use in what follows 
coding techniques that are similar to those employed by Weyl.}

\ms

First we consider $\i$-classes $U\subseteq\mathbb{R}$ which are
open\emph{.} These $\i$-classes are generally not $\i$-sets (unless
empty), since they contain an interval of positive length, which is
a proper $\i$-class and thus cannot be contained in a $\i$-set (see
Prop. \ref{prop:Basic setclass}(3)). Clearly, there is no such thing
as a $\i$-set of $\i$-classes, as a proper $\i$-class can never
be an element of another $\i$-set or $\i$-class. However, the use
of coding (following \cite{simpson2009subsystems}, \cite{weaver2005analysis}\footnote{In \cite{weaver2005analysis} such codings are called ``proxies''.})
allows us, for example, to replace the meaningless statement ``the
union of a $\i$-set of $\i$-classes is a $\i$-class'' with ``given
a $\i$-set of codes for $\i$-classes, the union of the corresponding
$\i$-classes is a $\i$-class''.

The coding technique we use is based on  the standard mathematical 
notation  for a ``family of sets'', $\left(A_{i}\right)_{i\in I}$,
where $I$ is a set of indices and $A_{i}$ is a set for each $i\in I$.
In $RST_{HF}^m$ we cannot construct the collection of all such
$A_{i}$'s if $A_{i}$ is a $\i$-class for some $i\in I$. Thus,
we treat the $\i$-set $I$ as a code for the ``family of classes''
$\left(A_{i}\right)_{i\in I}$. In fact, we mainly use the union of
such families, i.e., $\underset{i\in I}{\bigcup A_{i}}$.

\begin{defi}
For $p,q\in\mathbb{R}$, 
the open ball $B_{q}\left(p\right)$
is the $\i$-class $\left\{ r\in\mathbb{R}\mid\left|r-p\right|<q\right\} $.
\end{defi}
\begin{defi}
Let $U\subseteq\mathbb{R}$ be a $\i$-class. If there exists a $\i$-set
$u\subseteq\mathbb{Q}\times\mathbb{Q}^{+}$ such that  
$U=\bigcup_{\left\langle p,q\right\rangle \in u}B_{q}\left(p\right)
=\left\{ r\in\mathbb{R}\mid\exists p,q\left(\left\langle p,q\right\rangle 
\in u\wedge\left|r-p\right|<q\right)\right\}$,
then $U$ is called \emph{open} and $u$ is a \emph{code} for $U$.
\end{defi}
In what follows, the formalizations in $RST_{HF}^m$ are carried
out as follows: 
\begin{itemize}
\item To quantify over open $\i$-classes: $Qu\subseteq\widetilde{\mathbb{Q}\times\mathbb{Q}^{+}}$
($Q\in\left\{ \forall,\exists\right\} $). 
\item To decode the open $\i$-class whose code is $u$:  
\[
dec\left(u\right):=\class{r\in\bar{\mathbb{R}}}{\exists p,q\left(\left\langle p,q\right\rangle \check{\in}u\wedge\left|r-p\right|<q\right)}
\]  
\item To state that a class variable $\boldsymbol{U}$ is an open $\i$-class:
\[
Open\left(\boldsymbol{U}\right):=\exists u\subseteq\widetilde{\mathbb{Q}\times\mathbb{Q}^{+}}.\boldsymbol{U}=dec\left(u\right)
\]
\end{itemize} 
\begin{prop}
\label{prop:open}The following are provable in $RST_{HF}^m$: 
\begin{enumerate}
\item For any  $\i$-set $u\subseteq\mathbb{R}\times\mathbb{R}^{+}$, $\left\{ r\in\mathbb{R}\mid\exists p,q\left(\left\langle p,q\right\rangle \in u\wedge\left|r-p\right|<q\right)\right\} $
is an open $\i$-class.
\item The open ball $B_{q}\left(p\right)$ is an open $\i$-class for any
$p\in\mathbb{R}$ and $q\in\mathbb{R}^{+}$.
\end{enumerate}
\end{prop}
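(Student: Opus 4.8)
The plan is to reduce both items to the existence of a suitable $\i$-set code $u' \subseteq \mathbb{Q}\times\mathbb{Q}^{+}$, since by definition a $\i$-class $U\subseteq\mathbb{R}$ is open precisely when $U = dec(u')$ for some such code. For item (1), the issue is that the given code $u\subseteq\mathbb{R}\times\mathbb{R}^{+}$ uses \emph{real} centers and radii, whereas the official definition of "open" requires a \emph{rational} code. The first thing I would do is establish, inside $RST_{HF}^m$, the standard density fact: for every $p\in\mathbb{R}$, $q\in\mathbb{R}^{+}$, and every $r\in\mathbb{R}$ with $|r-p|<q$, there exist $p',q'\in\mathbb{Q}$ with $q'\in\mathbb{Q}^{+}$ such that $r\in B_{q'}(p')$ and $B_{q'}(p')\subseteq B_{q}(p)$ --- this is routine from the density of $\mathbb{Q}$ in $\mathbb{R}$ and the Archimedean principle, both of which are available by Proposition~\ref{prop:R operations}(1). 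I would then take $u'$ to be the $\i$-set $\{\langle p',q'\rangle \in\mathbb{Q}\times\mathbb{Q}^{+} \mid \exists p,q(\langle p,q\rangle\in u \wedge B_{q'}(p')\subseteq B_{q}(p))\}$. The key point is that this defining formula is safe with respect to $\{\langle p',q'\rangle\}$ (equivalently, after pairing, safe for a single variable): $\langle p,q\rangle \in \widetilde{u}$ is safe for $\emptyset$ by Proposition~\ref{prop:Basic setclass}(1), the containment $B_{q'}(p')\subseteq B_{q}(p)$ unfolds to an absolute (i.e.\ $\succ\emptyset$) formula about rationals and reals, the existential quantifiers over $p,q$ are bounded in the $\i$-set $\widetilde{u}$, and the outer pair $\langle p',q'\rangle$ ranges over the $\i$-set $\widetilde{\mathbb{Q}\times\mathbb{Q}^{+}}$; so by Proposition~\ref{Pro:SetPairs-1} the whole thing is a $\i$-set. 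Finally I would verify in $RST_{HF}^m$ that $dec(u') = \{r\in\mathbb{R}\mid\exists p,q(\langle p,q\rangle\in u\wedge|r-p|<q)\}$: the inclusion $\subseteq$ is immediate from $B_{q'}(p')\subseteq B_{q}(p)$, and $\supseteq$ is exactly the density fact proved above.

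For item (2), I would simply exhibit a rational code for $B_q(p)$ when $p\in\mathbb{R}$, $q\in\mathbb{R}^{+}$. One clean choice is to apply item (1) to the singleton code $u = \{\langle p,q\rangle\}$, which is a $\i$-set (it is $\mathcal{HF}$-definable from the $\i$-sets-or-classes involved, or directly $\setfin{\langle\widetilde p,\widetilde q\rangle}$ when $p,q$ are given as $\i$-sets; in the class-variable formulation one argues uniformly). Since $\{r\in\mathbb{R}\mid\exists p',q'(\langle p',q'\rangle\in u\wedge|r-p'|<q')\}$ then literally equals $B_q(p)$, item (1) gives that $B_q(p)$ is an open $\i$-class. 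Alternatively, and more self-containedly, I would directly take the code $u' = \{\langle p',q'\rangle\in\mathbb{Q}\times\mathbb{Q}^{+}\mid B_{q'}(p')\subseteq B_q(p)\}$ and run the same safety analysis as above (now with no bound variables $p,q$ to quantify away), then check $dec(u')=B_q(p)$ by density.

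The main obstacle I anticipate is not any deep mathematics --- the real-analytic content is the elementary fact that open balls with real center and radius are unions of rational open balls --- but rather the bookkeeping to confirm that the code $u'$ is genuinely a $\i$-set, i.e.\ that its defining formula satisfies the syntactic safety constraints of Definition~\ref{def: safety} (in particular, that the pairing predicate $\check{=}$ from Lemma~\ref{prop:pairs} is used so that ranging over $\widetilde{\mathbb{Q}\times\mathbb{Q}^{+}}$ and extracting $p',q'$ stays safe), together with the need to carry out the density and Archimedean arguments \emph{within} $RST_{HF}^m$ using only the resources already established in Proposition~\ref{prop:R operations}. Everything else is straightforward unwinding of the definitions of $dec$, $Open$, and the $\i$-class $\mathbb{R}$.
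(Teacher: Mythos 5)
Your proposal is essentially the paper's own proof: the paper also defines the rational code as the set of pairs $\left\langle p,q\right\rangle \in\mathbb{Q}\times\mathbb{Q}^{+}$ whose ball is contained in some ball of the given code $u$, checks safety via bounded separation over the $\i$-set $\widetilde{\mathbb{Q}\times\mathbb{Q}^{+}}$, appeals to density of $\mathbb{Q}$ for the decoding identity, and obtains item (2) by applying item (1) to the singleton $\setfin{\left\langle p,q\right\rangle}$. One point deserves more care than you give it: you cannot literally use the class containment $B_{q'}\left(p'\right)\subseteq B_{q}\left(p\right)$ inside the defining formula of $u'$, because a containment between two proper $\i$-classes unfolds to $\neg\exists r\left(r\in\overline{B_{q'}\left(p'\right)}\wedge r\notin\overline{B_{q}\left(p\right)}\right)$, an unbounded existential over $\mathbb{R}$ whose body is only safe w.r.t.\ $\emptyset$, so the $\exists$-clause of Definition~\ref{def: safety} does not apply and the formula is not derivably $\succ\emptyset$. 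You must instead replace the containment by its provably equivalent arithmetic form $q'+\left|p-p'\right|\leq q$ (built from the $\i$-functions and the $\succ\emptyset$ relation $<$ of Proposition~\ref{prop:R operations}), which is exactly what the paper's code $\setin{\left\langle p,q\right\rangle \in\widetilde{\mathbb{Q}\times\mathbb{Q}^{+}}}{\exists r,s\left(\left\langle r,s\right\rangle \check{\in}u\wedge q+\left|r-p\right|\leq s\right)}$ does. With that substitution your safety bookkeeping and density argument go through as stated.
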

\begin{proof}\leavevmode
\begin{enumerate}
\item Take $w$ to be $\left\Vert \setin{\left\langle p,q\right\rangle \in\widetilde{\mathbb{Q}\times\mathbb{Q}^{+}}}{\exists r,s\left(\left\langle r,s\right\rangle \check{\in}u\wedge q+\left|r-p\right|\leq s\right)}\right\Vert $.
Since $\mathbb{Q}\times\mathbb{Q}^{+}$ is a $\i$-set and $\exists r,s\left(\left\langle r,s\right\rangle \check{\in}u\wedge q+\left|r-p\right|\leq s\right)\safe\emptyset$,
$w$ is a $\i$-set that is a code for an open $\i$-class. It can
easily be proved in $RST_{HF}^m$ that $w$ codes $U$. 
\item $u=\left\{ \left\langle p,q\right\rangle \right\} $ is a code of
$B_{q}\left(p\right)$ (by 1.).\qedhere
\end{enumerate}
\end{proof}

\begin{prop}
\label{prop:class-union}
The following are provable in $RST_{HF}^m$:\label{prop:proxy-1} 
\begin{enumerate}
\item The union of a $\i$-set of open $\i$-classes is an open $\i$-class.
i.e, given a $\i$-set of codes of open $\i$-classes, the union of
the corresponding open $\i$-classes is an open $\i$-class. 
\item The intersection of finitely many  open $\i$-classes is an open $\i$-class. 
\end{enumerate}
\end{prop}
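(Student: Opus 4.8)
The plan is to handle the two parts separately: (1) is essentially immediate, and (2) reduces, by a routine meta-induction on the number of classes, to the case of two open $\i$-classes. For (1), let $S$ be a $\i$-set each of whose elements is a code, i.e.\ $\forall u\in S.\,u\subseteq\mathbb{Q}\times\mathbb{Q}^{+}$. The natural candidate for a code of $\bigcup_{u\in S}dec(u)$ is $\bigcup S$: it is a $\i$-set, being defined by the term $\cup\,\widetilde{S}$ of Lemma~\ref{lem:example} (equivalently, $\bigcup$ is the rudimentary function $F_{6}$), and it obviously satisfies $\bigcup S\subseteq\mathbb{Q}\times\mathbb{Q}^{+}$. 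Unwinding the definition of $dec$ and using the basic provable properties of $\cup$, one gets in $RST_{HF}^m$ that $r\in dec\!\left(\bigcup S\right)$ iff $\exists u\in S\,\exists\langle p,q\rangle\check{\in}u.\,|r-p|<q$ iff $\exists u\in S.\,r\in dec(u)$; hence $dec\!\left(\bigcup S\right)=\bigcup_{u\in S}dec(u)$, which is therefore open with code $\bigcup S$.

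For (2), by induction on $n$ --- the case $n=1$ being trivial and the inductive step being $\bigcap_{i\le n+1}U_{i}=\left(\bigcap_{i\le n}U_{i}\right)\cap U_{n+1}$ --- it suffices to show that the intersection of two open $\i$-classes $U=dec(u)$ and $V=dec(v)$ is open (an analogous statement for $\i$-set-indexed intersections would of course be false). The key observation, provable in $RST_{HF}^m$ from the ordered-field properties of Proposition~\ref{prop:R operations}, is the purely arithmetical nesting criterion: for rationals $p,p'$ and positive rationals $q,q'$, $B_{q}(p)\subseteq B_{q'}(p')$ iff $|p-p'|+q\le q'$. Accordingly I would take as a code for $U\cap V$ the $\i$-set
\[
w:=\left\Vert \setin{\langle p,q\rangle\check{\in}\widetilde{\mathbb{Q}\times\mathbb{Q}^{+}}}{\exists\langle p_{1},q_{1}\rangle\check{\in}u\,\exists\langle p_{2},q_{2}\rangle\check{\in}v\left(|p-p_{1}|+q\le q_{1}\wedge|p-p_{2}|+q\le q_{2}\right)}\right\Vert .
\]
Its matrix is obtained by bounded quantification over the $\i$-sets $u$ and $v$ from an absolute arithmetical formula, hence is $\succ\emptyset$ (using Lemma~\ref{prop:pairs} for the pairing subformulas), so by Proposition~\ref{Pro:SetPairs-1} $w$ is indeed a $\i$-set with $w\subseteq\mathbb{Q}\times\mathbb{Q}^{+}$. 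The inclusion $dec(w)\subseteq U\cap V$ is immediate from the triangle inequality: if $|r-p|<q$ and $|p-p_{i}|+q\le q_{i}$, then $|r-p_{i}|<q_{i}$, so $r\in B_{q_{i}}(p_{i})\subseteq dec(u),\,dec(v)$.

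The reverse inclusion $U\cap V\subseteq dec(w)$ is the only real content, and it is where the density of $\mathbb{Q}$ in $\mathbb{R}$ (Proposition~\ref{prop:R operations}) enters; I expect this to be the main obstacle, though it is still routine. Given $r\in U\cap V$, pick $\langle p_{1},q_{1}\rangle\check{\in}u$ and $\langle p_{2},q_{2}\rangle\check{\in}v$ with $|r-p_{1}|<q_{1}$ and $|r-p_{2}|<q_{2}$, set $\delta=\min\left(q_{1}-|r-p_{1}|,\,q_{2}-|r-p_{2}|\right)>0$, and choose (by density, together with the Archimedean property if desired) a rational $q$ with $0<q<\frac{2}{3}\delta$ and a rational $p$ with $|r-p|<\frac{q}{2}$. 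Then for $i=1,2$ we get $|p-p_{i}|+q<\frac{3}{2}q+|r-p_{i}|<\delta+|r-p_{i}|\le q_{i}$, so $\langle p,q\rangle\check{\in}w$, while $|r-p|<\frac{q}{2}<q$ shows $r\in dec(w)$. Feeding this binary case back into the induction completes (2). Throughout, the only thing requiring attention is the bookkeeping of strict versus non-strict inequalities, so that the balls nest as claimed; the genuine ingredients are just the choice of the code $w$ and the density argument.
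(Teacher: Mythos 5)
Your proof is correct, and part (1) coincides with the paper's: the code for the union is just $\cup$ applied to the $\i$-set of codes. For part (2) you take a genuinely different route. The paper's proof exploits the one-dimensional fact that the intersection of two balls with rational center and rational radius is again such a ball (or empty): the midpoint and half-length of $B_{q_1}\left(p_1\right)\cap B_{q_2}\left(p_2\right)$ are rational, so a code for $U\cap V$ is simply the set of codes of the pairwise intersections of a ball from a code of $U$ with a ball from a code of $V$, and one gets $dec\left(w\right)=U\cap V$ directly, with no density argument. Your code instead collects \emph{all} rational balls that nest inside some ball of each of the two given codes; this forces you to invoke the density of $\mathbb{Q}$ in $\mathbb{R}$ (and a small Archimedean computation) to recover the reverse inclusion, but in exchange your argument does not depend on intersections of balls being balls, so it would survive essentially unchanged in higher-dimensional or more general separable settings where the paper's one-line construction fails. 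Both constructions produce matrices that are safe w.r.t.\ $\emptyset$ after bounded quantification over the $\i$-sets $u$ and $v$, exactly as in the proof of Prop.~\ref{prop:open}(1), so both candidate codes are legitimate $\i$-sets; and your explicit meta-induction reducing finite intersections to the binary case just makes precise what the paper leaves implicit.
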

\begin{proof}
\begin{enumerate}
\item Let $X$ be a $\i$-set of codes for open $\i$-classes. Thus, $\cup X$
is a code for the union of the corresponding open $\i$-classes. 
\item If $U$ and $V$ are open $\i$-classes, a code for their intersection
is obtained by intersecting every ball in a code for $U$ with every
ball in a code for $V$.\qedhere
\end{enumerate}
\end{proof}

\begin{rem}
\label{Note:scheme}\emph{In general,} when we say that a theorem about a $\i$-class
or a $\i$-function is provable in $RST_{HF}^m$ (as in Prop. \ref{prop:class-union})
we mean that it can be formalized and proved as a
{\em scheme}, that is: its proof can be carried out in $RST_{HF}^m$
using a uniform scheme. However, propositions about open
$\i$-classes form an exception, because due to the coding machinery, they can be fully formalized
and proved in $RST_{HF}^m$.
\end{rem}

\begin{exa}
\label{example1}As an example of the use of the coding technique,
we demonstrate the formalization of Prop. \ref{prop:proxy-1}(1): 
\begin{gather*}
\forall z.(\forall x\in z.x\subseteq\widetilde{\mathbb{Q}\times\mathbb{Q}^{+}})\rightarrow\exists w\subseteq\widetilde{\mathbb{Q}\times\mathbb{Q}^{+}}.dec\left(w\right)=\class r{\exists x\in z.r\in dec\left(x\right)}
\end{gather*}
\end{exa}
\begin{defi}
A $\i$-class $X\subseteq\mathbb{R}$ is \emph{closed} if $\mathbb{R}-X$
is open.
\end{defi}
\begin{lem}
\label{lem:open ball about}It is provable in $RST_{HF}^{m}$ that
if $U\subseteq\mathbb{R}$ is an open $\i$-class, then for every
$x\in U$ there is an open ball about $x$ which is contained in $U$.
\end{lem}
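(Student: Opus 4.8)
The plan is to unwind the definitions and produce, for a given point $x\in U$, a rational open ball $B_q(p)$ with $x\in B_q(p)\subseteq U$, using only rational parameters so that the ball is itself an open $\i$-class in the sense of the coding machinery. First I would fix a $\i$-set code $u\subseteq\widetilde{\mathbb{Q}\times\mathbb{Q}^{+}}$ for $U$, so that by definition $x\in U$ means there exist $p_0,q_0$ with $\langle p_0,q_0\rangle\check{\in}u$ and $|x-p_0|<q_0$. From $|x-p_0|<q_0$ and the density of $\mathbb{Q}$ in $\mathbb{R}$ (available in $RST_{HF}^m$ by Proposition~\ref{prop:R operations}, together with the Archimedean property) I would pick a rational $\varepsilon\in\mathbb{Q}^{+}$ with $|x-p_0|+\varepsilon<q_0$, and then a rational $p\in\mathbb{Q}$ with $|x-p|<\varepsilon/2$. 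Set $q:=\varepsilon/2\in\mathbb{Q}^{+}$. Then $x\in B_q(p)$ by construction, and for any $r\in B_q(p)$ the triangle inequality on $\mathbb{R}$ gives $|r-p_0|\le |r-p|+|p-x|+|x-p_0|< q+ \varepsilon/2 + |x-p_0| = \varepsilon + |x-p_0| < q_0$, so $r\in U$ via the same witness $\langle p_0,q_0\rangle\check{\in}u$. Hence $B_q(p)\subseteq U$.

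Second I would record that $B_q(p)$ is genuinely an open $\i$-class: by Proposition~\ref{prop:open}(2) every ball with rational center and positive rational radius is an open $\i$-class with code $\{\langle p,q\rangle\}$, so the conclusion is meaningful within the coding framework and not merely a statement about an external collection. Thus the lemma can be fully formalized (cf. Remark~\ref{Note:scheme}), the assertion being: for every $\i$-set $u\subseteq\widetilde{\mathbb{Q}\times\mathbb{Q}^{+}}$ and every $x\in dec(u)$, there exist $p\in\widetilde{\mathbb{Q}}$ and $q\in\widetilde{\mathbb{Q}^{+}}$ with $x\in dec(\{\langle p,q\rangle\})$ and $dec(\{\langle p,q\rangle\})\subseteq dec(u)$. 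Since all quantifiers over $\mathbb{Q}$, $\mathbb{Q}^{+}$ and over elements of the $\i$-set $u$ are bounded, the whole statement is expressible by an absolute formula, so no appeal beyond the basic arithmetic of $\mathbb{Q}$ and $\mathbb{R}$ already developed in Sections~\ref{sec:The-Natural-Numbers} and the previous subsection is needed; the argument is carried out entirely inside $RST_{HF}^m$.

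The only mildly delicate point is that the definition of ``open'' uses balls $B_q(p)$ with $p,q$ \emph{rational}, whereas the naive proof would want to take the ball of radius $q_0-|x-p_0|$ about $x$, which in general has irrational center and radius. That is exactly why the intermediate rational approximations $p$ and $\varepsilon$ are introduced above; the triangle-inequality estimate is arranged to leave enough slack ($q+\varepsilon/2+|x-p_0|<q_0$) to absorb the approximation error. This rational-shrinking step is the one place where a little care is required, but it is routine given that density and the Archimedean principle for $\mathbb{R}$ are already available; everything else is a direct unfolding of $dec$ and the definition of a code.
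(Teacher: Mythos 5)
Your argument is correct, but it takes a more roundabout route than the paper's, and it stops one (trivial) step short of the stated conclusion. The paper's proof is exactly the ``naive'' one you set aside: given $\langle p,q\rangle$ in the code with $x\in B_q(p)$, it takes the ball about $x$ of radius $q-|p-x|$ and observes $B_{q-|p-x|}(x)\subseteq B_q(p)\subseteq U$. Your worry that this ball has irrational center and radius is unfounded in this framework: the definition of $B_q(p)$ allows arbitrary $p\in\mathbb{R}$, $q\in\mathbb{R}^{+}$, and Proposition~\ref{prop:open}(2) (via part (1), which admits codes $u\subseteq\mathbb{R}\times\mathbb{R}^{+}$) already shows that every such ball is an open $\i$-class, so no rational shrinking is needed. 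What your detour buys is a ball with a singleton \emph{rational} code $\{\langle p,q\rangle\}$, which is a slightly stronger and occasionally convenient conclusion, at the cost of invoking density of $\mathbb{Q}$ and the Archimedean principle. The one genuine discrepancy: ``an open ball about $x$'' in this paper means a ball \emph{centered} at $x$ (cf.\ the formalization $B_{\varepsilon}(x)$ in Example~\ref{Example2} and the use in Proposition~\ref{prop: subsepar}), whereas you produce a ball merely \emph{containing} $x$. You should add the final line: having obtained $x\in B_q(p)\subseteq U$, set $\delta=q-|x-p|>0$ and conclude $B_{\delta}(x)\subseteq B_q(p)\subseteq U$. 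With that addendum your proof is complete; ironically, this last step is precisely the real-parameter ball you were trying to avoid.
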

\begin{proof}
If $x\in U$, then there is some $\left\langle p,q\right\rangle $
in the code of $U$ such that $x\in B_{q}\left(p\right)$. Take $\varepsilon=\left|p-x\right|$.
It is straightforward to see that 
$B_{\varepsilon}\left(x\right)\subseteq B_{q}\left(p\right)\subseteq U$.
\end{proof}

The proof of the next Lemma is trivial.

\begin{lem}
\label{lem:dense}Let $X\subseteq\mathbb{R}$ be a $\i$-class and
$A\subseteq X$ be a $\i$-set. The following are equivalent in $RST_{HF}^m$: 
\begin{enumerate}
\item Every open ball about a point in $X$ intersects $A$.
\item Every open $\i$-class that intersects $X$ also intersects $A$.
\end{enumerate}
\end{lem}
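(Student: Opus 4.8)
The plan is to prove the two implications directly, unwinding the definitions of "open $\i$-class" in terms of codes, and using Lemma \ref{lem:open ball about} for the direction that needs it.

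\emph{From (1) to (2):} Suppose every open ball about a point in $X$ intersects $A$, and let $U$ be an open $\i$-class that intersects $X$, say $x \in U \cap X$. By Lemma \ref{lem:open ball about}, there is an open ball $B_q(p)$ about $x$ with $B_q(p) \subseteq U$. Since $x \in X$, assumption (1) gives a point $a \in A \cap B_q(p)$, and hence $a \in A \cap U$. So $U$ intersects $A$. (Formally, in $RST_{HF}^m$ this is carried out as a scheme: $U$ is given via a code $u \subseteq \widetilde{\mathbb{Q}\times\mathbb{Q}^{+}}$, the ball $B_q(p)$ is produced as in the proof of Lemma \ref{lem:open ball about} with rational $p,q$, and everything reduces to absolute statements about $\mathbb{Q}$, $A$, and the code $u$.)

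\emph{From (2) to (1):} Suppose every open $\i$-class that intersects $X$ also intersects $A$, and let $x \in X$ and $q \in \mathbb{R}^{+}$; we must show $B_q(x) \cap A \neq \emptyset$. By Proposition \ref{prop:open}(2) (or its proof), $B_q(x)$ is an open $\i$-class, and it obviously intersects $X$ since $x \in B_q(x) \cap X$. Hence by (2) it intersects $A$, which is exactly what we wanted. Again, when formalized in $RST_{HF}^m$ this is uniform in $x$ and $q$: one exhibits a code for $B_q(x)$ (using rationals below $q$ and near $x$, as in Proposition \ref{prop:open}) and the rest is an absolute argument.

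\emph{Expected main obstacle.} There is essentially no mathematical difficulty here — the lemma is, as the paper says, trivial once the coding conventions are in place. The only care needed is the bookkeeping: making sure that in the (2)$\Rightarrow$(1) direction the ball $B_q(x)$ is presented with an honest code $u \subseteq \widetilde{\mathbb{Q}\times\mathbb{Q}^{+}}$ (so that it really is an open $\i$-class in the sense of the relevant definition, not just a $\i$-class with an open extension), and in the (1)$\Rightarrow$(2) direction that the shrinking of $U$ to a subball about $x$ goes through the explicit construction in Lemma \ref{lem:open ball about}. Both points are already handled by the cited results, so the proof is a one-paragraph appeal to them.
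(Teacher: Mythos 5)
Your proof is correct and is exactly the standard argument the paper has in mind when it declares this lemma trivial and omits the proof: Lemma \ref{lem:open ball about} handles (1)$\Rightarrow$(2) and Proposition \ref{prop:open}(2) handles (2)$\Rightarrow$(1). Nothing to add.
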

\begin{exa}
\noindent \label{Example2}As an example of a full formalization which
uses class variables, the formalization of the Lemma above is: 
{\small{}
\begin{gather*}
\phi:=\boldsymbol{X}\subseteq\bar{\mathbb{R}}\rightarrow\forall a\subseteq\boldsymbol{X}\left(\forall x\in\boldsymbol{X}\forall\varepsilon\in\bar{\mathbb{R}^{+}}\left(B_{\varepsilon}\left(x\right)\cap a\neq\emptyset\right)\leftrightarrow\right.\\
\left.\forall u\subseteq\widetilde{\mathbb{Q}\times\mathbb{Q}^{+}}\left(dec\left(u\right)\cap\boldsymbol{X}\neq\emptyset\rightarrow dec\left(u\right)\cap a\neq\emptyset\right)\right)
\end{gather*}
}
We now demonstrate how to obtain an equivalent schema in the basic $\lng$
by replacing each appearance of a class term or a variable with the
formula it stands for. First, we explain the translation
of $x\in\bar{\mathbb{R}}$ to $\lng$ . One iteration of the translation
entails $x\in\overline{P_{J_{2}}\left(\mathbb{Q}\right)\setminus
\left\{ \emptyset,\mathbb{Q}\right\} }\wedge
\varphi\left(x\right)\w
\psi\left(x\right)$
for $\varphi,\psi$  as in Def. \ref{Def Reals}. A second iteration
yields $R\left(x\right):=x\subseteq\tilde{\mathbb{Q}}\wedge x\neq\tilde{\mathbb{Q}}\wedge 
x\neq\emptyset\wedge \varphi\left(x\right)\wedge\psi\left(x\right)$
which is in $\lng$. For the  translation of $\phi$,
first substitute $\class x{\theta}$ for $\boldsymbol{X}$, where $\theta\safe\emptyset$.
Proceeding with the translation steps results in the following formula
(scheme) of $\lng$, for $\theta\safe\emptyset$: {\small{}
\begin{gather*}
{\normalcolor \forall b\left(\theta\left(b\right)\rightarrow R\left(b\right)\right)\rightarrow\forall a\left(\left(\forall z.z\in a\rightarrow\theta\left(z\right)\right)\rightarrow\forall x\left(\theta\left(x\right)\rightarrow\forall\varepsilon\left(\left(R\left(\varepsilon\right)\wedge0<\varepsilon\right)\rightarrow\right.\right.\right.}\\
{\normalcolor \exists w.\left|w-x\right|<\varepsilon\wedge w\in a\leftrightarrow\forall u\subseteq\widetilde{\mathbb{Q}\times\mathbb{Q}^{+}}\left(\exists w.R\left(w\right)\wedge\exists p,q\left(\left\langle p,q\right\rangle \check{\in}u\wedge\left|w-p\right|<q\right)\wedge\right.}\\
{\normalcolor \left.\left.\theta\left(w\right)\right)\rightarrow\exists w.R\left(w\right)\wedge\exists p,q\left(\left\langle p,q\right\rangle \check{\in}u\wedge\left|w-p\right|<q\right)\wedge w\in a\right)}
\end{gather*}
}{\small \par}
\end{exa}

\begin{defi}
Let $X\subseteq\mathbb{R}$ be a $\safe$-class, and $A\subseteq X$
a $\safe$-set. $A$ is called \emph{dense in $X$ }if one of the
conditions of Lemma \ref{lem:dense} holds. $X$ is called \emph{separable}
if it contains a dense $\safe$-subset.
\end{defi}
\begin{prop}
\label{prop: subsepar}It is provable in $RST_{HF}^m$ that an
open $\safe$-subclass of a separable $\safe$-class is separable.
\end{prop}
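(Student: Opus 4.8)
The plan is to produce, as the required dense $\safe$-subset of the open $\safe$-class $U\subseteq X$, the intersection $A\cap U$, where $A$ is a dense $\safe$-subset witnessing the separability of $X$. So the two things to establish are: $A\cap U$ is a legitimate $\safe$-subset of $U$, and $A\cap U$ is dense in $U$ in the sense of Lemma~\ref{lem:dense}.

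First I would check legitimacy. Since $A$ is a $\safe$-set and $U$ a $\safe$-class, Proposition~\ref{prop:Basic setclass}(2) gives that $A\cap U$ is a $\safe$-set; concretely it is defined by the legal term $\setin z{z\in\widetilde{A}\wedge z\in\bar U}$, which is legal because $z\in\bar U\safe\emptyset$ and $z\in\widetilde{A}\safe\{z\}$, so the conjunction is $\safe\{z\}$. Moreover $A\cap U\subseteq U\subseteq\mathbb{R}$, so $A\cap U$ is a $\safe$-subset of the $\safe$-class $U$, hence eligible to be tested for density by Lemma~\ref{lem:dense}.

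The main step is to prove in $RST_{HF}^m$ that $A\cap U$ is dense in $U$, for which I would verify condition~(2) of Lemma~\ref{lem:dense} for the pair $(U,A\cap U)$, namely: every open $\safe$-class $V$ that meets $U$ also meets $A\cap U$. Reasoning inside $RST_{HF}^m$: suppose $V$ is open and $V\cap U\neq\emptyset$. By Proposition~\ref{prop:class-union}(2) the intersection $V\cap U$ is again an open $\safe$-class, and it meets $X$, since it is nonempty and $V\cap U\subseteq U\subseteq X$. Because $A$ is dense in $X$, condition~(2) of Lemma~\ref{lem:dense} applied to the pair $(X,A)$ yields a point $z\in(V\cap U)\cap A$. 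Then $z\in A$ and $z\in U$, so $z\in A\cap U$, and also $z\in V$; hence $V$ meets $A\cap U$. As $V$ was arbitrary, Lemma~\ref{lem:dense} gives that $A\cap U$ is dense in $U$, and therefore $U$ is separable. (A slightly more elementary variant avoids Proposition~\ref{prop:class-union}(2): verify condition~(1) instead, shrinking any ball about a point $x\in U$ to an open ball $B_\eta(x)\subseteq U$ via Lemma~\ref{lem:open ball about} and Proposition~\ref{prop:open}(2), then apply density of $A$ in $X$ to that smaller ball.)

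Regarding formalization, by Remark~\ref{Note:scheme} the statement is proved as a scheme: $X$ is presented by a formula $\theta$ with $\theta\safe\emptyset$ (and its dense $\safe$-subset by a term), while the open $\safe$-classes $U$ and $V$ are handled through codes in $\widetilde{\mathbb{Q}\times\mathbb{Q}^{+}}$ via the $dec$/$Open$ machinery described before Proposition~\ref{prop:open}; "$V\cap U$ is open" then unwinds to the explicit construction of a code from codes for $U$ and $V$ used in the proof of Proposition~\ref{prop:class-union}(2). I do not expect a genuine obstacle here: the only thing demanding care is the bookkeeping of which objects are $\safe$-sets and which are $\safe$-classes, so that every term written down (in particular $A\cap U$) is legal, together with the routine unwinding of the coding, which Remark~\ref{Note:scheme} already guarantees goes through cleanly for open $\safe$-classes.
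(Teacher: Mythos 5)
Your proof is correct and takes essentially the same approach as the paper: both produce the dense subset of the open subclass by intersecting it with the given dense set, and both justify its legality via Proposition~\ref{prop:Basic setclass}(2). The only (immaterial) difference is that the paper verifies density via condition~(1) of Lemma~\ref{lem:dense} by shrinking a ball with Lemma~\ref{lem:open ball about} --- exactly the ``elementary variant'' you mention in parentheses --- whereas your main line checks condition~(2) using Proposition~\ref{prop:class-union}(2).
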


\begin{proof}
Let $X$ be an open $\safe$-subclass of the separable $\safe$-class
$S$, and let $D$ be the dense $\safe$-subset in $S$. Let $B$
be an open ball with center $x\in X$. By Lemma~\ref{lem:open ball about} there
is a ball $B'$ about $x$ s.t. $B'\subseteq B\cap X$. Since $D$
is dense in $S$, $B'\cap D\neq\emptyset$. Hence, $B\cap D\cap X\neq\emptyset$,
and so $D\cap X$ is dense in $X$. 
\end{proof}

\noindent Now we can finally turn to prove a more encompassing least upper bound theorem.
\begin{thm}
\label{thm:lubinterval}It is provable in $RST_{HF}^m$ that every
nonempty separable $\safe$-subclass of $\mathbb{R}$ that is bounded
above has a least upper bound in $\mathbb{R}$.
\end{thm}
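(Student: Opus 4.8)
The plan is to reduce the statement for a separable $\safe$-class to the already-established Theorem~\ref{prop:R-completeness} for $\safe$-sets, by passing to a dense $\safe$-subset. Concretely: given a nonempty separable $\safe$-subclass $X\subseteq\mathbb{R}$ that is bounded above, I would unfold separability to obtain a $\safe$-set $A$ with $A\subseteq X$ which is dense in $X$ in the sense of Lemma~\ref{lem:dense}. First I would check that $A$ is a nonempty $\safe$-subset of $\mathbb{R}$ that is bounded above: boundedness is inherited since $A\subseteq X$, and nonemptiness follows by applying the density of $A$ to the open ball $B_{1}(x)$ about any $x\in X$ (which exists because $X$ is nonempty). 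Then Theorem~\ref{prop:R-completeness} applies to $A$ and yields $s:=\cup A\in\mathbb{R}$, which is the least upper bound of $A$. The claim I would then prove is that this same $s$ is the least upper bound of the whole class $X$.

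The key step is showing that $s$ is an upper bound of $X$, and this is exactly where density is used. I would argue by contradiction: if some $x\in X$ satisfied $s<x$, put $\varepsilon=(x-s)/2\in\mathbb{R}^{+}$; density of $A$ in $X$ forces $B_{\varepsilon}(x)\cap A\neq\emptyset$, so some $a\in A$ has $|a-x|<\varepsilon$ and hence $a>x-\varepsilon=(x+s)/2>s$, contradicting that $s$ bounds $A$. Minimality is then immediate: any upper bound $t\in\mathbb{R}$ of $X$ also bounds $A\subseteq X$, so $t\ge s$ because $s$ is the \emph{least} upper bound of $A$. All of this uses only the ordering and arithmetic on $\mathbb{R}$, which are available as $\safe$-relations and $\safe$-functions by Proposition~\ref{prop:R operations}.

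As for the formalization: following Remark~\ref{Note:scheme} this theorem is proved as a scheme, with $X$ presented as a $\safe$-class $\class x\theta$ for $\theta\safe\emptyset$ and ``separable'' spelled out as an existential assertion about a $\safe$-set $A$; each of the steps above is routinely expressible and derivable in $RST_{HF}^m$. I do not expect a genuine obstacle here --- the only point requiring care is that Theorem~\ref{prop:R-completeness} must be invoked on the honest $\safe$-set $A$, not on the proper class $X$, and supplying such an $A$ is precisely the content of the separability hypothesis; the rest is elementary order-theoretic bookkeeping about least upper bounds.
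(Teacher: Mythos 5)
Your proposal is correct and follows essentially the same route as the paper's proof: extract the dense $\safe$-subset $A$ from the separability hypothesis, apply Theorem~\ref{prop:R-completeness} to $A$, and use density (via a small open ball about a hypothetical $x\in X$ exceeding the bound) to show that the least upper bound of $A$ also bounds $X$. The only difference is your explicit verification that $A$ is nonempty before invoking Theorem~\ref{prop:R-completeness}, a point the paper leaves implicit.
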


\begin{proof}
Let $X$ be a nonempty separable $\safe$-subclass of $\mathbb{R}$
that is bounded above, and let $A$ be a dense $\safe$-subset in
$X$. By Theorem \ref{prop:R-completeness}, $A$ has a least upper
bound, denote it $m$. Suppose $m$ is not an upper bound of $X$,
i.e. there exists $x\in X-A$ s.t. $x>m$. Take $\varepsilon_{m}=\left|x-m\right|$.
Then, there is no $a\in A$ s.t. $a\in A\cap\left(x-\varepsilon_{m},x+\varepsilon_{m}\right)$,
which contradicts $A$ being dense in $X$. That $m$ is the least
upper bound of $X$ is immediate.
\end{proof}

\begin{exa}
\label{Example3}To demonstrate the formalization in $\lng$ of the
last theorem, denote by $separ\left(\boldsymbol{U}\right)$ the formula
$\exists d.d\subseteq\boldsymbol{U}\wedge\forall x\in\boldsymbol{U}\forall\varepsilon\in\bar{\mathbb{R}^{+}}B_{\varepsilon}\left(x\right)\cap\boldsymbol{U}\neq\emptyset$,
and by $bound_{\boldsymbol{U}}\left(w\right)$ the formula $\forall x\in\boldsymbol{U}.x\leq w$.
Now, the full formalization is:
\begin{gather*}
\left(\boldsymbol{U}\subseteq\mathbb{R}\wedge\boldsymbol{U}\neq\emptyset\wedge separ\left(\boldsymbol{U}\right)\wedge\exists w\in\bar{\mathbb{R}}.bound{}_{\boldsymbol{U}}\left(w\right)\right)\rightarrow\\
\exists v\in\bar{\mathbb{R}}\left(bound_{\boldsymbol{U}}\left(v\right)\wedge\forall w\in\bar{\mathbb{R}}\left(bound_{\boldsymbol{U}}\left(w\right)\rightarrow v\leq w\right)\right)
\end{gather*}
\end{exa}

\begin{defi}
\label{Def: interval}A $\safe$-class $X\subseteq\mathbb{R}$ is
called an \emph{interval }if\emph{ }for any $a,b\in X$ s.t. $a<b$:
if $c\in\mathbb{R}\wedge a<c<b$ then $c\in X$.
\end{defi}


\begin{prop}
\label{prop:separ}It is provable in $RST_{HF}^{m}$ that any non-degenerate
interval is separable.
\end{prop}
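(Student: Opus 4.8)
The plan is to exhibit an explicit dense $\safe$-subset of the interval, namely its set of rational points. So let $X\subseteq\mathbb{R}$ be a non-degenerate interval; then $X$ is a $\safe$-class and there are $a,b\in X$ with $a<b$. Write $\mathbb{Q}_{\mathbb{R}}$ for the copy of $\mathbb{Q}$ inside $\mathbb{R}$, i.e.\ the set of Dedekind cuts $\{p\in\mathbb{Q}\mid p<q\}$ for $q\in\mathbb{Q}$. This is a $\safe$-set: it is the image of the $\safe$-set $\mathbb{Q}$ under the canonical embedding $q\mapsto\{p\in\mathbb{Q}\mid p<q\}$, which is a $\safe$-function (it is given by a term, since $\langle p,q\rangle\check{\in}\,\widetilde{<_{\mathbb{Q}}}$ is safe w.r.t.\ $\{p\}$ by Lemma~\ref{prop:pairs}), so Prop.~\ref{prop:replacement} applies. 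Set $D:=X\cap\mathbb{Q}_{\mathbb{R}}$. By Prop.~\ref{prop:Basic setclass}(2), $D$ is a $\safe$-set, and clearly $D\subseteq X$.

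It then remains to show that $D$ is dense in $X$; I would verify condition~(1) of Lemma~\ref{lem:dense}, i.e.\ that $B_{\varepsilon}(x)\cap D\neq\emptyset$ for every $x\in X$ and $\varepsilon\in\mathbb{R}^{+}$. Fix such $x$ and $\varepsilon$. If $x<b$, then since the rationals are dense in $\mathbb{R}$ (Prop.~\ref{prop:R operations}(1)) there is a rational $q$ with $x<q<\min(b,x+\varepsilon)$; as $x,b\in X$ and $x<q<b$, Def.~\ref{Def: interval} gives $q\in X$, and $|q-x|<\varepsilon$ gives $q\in B_{\varepsilon}(x)$, so $q\in D\cap B_{\varepsilon}(x)$. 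If instead $x\geq b$, then $a<x$, and symmetrically one finds a rational $q$ with $\max(a,x-\varepsilon)<q<x$, which again lies in $D\cap B_{\varepsilon}(x)$. Hence $D$ is a dense $\safe$-subset of $X$, so $X$ is separable.

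Following Remark~\ref{Note:scheme}, all of this is read as a scheme in $RST_{HF}^{m}$: the interval $X$ is presented by an arbitrary class term $\{x\mid\theta\}$ with $\theta\succ\emptyset$, and every ingredient used — density of $\mathbb{Q}$ in $\mathbb{R}$, the defining property of an interval, and membership in $B_{\varepsilon}(x)$ — is expressed by an absolute formula, so each instance of the argument is a genuine $RST_{HF}^{m}$-derivation. I expect the one point needing real care to be the first: confirming that $D$ (equivalently, $\mathbb{Q}_{\mathbb{R}}$) is an honest $\safe$-set and not merely a $\safe$-class; once that is in place, the density verification is entirely elementary.
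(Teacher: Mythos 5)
Your proposal is correct and follows essentially the same route as the paper: the paper also takes the dense subset to be $X\cap\mathbb{Q}$, invokes Prop.~\ref{prop:Basic setclass}(2) to see it is a $\safe$-set, and then appeals to the standard density-of-rationals argument. Your extra care in distinguishing $\mathbb{Q}$ from its copy $\mathbb{Q}_{\mathbb{R}}$ of rational cuts inside $\mathbb{R}$ (and checking via replacement that the latter is a $\safe$-set) is a detail the paper elides but is a worthwhile clarification, not a different approach.
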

\begin{proof}
Let $X$ be a non-degenerate interval. Take $A$ to be $X\cap\mathbb{Q}$.
By Prop. \ref{prop:Basic setclass}(2) $A$ is a $\safe$-set. A standard
argument shows that in every open ball about a point in $X$ there
is a rational number, and thus it intersects $A$.
\end{proof}
\begin{cor}
\label{cor:interval lub}It is provable in $RST_{HF}^{m}$ that
any non-degenerate interval that is bounded above has a least upper
bound.
\end{cor}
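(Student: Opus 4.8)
The plan is to combine Proposition~\ref{prop:separ} with Theorem~\ref{thm:lubinterval} in the obvious way. Indeed, Corollary~\ref{cor:interval lub} asserts exactly that any non-degenerate interval that is bounded above has a least upper bound, and both ingredients needed for this are already in place: Proposition~\ref{prop:separ} tells us that a non-degenerate interval is separable, and Theorem~\ref{thm:lubinterval} tells us that a nonempty separable $\safe$-subclass of $\mathbb{R}$ that is bounded above has a least upper bound in $\mathbb{R}$.

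So first I would let $X$ be a non-degenerate interval that is bounded above. Since $X$ is non-degenerate, it contains two distinct reals and hence (being an interval) is nonempty; in fact it contains the whole open segment between them, so it is certainly nonempty. Next I would invoke Proposition~\ref{prop:separ} to conclude that $X$ is separable. At this point $X$ is a nonempty separable $\safe$-subclass of $\mathbb{R}$ which is bounded above, so Theorem~\ref{thm:lubinterval} applies directly and yields a least upper bound for $X$ in $\mathbb{R}$. All of this reasoning can be carried out inside $RST_{HF}^m$ as a scheme, since each of the cited results is itself provable there (in the schematic sense of Remark~\ref{Note:scheme}).

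There is essentially no obstacle here: the corollary is a one-line deduction from the preceding proposition and theorem. The only minor point worth noting is the bookkeeping that ``non-degenerate'' guarantees nonemptiness, which is immediate from Definition~\ref{Def: interval} once one observes that a non-degenerate interval contains at least two points $a<b$ and hence contains all reals strictly between them. Thus the proof is simply: $X$ is nonempty and separable (Prop.~\ref{prop:separ}) and bounded above, so by Theorem~\ref{thm:lubinterval} it has a least upper bound in $\mathbb{R}$.
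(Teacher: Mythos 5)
Your proof is correct and matches exactly the route the paper intends: the corollary is stated without proof precisely because it is the immediate combination of Proposition~\ref{prop:separ} (non-degenerate intervals are separable) with Theorem~\ref{thm:lubinterval}. Your additional remark that non-degeneracy guarantees nonemptiness is a reasonable piece of bookkeeping and does not change the argument.
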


\begin{defi}
A $\safe$-class $X\subseteq\mathbb{R}$ is called \emph{connected
}if there are no open $\safe$-classes $U$ and $V$ such that $X\subseteq U\cup V$,
$U\cap V\neq\emptyset$, $X\cap U\neq\emptyset$ and $X\cap V\neq\emptyset$.
\end{defi}
\begin{exa}
The formalization of the above definition can be given by:
\begin{gather*}
connected\left(\boldsymbol{X}\right):=\neg\exists u,v\subseteq\widetilde{\mathbb{Q}\times\mathbb{Q}^{+}}\left(\boldsymbol{X}\subseteq decode\left(u\right)\cup decode\left(v\right)\wedge\right.\\
\left.decode\left(u\right)\cap decode\left(v\right)\neq\emptyset\wedge\boldsymbol{X}\cap decode\left(u\right)\neq\emptyset\wedge\boldsymbol{X}\cap decode\left(v\right)\neq\emptyset\right)
\end{gather*}
\end{exa}
\begin{prop}
\label{prop:interval connected}Let $X\subseteq\mathbb{R}$ be a $\safe$-class.
It is provable in $RST_{HF}^{m}$ that $X$ is connected if and
only if it is an interval.
\end{prop}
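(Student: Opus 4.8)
The statement is the classical characterization of connected subsets of $\mathbb{R}$ as exactly the intervals, but with ``subset'' replaced by ``$\safe$-class'' and ``open set'' by ``open $\safe$-class.'' So the proof splits into two implications, and the work is to check that the standard arguments go through with only the coded open $\safe$-classes available and with all reasoning formalizable in $RST_{HF}^m$.

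\emph{Interval $\Rightarrow$ connected.} I would argue by contraposition: suppose $X$ is not connected, witnessed by open $\safe$-classes $U,V$ with $X\subseteq U\cup V$, $U\cap V\neq\emptyset$, $X\cap U\neq\emptyset$, $X\cap V\neq\emptyset$. Pick $a\in X\cap U$ and $b\in X\cap V$; WLOG $a<b$ (the other cases are symmetric, and $a\neq b$ since otherwise the point would lie in both $U$ and $V$, which is allowed here since we do \emph{not} require $U\cap V=\emptyset$ --- wait, we need to be careful, so I would instead take a point of $X$ in $U\setminus V$ and one in $V\setminus U$; such points exist, else $X\cap U\subseteq V$ or $X\cap V\subseteq U$ and one checks $X$ lies entirely in $U$ or entirely in $V$, contradicting nonemptiness of both traces). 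Set $c=\sup\{r\in\mathbb{R}\mid a\le r<b,\ [a,r]\subseteq U\}$. Since $[a,b]$ is a non-degenerate interval, by Corollary~\ref{cor:interval lub} (or directly Corollary~\ref{cor:interval lub} applied to the relevant bounded separable class) this supremum exists in $\mathbb{R}$; moreover $a\le c\le b$, so $c\in X$ because $X$ is an interval, hence $c\in U\cup V$. Using Lemma~\ref{lem:open ball about} --- every point of an open $\safe$-class has an open ball about it inside the class --- one derives a contradiction: if $c\in U$ then a ball about $c$ lies in $U$, pushing the supremum past $c$ (and $c<b$, else $b\in U$, contradiction); if $c\in V$ then a ball about $c$ lies in $V$, so points just below $c$ are in $V$ and not in $U$, contradicting the definition of $c$. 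This is the usual connectedness-of-intervals proof; the only subtlety is bookkeeping about which set is ``open but not necessarily disjoint,'' and that the relevant least upper bound is legitimately available.

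\emph{Connected $\Rightarrow$ interval.} Again by contraposition: if $X$ is not an interval, there are $a,b\in X$ with $a<b$ and some $c\in\mathbb{R}$, $a<c<b$, with $c\notin X$. Then take $U=\{r\in\mathbb{R}\mid r<c\}$ and $V=\{r\in\mathbb{R}\mid r>c\}$. I need these to be open $\safe$-classes with codes: each is a union of open balls $B_q(p)$ over suitable rational $\langle p,q\rangle$ (e.g. $U=\bigcup\{B_q(p)\mid p,q\in\mathbb{Q},\ p+q\le c\}$, and similarly for $V$), and the set of such codes is a $\safe$-set by Proposition~\ref{Pro:SetPairs-1} since the defining condition is $\safe\emptyset$ --- this is essentially Proposition~\ref{prop:open}(1). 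Then $X\subseteq U\cup V$ since $c\notin X$, $U\cap V\neq\emptyset$ is false as stated --- here I realize the definition of ``connected'' in this paper \emph{requires $U\cap V\neq\emptyset$} for a disconnection, which is nonstandard; so I must pad $U$ and $V$ slightly so their codes overlap without their union swallowing $c$: e.g. let $U'=\{r\mid r<c\}\cup B_1(c+10)$ is wrong since that adds points of $X$ possibly --- better, note $U\cap V\neq\emptyset$ as \emph{classes} can be arranged by choosing codes whose \emph{ball sets} overlap at some point far from $[a,b]$, say adding the ball $B_{1}(b+1)$ to both codes; since $b+1\notin X$ need not hold... I would instead simply re-read: actually overlapping $U,V$ is easy because we may freely enlarge each past $c$ on the far side, e.g. $U=\{r<c\}\cup\{r>b+1\}$ and $V=\{r>c\}$, so $U\cap V=\{r>b+1\}\neq\emptyset$, still $X\subseteq U\cup V$ (as $c\notin X$), $a\in X\cap U$, $b\in X\cap V$. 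Each is coded as above. That gives a disconnection, contradiction.

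\emph{Main obstacle.} The genuinely delicate point is not the topology but the \textbf{coding discipline}: every ``open $\safe$-class'' used in the argument must be exhibited with an explicit $\safe$-set code (via $dec$, as in the paragraph after Definition of $Open$), every quantifier over open $\safe$-classes must be read as a quantifier over such codes $u\subseteq\widetilde{\mathbb{Q}\times\mathbb{Q}^+}$, and the disconnection witnesses must be constructed so that the paper's (slightly unusual, $U\cap V\neq\emptyset$) definition is met --- which forces the small padding trick above. Everything else --- the existence of the relevant least upper bounds (Corollary~\ref{cor:interval lub} via Proposition~\ref{prop:separ} and Theorem~\ref{thm:lubinterval}), the ``ball inside an open class'' lemma (Lemma~\ref{lem:open ball about}), density of the rationals --- is already in hand, so the argument is a matter of assembling these with care that each step is a provable-in-$RST_{HF}^m$ scheme in the sense of Remark~\ref{Note:scheme}.
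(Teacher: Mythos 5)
Your overall route is the paper's: both directions are the standard real-analysis arguments, assembled from Lemma~\ref{lem:open ball about}, the least-upper-bound machinery, and explicit codes for the open classes. Two steps, however, would not go through as written. The more serious one is the class $\left\{ r\in\mathbb{R}\mid a\le r<b\wedge\left[a,r\right]\subseteq U\right\}$ at the heart of your forward direction: the condition $\left[a,r\right]\subseteq U$ unwinds to an \emph{unbounded} universal quantification over the proper class $\left[a,r\right]$ applied to a predicate that is only $\safe\emptyset$, and no clause of Definition~\ref{def: safety} (nor the abbreviation $t\subseteq\class x{\varphi}$, which requires a genuine term $t$ on the left) yields $\safe\emptyset$ for such a formula. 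So this collection is not a legal $\safe$-class, and Corollary~\ref{cor:interval lub} / Theorem~\ref{thm:lubinterval} cannot be applied to it. The paper's proof dodges exactly this: it takes the least upper bounds of $U_{0}=U\cap\left\{ z\in\mathbb{R}\mid z<v\right\}$ and $V_{0}=V\cap\left\{ z\in\mathbb{R}\mid z>u\right\}$, which are open and separable $\safe$-classes by Propositions~\ref{prop:class-union} and~\ref{prop: subsepar}, so Theorem~\ref{thm:lubinterval} applies to them directly. Your supremum argument needs to be rerouted through classes of this kind, whose defining formulas are visibly $\safe\emptyset$.

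The second problem is your treatment of the clause ``$U\cap V\neq\emptyset$''. Read literally, that clause makes the proposition false ($U=V=\mathbb{R}$ would then ``disconnect'' every nonempty class), so it must be a misprint for the disjointness condition $U\cap V=\emptyset$ of the classical definition --- and the paper's proof is indeed the usual disjoint-cover argument. You spot the anomaly, but instead of reading the clause as intended you pad the witnesses in the converse direction so as to force $U\cap V\neq\emptyset$, while your forward direction tacitly \emph{uses} disjointness (``points just below $c$ are in $V$ and not in $U$''). The two directions are thus proved against incompatible readings of the definition: under the literal reading the forward direction cannot be proved at all, and under the intended reading the padding is pointless. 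Fixing the reading to $U\cap V=\emptyset$ and replacing your supremum class as above essentially recovers the paper's proof.
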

\begin{proof}
Assume that there are open $\safe$-classes $U$ and $V$ such that
$X\subseteq U\cup V$, $U\cap V\neq\emptyset$, $X\cap U\neq\emptyset$,
and $X\cap V\neq\emptyset$ (recall that the formalization of the
existence of open $\safe$-classes is done using their codes). Choose
$u\in U$ and $v\in V$ and assume that $u<v$. Let $U_{0}=U\cap\setout{z\in\mathbb{R}}{z<v}$
and $V_{0}=V\cap\setout{z\in\mathbb{R}}{z>u}$. Prop. \ref{prop:proxy-1}
and Prop. \ref{prop: subsepar} entail that $U_{0}$ and $V_{0}$
are open, separable $\safe$-subclass of $\mathbb{R}$. Standard arguments
show that they are non-empty and bounded above. Thus, by Theorem \ref{thm:lubinterval},
$U_{0}$ and $V_{0}$ have least upper bounds. Following the standard
proof found in ordinary textbooks we can deduce that the least upper
bounds are elements in $\left[u,v\right]$, but not elements of $U_{0}$
or of $V_{0}$, which is a contradiction, since $\left[u,v\right]\subseteq U_{0}\cup V_{0}$.
 The classical proof of the converse direction can easily be carried
out in $RST_{HF}^{m}$.
\end{proof}

\subsection{Real Functions}
\begin{defi}
\label{seq}Let $X$ be a $\i$-class. A $\i$-\emph{sequence} in
$X$ is a $\i$-function on $\mathbb{N}$ whose image is contained
in $X$.
\end{defi}
\begin{lem}
\label{lem:Cauchy}It is provable in $RST_{HF}^m$ that 
Cauchy $\i$-sequences in $\mathbb{R}$ converge to limits in $\mathbb{R}$.
The induced map ($lim$) 
 is available in $RST_{HF}^m$ as a $\i$-function.
\end{lem}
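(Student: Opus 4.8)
\emph{Proof plan.} The plan is to exhibit the limit of a Cauchy $\i$-sequence explicitly as a Dedekind cut and then package the assignment $a\mapsto lim(a)$ as a function term. First I would pin down the domain. By Definition~\ref{seq} a $\i$-sequence in $\mathbb{R}$ is a $\i$-function on $\mathbb{N}$; since $\mathbb{N}$ is a $\i$-set, Proposition~\ref{prop:functions}(\ref{enu:-is-small}) makes it small, hence a $\i$-set, and by Proposition~\ref{prop:standard func} it is just a $\i$-set $a$ that is a single-valued relation with $Dom(a)=\widetilde{\mathbb{N}}$. Thus ``$a$ is a Cauchy $\i$-sequence in $\mathbb{R}$'' is expressed by a formula $\theta(a)$ with $Fv(\theta)\subseteq\{a\}$ -- the conjunction of single-valuedness (which must be written with bounded quantification \emph{over $a$}, e.g.\ $\forall p\in a\,\forall p'\in a\,(\pi_{1}(p)=\pi_{1}(p')\imp\pi_{2}(p)=\pi_{2}(p'))$, since the naive form $\langle b,c\rangle\check{\in}a\wedge\langle b,d\rangle\check{\in}a$ fails the $\wedge$-clause because $b$ occurs in both conjuncts), of $Dom(a)=\widetilde{\mathbb{N}}$, of $\forall p\in a.\,R(\pi_{2}(p))$ (i.e.\ $Im(a)\subseteq\mathbb{R}$), with $R$ the defining predicate of $\mathbb{R}$ from Example~\ref{Example2}, and of the usual Cauchy condition with all quantifiers bounded in the $\i$-set $\mathbb{N}$. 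Each conjunct is $\safe\emptyset$ (the atomic pieces are, the pairing formulas by Lemma~\ref{prop:pairs}, and quantifiers bounded in a $\i$-set preserve safety w.r.t.\ $\emptyset$, cf.\ Remark~\ref{rem:res-ind}), so $\mathsf{C}:=\class{a}{\theta(a)}$ is a legal closed class term defining the $\i$-class of Cauchy $\i$-sequences in $\mathbb{R}$.

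Next I would define the candidate limit. For a variable $a$, put
\[
t_{a}\ :=\ \setin{q}{\,q\in\widetilde{\mathbb{Q}}\wedge\exists\varepsilon\in\widetilde{\mathbb{Q}^{+}}\,\exists N\in\widetilde{\mathbb{N}}\,\forall n\in\widetilde{\mathbb{N}}\bigl(n\geq N\imp\exists v(\langle n,v\rangle\check{\in}a\wedge q+\varepsilon\in v)\bigr)\,}.
\]
The key point is again the safety accounting: $\langle n,v\rangle\check{\in}a\safe\{v\}$ by Lemma~\ref{prop:pairs}, and since $q+\varepsilon\in v$ is atomic ($q+\varepsilon$ being a legal term because $+$ on $\mathbb{Q}$ is a small $\i$-function), the $\wedge$-clause gives $\langle n,v\rangle\check{\in}a\wedge q+\varepsilon\in v\safe\{v\}$, hence $\exists v(\cdots)\safe\emptyset$; the surrounding quantifiers are bounded in the $\i$-sets $\mathbb{N}$ and $\mathbb{Q}^{+}$, so the matrix is $\safe\emptyset$ and $q\in\widetilde{\mathbb{Q}}\wedge\cdots\safe\{q\}$. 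Hence $t_{a}$ is a legal term with $Fv(t_{a})\subseteq\{a\}$, and by Definition~\ref{def:function}, $lim:=\left\Vert\lambda a\in\mathsf{C}.t_{a}\right\Vert$ is a $\i$-function on the $\i$-class $\mathsf{C}$. (Because $\mathsf{C}$ and $t_{a}$ are concrete, statements of the form ``$\forall a\in\mathsf{C}.(\dots)$'' unfold to genuine $\lng$-formulas rather than to schemes, cf.\ Remark~\ref{Note:scheme}.)

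It remains to verify, inside $RST_{HF}^m$, the analytic content, which is routine. A Cauchy $\i$-sequence $a$ is bounded: there is a rational $M$ with $-M<a(n)<M$ for all $n\in\mathbb{N}$. From this one checks by the standard arguments that $t_{a}$ is a nonempty ($-M-1\in t_{a}$ with witnesses $\varepsilon=1,\,N=0$), proper ($M\notin t_{a}$), downward closed subset of $\widetilde{\mathbb{Q}}$ without greatest element (if $q\in t_{a}$ via $\varepsilon,N$ then $q+\nicefrac{\varepsilon}{2}\in t_{a}$ via $\nicefrac{\varepsilon}{2},N$), so $t_{a}\in\mathbb{R}$; and, using that $<$ on $\mathbb{R}$ is $\subset$ and the arithmetic $\i$-functions of Proposition~\ref{prop:R operations}, that $t_{a}$ is the limit, i.e.\ $\forall\varepsilon\in\bar{\mathbb{R}^{+}}\,\exists N\in\widetilde{\mathbb{N}}\,\forall n\in\widetilde{\mathbb{N}}\bigl(n\geq N\imp|a(n)-t_{a}|<\varepsilon\bigr)$. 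This gives the characteristic property of $lim$. The main obstacle is precisely this safety accounting for the two displayed formulas (in particular the shared-variable obstruction in the functionality clause); the real-analysis reasoning itself transfers verbatim from the textbook. An alternative to the explicit cut would be to realise $lim(a)$ as $\limsup a=\inf_{N}\sup_{n\geq N}a(n)$, invoking Theorem~\ref{prop:R-completeness} once for a tail set and once for the $\i$-set of lower bounds of $\{\sup_{n\geq N}a(n)\mid N\in\mathbb{N}\}$, but that route carries heavier bookkeeping.
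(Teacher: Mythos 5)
Your proof is correct, and the overall strategy is the one the paper uses: exhibit the limit of a Cauchy $\i$-sequence by an explicit term of $\lng$ with $a$ as its only free variable, check it is a Dedekind cut and the limit, and then package $a\mapsto lim(a)$ via Definition~\ref{def:function}. The concrete term differs, though. The paper sets $v_{n}:=\bigcap_{k\geq n}a_{k}$ and takes $lim\,\lambda n.a_{n}:=\bigcup\setin{v_{n}}{n\in\widetilde{\mathbb{N}}}$ --- i.e.\ exactly the $\sup_{N}\inf_{k\geq N}$ route you set aside as carrying ``heavier bookkeeping''. In this framework that route is actually the lighter one: $\bigcap$ and $\bigcup$ over a $\i$-set are primitive term-formers from Lemma~\ref{lem:example}, and by Theorem~\ref{prop:R-completeness} the l.u.b.\ of a $\i$-set of reals is literally its union, so the whole construction is a one-line composition of already-certified operations with no fresh safety analysis. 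Your $\varepsilon$-margin cut $t_{a}$ buys a term whose verification is the verbatim textbook argument, at the price of the safety accounting you carry out (correctly). One small correction: the obstruction you describe for the ``naive'' functionality formula is not real. Lemma~\ref{prop:pairs} lets you register the second conjunct $\left\langle a,c\right\rangle \check{\in}f$ as safe w.r.t.\ $\left\{ c\right\} $ only (taking $a$ as the term $s$ in the clause $\left\langle s,x\right\rangle \check{\in}t\safe\left\{ x\right\} $), or you may simply use the atomic $\left\langle a,c\right\rangle \in f\safe\emptyset$; either way the asymmetric side condition of the $\wedge$-clause is met and the conjunction is safe w.r.t.\ $\left\{ a,b,c\right\} $ --- indeed the paper's own $Func\left(f\right)$ in Proposition~\ref{prop:+<} is written in exactly that naive form. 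Your workaround with quantifiers bounded over $a$ is harmless, just unnecessary.
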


\begin{proof}
Let $a$ be a Cauchy $\i$-sequence, and let $a_{k}$ abbreviate $a\left(k\right)$.
For $n\in\mathbb{N}$ define $v_{n}:=\bigcap_{k\geq n}a_{k}$. The
$l.u.b$ of $\lambda n.v_{n}$ equals the limit of $\lambda n.a_{n}$.
Thus, $lim\,\lambda n.a_{n}:=\underset{}{\bigcup}\setin{v_{n}}{n\in\widetilde{\mathbb{N}}}.$
\end{proof}

\begin{prop}
\label{prop:closeCauchy-1-1}It is provable in $RST_{HF}^m$ that
if $X\subseteq\mathbb{R}$ is closed, then every Cauchy $\i$-sequence
in $X$ converges to a limit in $X$.
\end{prop}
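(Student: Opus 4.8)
The plan is to reduce this to Lemma \ref{lem:Cauchy} together with the definition of closedness. Let $X \subseteq \mathbb{R}$ be a closed $\i$-class, and let $a$ be a Cauchy $\i$-sequence in $X$. First I would invoke Lemma \ref{lem:Cauchy}: since $a$ is in particular a Cauchy $\i$-sequence in $\mathbb{R}$, it converges to some limit $r \in \mathbb{R}$, and $r = \|lim\,\lambda n.a_n\|$ is obtained via the $\i$-function $lim$. The entire content of the proposition is therefore the single claim that this limit $r$ actually lies in $X$, and that this is provable in $RST_{HF}^m$.

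To establish $r \in X$, I would argue by contradiction: suppose $r \notin X$. Since $X$ is closed, $\mathbb{R} - X$ is an open $\i$-class, and $r \in \mathbb{R} - X$; hence by Lemma \ref{lem:open ball about} there is an open ball $B_\varepsilon(r)$ about $r$ with $B_\varepsilon(r) \subseteq \mathbb{R} - X$, i.e. $B_\varepsilon(r) \cap X = \emptyset$. On the other hand, since $a$ converges to $r$, there is $n$ such that $a_k \in B_\varepsilon(r)$ for all $k \geq n$; in particular $a_n \in B_\varepsilon(r)$. But $a$ is a $\i$-sequence in $X$, so $a_n \in X$, contradicting $B_\varepsilon(r) \cap X = \emptyset$. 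Therefore $r \in X$, which is what we wanted.

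For the formalization in $RST_{HF}^m$ I would follow the scheme convention of Remark \ref{Note:scheme}: the statement about the $\i$-class $X$ is proved as a scheme, with $X$ represented by a formula $\theta \safe \emptyset$, and $\mathbb{R} - X$ an open $\i$-class handled through its code as in the treatment of open classes above. The only ingredients needed beyond pure logic are: the convergence provided by Lemma \ref{lem:Cauchy}, the ball-extraction provided by Lemma \ref{lem:open ball about}, and the elementary fact (provable in $RST_{HF}^m$, since $<$, $+$, absolute value on $\mathbb{R}$ are available by Proposition \ref{prop:R operations}) that convergence to $r$ forces the tail of the sequence into every open ball about $r$. I expect the main (though still minor) obstacle to be bookkeeping: making precise, within the coding machinery, the passage from "$r \notin X$ and $X$ closed" to "some code pair $\langle p,q\rangle$ for $\mathbb{R}-X$ has $r \in B_q(p)$", and then shrinking to a ball disjoint from $X$ — but this is exactly what Lemma \ref{lem:open ball about} already packages, so no genuinely new difficulty arises.
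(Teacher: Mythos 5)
Your proposal is correct and follows essentially the same route as the paper's proof: apply Lemma \ref{lem:Cauchy} to obtain the limit in $\mathbb{R}$, then argue by contradiction that if the limit lay in the open $\i$-class $\mathbb{R}-X$ it would have an $\varepsilon$-ball around it disjoint from $X$, contradicting convergence of a sequence whose terms all lie in $X$. The extra remarks on scheme-level formalization and coding are consistent with the paper's conventions but not spelled out there.
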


\begin{proof}
Let $a$ be a Cauchy $\i$-sequence in $X$, and let $a_{k}$ abbreviate
$a\left(k\right)$. By Lemma \ref{lem:Cauchy}, $lim\,\lambda n.a_{n}$
is an element in $\mathbb{R}$, denote it by $l$. Assume by contradiction
that $l\in\mathbb{R}-X$. Since $X$ is closed, $\mathbb{R}-X$ is
open, and thus there exists $\varepsilon>0$ such that $B_{\varepsilon}\left(l\right)\subseteq\mathbb{R}-X$.
From this follows that for every $a_{k}$, $a_{k}\notin B_{\varepsilon}\left(l\right)$,
which contradicts the fact that $lim\,\lambda n.a_{n}=l$.
\end{proof}

Next we want to study sequences of functions, but
 Def. \ref{seq} cannot be applied as is, since $\i$-functions
which are proper $\i$-classes cannot be values of a $\i$-function (in particular,
of a \emph{$\i$-}sequence). Instead, we use the standard
Un-currying procedure.
\begin{defi}
For $X,Y$  $\i$-classes, a $\i$-\emph{sequence of }$\i$-\emph{functions}
on $X$ whose image is contained in $Y$ is a $\i$-function on $\mathbb{N}\times X$
with image contained in $Y$. 
\end{defi}
\begin{prop}
\label{prop:Pointwise}
Any point-wise limit of a $\i$-sequence of $\i$-functions on a $\i$-class $X\subseteq\mathbb{R}$
whose image is contained in $\mathbb{R}$ is available in $RST_{HF}^m$
as a $\i$-function.
\end{prop}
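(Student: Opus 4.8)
The plan is to mirror the proof of Lemma \ref{lem:Cauchy}, but with a parameter $x$ ranging over $X$. Let $G$ be a $\i$-sequence of $\i$-functions on the $\i$-class $X\subseteq\mathbb{R}$ with image in $\mathbb{R}$; by definition $G$ is a $\i$-function on $\mathbb{N}\times X$, so there is a term $t_G$ representing it, with $Fv(t_G)\subseteq\{n,x\}$, and $G(\langle n,x\rangle)=t_G$ for $n\in\mathbb{N}$, $x\in X$. Write $g_n(x)$ for $G(\langle n,x\rangle)$. Assume the hypothesis that for every $x\in X$ the sequence $\lambda n.g_n(x)$ converges (pointwise) in $\mathbb{R}$; call its limit $f(x)$. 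I would first observe that for each fixed $x\in X$ the sequence $\lambda n.g_n(x)$ is a Cauchy $\i$-sequence in $\mathbb{R}$, so by (the proof of) Lemma \ref{lem:Cauchy} its limit is obtained by the $l.u.b$-type formula there: $f(x)=\bigcup\setin{v_{n}(x)}{n\in\widetilde{\mathbb{N}}}$, where $v_n(x):=\bigcap_{k\ge n}g_k(x)$. The point is that this construction is uniform in $x$, so it yields a single term.

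Concretely, I would exhibit the representing term. Using the $\i$-set $\widetilde{\mathbb{N}}$ and the term $t_G$, put
\[
v:=\lambda n\in\widetilde{\mathbb{N}},x\in\bar{X}.\ \bigcap\setin{t_G}{k\in\widetilde{\mathbb{N}}\wedge n\le k},
\]
which by Lemma \ref{lem:Functions Severl variables} and Proposition \ref{prop:+<}(1) (availability of $\le$ on $\mathbb{N}$) is available in $RST_{HF}^m$ as a $\i$-function on $\mathbb{N}\times X$. Then define
\[
F:=\lambda x\in\bar{X}.\ \bigcup\setin{v(\langle n,x\rangle)}{n\in\widetilde{\mathbb{N}}}.
\]
One must check that the inner abstract set term is legal: $n\in\widetilde{\mathbb{N}}$ is safe w.r.t.\ $\{n\}$ and, after the $\beta$-reduction for $v$, the resulting formula defining the comprehension class is safe w.r.t.\ the bound variable, so $\setin{v(\langle n,x\rangle)}{n\in\widetilde{\mathbb{N}}}$ is a legal set term with free variable $x$; hence $F$ is a legal function term and, since $Fv$ of its body is $\subseteq\{x\}$ and $\bar X$ is a closed class term, $F$ is a $\i$-function on $X$ by Definition \ref{def:function}. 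Finally I would argue in $RST_{HF}^m$ that $F$ represents the pointwise limit: for each $x\in X$, the characteristic property of $v_n(x)$ and the Cauchy condition give, exactly as in Lemma \ref{lem:Cauchy}, that $\bigcup_{n}v_n(x)$ is a Dedekind cut equal to $\lim_n g_n(x)=f(x)$; here one uses that $\le$ on $\mathbb{N}$ coincides with $\in$ and that the order on $\mathbb{R}$ coincides with $\subseteq$ (Proposition \ref{prop:R operations}).

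The main obstacle I anticipate is purely syntactic bookkeeping: verifying that all the nested abstract set terms and $\lambda$-terms above are genuinely legal in $\lng$ — in particular that each comprehension body lands in the right slot of the safety relation $\saferst$ after the relevant $\beta$-reductions, using the $\check{\in}$/$\check{=}$ devices of Lemma \ref{prop:pairs} to pack and unpack the pairs $\langle n,x\rangle$ so that pairing does not spoil safety. The mathematical content (that the uniform $l.u.b.$ formula computes the pointwise limit) is routine given Lemma \ref{lem:Cauchy}; as the authors note at the start of Section \ref{sec:Real-Analysis}, I would carry out the genuine formalization only schematically and otherwise argue informally inside $RST_{HF}^m$, invoking Remark \ref{Note:scheme}.
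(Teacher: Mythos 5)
Your proposal is correct and follows essentially the same route as the paper: the paper's proof also applies the construction of Lemma \ref{lem:Cauchy} uniformly in the parameter $a\in X$, forming the graph term $G_{a}:=\setin{\left\langle n,\bar{F}\left(n,a\right)\right\rangle }{n\in\widetilde{\mathbb{N}}}$ and taking $\lambda a\in\bar{X}.\,lim\,G_{a}$, whereas you merely inline the definition of $lim$ as $\bigcup_{n}\bigcap_{k\geq n}g_{k}(x)$. The extra syntactic bookkeeping you flag is real but routine, and is left implicit in the paper as well.
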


\begin{proof}
Let $F$ be a $\i$-sequence of $\i$-functions on $X$ whose image
is contained in $\mathbb{R}$. Suppose that for each $a\in X$ the
$\i$-sequence $\lambda n.F\left(n,a\right)$ is converging, and so
it is Cauchy. Define: $G_{a}:=\setin{\left\langle n,\bar{F}\left(n,a\right)\right\rangle }{n\in\widetilde{\mathbb{N}}}$.
Then, $\left\Vert \lambda a\in\bar{X}.limG_{a}\right\Vert $ is the
desired $\i$-function.
\end{proof}

Next we turn to continuous real $\safe$-functions. One possibility
of doing so, adopted e.g., in \cite{simpson2009subsystems,weyl1932kontinuum},
is to introduce codes for continuous real $\i$-functions (similar
to the use of codes for open $\i$-classes). This is of course possible
as such $\i$-functions are determined by their values on the $\i$-set
$\mathbb{Q}$. However, we prefer to present here another approach,
which allows for almost direct translations of proofs in standard
analysis textbook into our system. This is done using free function
variables. Accordingly, the theorems which follow are schemes. Implicitly,
the previous sections of this paper can also be read and understood
as done in this manner. Therefore, in what follows we freely use results
from them.
\begin{defi}
\label{Def cont}Let $X\subseteq\mathbb{R}$ be a $\safe$-class and
let $F$ be a $\i$-function on $X$ whose image is contained in $\mathbb{R}$.
$F$ is called a \emph{continuous real $\i$-function }if: 
\[
\forall a\in X\forall\varepsilon\in\mathbb{R}^{+}\exists\delta\in\mathbb{R}^{+}\forall x\in X.\left|x-a\right|<\delta\rightarrow\left|F\left(x\right)-F\left(a\right)\right|<\varepsilon
\]
\end{defi}
\begin{prop}
\label{prop:epsilon}Let $X\subseteq\mathbb{R}$ be a $\safe$-class
and $F$ be a $\i$-function on $X$ whose image is contained in $\mathbb{R}$.
It is provable in $RST_{HF}^m$ that if for every open $\safe$-class
$B\subseteq\mathbb{R}$, there is an open $\i$-class $A$ s.t. $F^{-1}\left[B\right]=A\cap X$,
then $F$ is continuous.
\end{prop}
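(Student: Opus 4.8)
The plan is to prove the standard topological characterization of continuity: continuity of $F$ at each point follows from the pullback of every open $\i$-class being ``open in $X$'' (i.e. of the form $A\cap X$ for $A$ an open $\i$-class). First I would fix $a\in X$ and $\varepsilon\in\mathbb{R}^{+}$, and take $B$ to be the open ball $B_{\varepsilon}(F(a))$, which by Proposition~\ref{prop:open}(2) is an open $\i$-class. By hypothesis there is an open $\i$-class $A$ such that $F^{-1}[B]=A\cap X$. Since $F(a)\in B$, we have $a\in F^{-1}[B]=A\cap X$, so in particular $a\in A$.

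Next I would invoke Lemma~\ref{lem:open ball about}: because $A$ is an open $\i$-class and $a\in A$, there is an open ball $B_{\delta}(a)$ about $a$ with $B_{\delta}(a)\subseteq A$, for some $\delta\in\mathbb{R}^{+}$ (here I use that the radius witnessing membership in the code is positive, so $\delta$ can be taken in $\mathbb{R}^{+}$). Then for any $x\in X$ with $|x-a|<\delta$ we get $x\in B_{\delta}(a)\subseteq A$, hence $x\in A\cap X=F^{-1}[B]$, i.e. $F(x)\in B=B_{\varepsilon}(F(a))$, which is precisely $|F(x)-F(a)|<\varepsilon$. Since $a$ and $\varepsilon$ were arbitrary, $F$ satisfies the condition in Definition~\ref{Def cont}, so $F$ is a continuous real $\i$-function.

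For the formal content, one must remember that (as flagged before Definition~\ref{Def cont}) statements about $\i$-classes and $\i$-functions of this kind are proved as \emph{schemes}: the open $\i$-class $B$ is presented via its code, the hypothesis ``for every open $\safe$-class $B$ there is an open $\i$-class $A$ with $F^{-1}[B]=A\cap X$'' is read with $B$, $A$ ranging over codes, and $F$ is presented by a representing term together with a free function variable in the manner described in the passage preceding the definition. Under this reading each step above — forming the code of $B_{\varepsilon}(F(a))$, extracting $A$ from the hypothesis, applying Lemma~\ref{lem:open ball about} to obtain $\delta$, and the final membership chase — is a routine manipulation within $RST_{HF}^m$, using only facts about $<$, $|\cdot|$, and balls already established in this section.

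The main obstacle, such as it is, is bookkeeping rather than mathematics: one has to be careful that the $\delta$ produced by Lemma~\ref{lem:open ball about} genuinely lies in $\mathbb{R}^{+}$ (which it does, since the code of an open class uses radii in $\mathbb{Q}^{+}\subseteq\mathbb{R}^{+}$), and that the quantifier ``for every open $\safe$-class $B$'' in the hypothesis is instantiated correctly at the \emph{particular} open class $B_{\varepsilon}(F(a))$ when formalized via codes. Once these are handled, the proof is the textbook one and carries over verbatim.
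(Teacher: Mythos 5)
Your proof is correct and follows essentially the same route as the paper's: fix $a\in X$ and $\varepsilon$, pull back the ball $B_{\varepsilon}(F(a))$ to get an open $\i$-class $A$ containing $a$, use Lemma~\ref{lem:open ball about} to find $B_{\delta}(a)\subseteq A$, and chase memberships. Your added remarks on the scheme/code-based formalization and on $\delta\in\mathbb{R}^{+}$ are sensible but not needed beyond what the paper does.
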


\begin{proof}
Let $a\in X$, $\varepsilon>0$, and $V=B_{\varepsilon}\left(F\left(a\right)\right)$.
Since $V$ is an open $\i$-class, there is an open $\i$-class $A$
s.t. $F^{-1}\left[V\right]=A\cap X$ (which is a $\i$-class by Prop.
\ref{prop:functions}(3)). Also, $F\left(a\right)\in V$ which entails
$a\in F^{-1}\left[V\right]$, and thus $a\in A$. Since $A$ is open
there exists $\delta_{a}$ s.t. $B_{\delta_{a}}\left(a\right)\subseteq A$.
Take $\delta=\delta_{a}$. For any $x\in X$, if $\left|x-a\right|<\delta_{a}$
then $x\in B_{\delta_{a}}\left(a\right)\subseteq A$. Hence $x\in A\cap X=F^{-1}\left[V\right]$,
and therefore $F\left(x\right)\in V=B_{\varepsilon}\left(F\left(a\right)\right)$,
i.e. $\left|F\left(x\right)-F\left(a\right)\right|<\varepsilon$.
\end{proof}

\begin{lem}
\label{prop:comp sum prod}The following are provable in $RST_{HF}^m$: 
\begin{enumerate}
\item The composition, sum and product of two continuous real $\safe$-functions
is a continuous real $\safe$-function. 
\item The uniform limit of a $\i$-sequence of continuous real $\safe$-functions
is a continuous real $\safe$-function. 
\end{enumerate}
\end{lem}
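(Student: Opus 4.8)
The plan is to transfer the classical $\varepsilon$--$\delta$ arguments of elementary analysis into $RST_{HF}^m$, invoking at each step the earlier results that guarantee the constructed objects are genuine $\i$-functions. Recall (Remark~\ref{Note:scheme}) that since continuous real $\i$-functions are handled via free function variables, both clauses are to be read as schemes; so it suffices, for each construction, to exhibit a function term representing the result and a proof of the condition of Definition~\ref{Def cont} that goes through uniformly.

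For part~(1), fix continuous real $\i$-functions $F$ on a $\i$-class $X$ and $G$ on a $\i$-class $Y$, both with image in $\mathbb{R}$. For the composition I would assume $Im(F)\subseteq Y$; then $G\circ F$ is already available in $RST_{HF}^m$ as a $\i$-function on $X$ by Proposition~\ref{prop:functions}(4), and continuity is the usual two-step chase: given $a\in X$ and $\varepsilon\in\mathbb{R}^{+}$, use continuity of $G$ at $F(a)$ to get $\eta$, then continuity of $F$ at $a$ to get $\delta$, and note that $|x-a|<\delta$ forces $F(x)\in Y$ and $|F(x)-F(a)|<\eta$, hence $|G(F(x))-G(F(a))|<\varepsilon$. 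For the sum (taken on $X\cap Y$, a $\i$-class by Proposition~\ref{pro:classes}) I would represent $F+G$ by the term $\lambda x\in\overline{X\cap Y}.\,t_{F}(x)+t_{G}(x)$, where $t_{F},t_{G}$ represent $F,G$ and $+$ is the real addition $\i$-function of Proposition~\ref{prop:R operations}; continuity follows from $|x-a|<\min(\delta_{1},\delta_{2})\Rightarrow|(F+G)(x)-(F+G)(a)|\le|F(x)-F(a)|+|G(x)-G(a)|<\varepsilon$, taking $\delta_{1},\delta_{2}$ from the $\varepsilon/2$-instances of continuity of $F,G$. The product is handled the same way, representing $F\cdot G$ via the real multiplication $\i$-function of Proposition~\ref{prop:R operations} and running the standard estimate $|F(x)G(x)-F(a)G(a)|\le|F(x)|\,|G(x)-G(a)|+|G(a)|\,|F(x)-F(a)|$ after first using continuity of $F$ at $a$ to bound $|F(x)|$ on a neighbourhood of $a$.

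For part~(2), let $F$ be a $\i$-sequence of continuous real $\i$-functions on $X$ converging uniformly, so that in particular it converges pointwise; by Proposition~\ref{prop:Pointwise} its pointwise limit $G$ is available in $RST_{HF}^m$ as a $\i$-function on $X$, and this $G$ is also the uniform limit. For continuity I would run the standard $\varepsilon/3$ argument: given $a\in X$ and $\varepsilon\in\mathbb{R}^{+}$, uniform convergence yields $N\in\mathbb{N}$ with $|F(n,x)-G(x)|<\varepsilon/3$ for all $n\ge N$ and all $x\in X$; continuity of the $\i$-function $x\mapsto F(N,x)$ at $a$ yields $\delta$ with $|x-a|<\delta\Rightarrow|F(N,x)-F(N,a)|<\varepsilon/3$; then for $x\in X$ with $|x-a|<\delta$, $|G(x)-G(a)|\le|G(x)-F(N,x)|+|F(N,x)-F(N,a)|+|F(N,a)-G(a)|<\varepsilon$.

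The routine verifications — that the various $\lambda$-terms are legal (their bodies being safe with respect to the output variable, as in Propositions~\ref{prop:R operations} and \ref{Pro:SetPairs-1}) and that the arithmetic of $\mathbb{R}$ used in the estimates is provable in $RST_{HF}^m$ — are exactly the translations the paper has agreed to suppress. The one point deserving genuine care is part~(2): one must be sure the hypothesis ``$\i$-sequence of continuous real $\i$-functions'' really licenses treating each slice $x\mapsto F(N,x)$ as a continuous real $\i$-function represented by a term obtained uniformly in $N$ (so that the $\varepsilon/3$ proof is a single scheme), and that ``uniform limit'' coincides with the pointwise limit supplied by Proposition~\ref{prop:Pointwise}; granting this, the classical proof transfers verbatim. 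This bookkeeping about the scheme formulation, rather than any $\varepsilon$--$\delta$ estimate, is what I expect to be the main obstacle.
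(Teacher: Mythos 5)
Your proposal is correct and follows essentially the same route as the paper, whose entire proof is the remark that the standard $\varepsilon$--$\delta$ arguments (requiring the triangle inequality, provable in $RST_{HF}^{m}$) carry over directly; you have simply spelled out those standard arguments together with the appeals to Propositions~\ref{prop:functions}, \ref{prop:R operations} and \ref{prop:Pointwise} that guarantee the constructed objects are $\i$-functions. Your closing observation about the scheme reading of part~(2) is a reasonable extra caution, consistent with Remark~\ref{Note:scheme}.
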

\begin{proof}
The standard proofs of these claims can be easily carried out in
$RST_{HF}^{m}$. Note that they require the triangle inequality
which is provable in $RST_{HF}^{m}$.
\end{proof}

Next we prove, as examples, the Intermediate Value Theorem and the
Extreme Value Theorem, which are two key properties of continuous
real functions.

\begin{thm}
[Intermediate Value Theorem]\label{thm:Intermediate} Let $F$ be
a continuous real $\i$-function on an interval $\left[a,b\right]$ with $F\left(a\right)<F\left(b\right)$. It is provable in
$RST_{HF}^m$ that for any $d\in\mathbb{R}$ s.t. $F\left(a\right)<d<F\left(b\right)$,
there is $c\in\left[a,b\right]$ s.t. $F\left(c\right)=d$.
\end{thm}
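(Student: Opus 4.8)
The plan is to mimic the classical textbook proof of the Intermediate Value Theorem via the least upper bound principle, being careful that the set to which we apply completeness is a legitimate $\safe$-set, so that Theorem~\ref{prop:R-completeness} applies directly (rather than needing the more delicate Theorem~\ref{thm:lubinterval}). First I would define the cut $C := \{x \in [a,b] \mid F(x) < d\}$. The key syntactic observation is that, although $[a,b]$ is only a proper $\safe$-class, the condition $F(x)<d$ is absolute (it is $\safe\emptyset$), and $[a,b]$ is a subclass of the $\safe$-set $\mathbb{R}$; moreover $C$ is in fact bounded above by $b$ and contains $a$, so $C$ is a nonempty $\safe$-subclass of $[a,b]$. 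To get a genuine $\safe$-set I would instead work with $C' := \{x \in \widetilde{\mathbb{R}}\cap[a,b]\text{-as-a-}\safe\text{-set-substructure}\}$ — more precisely, since $[a,b]$ is a non-degenerate interval, by Proposition~\ref{prop:separ} it is separable, and (better) I can simply note that $\{x \in \mathbb{R} \mid a \le x \le b \wedge F(x) < d\}$, while a priori a class, has a least upper bound by Corollary~\ref{cor:interval lub} applied after intersecting, or directly: the cut $\bigcup C$ (the union of all the Dedekind cuts in $C$) is a $\safe$-set by Replacement (Prop.~\ref{prop:replacement}) applied to the $\safe$-function sending a real to itself on the $\safe$-class $C$ — wait, $C$ is a class, so this needs care. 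The cleanest route is: let $c := l.u.b$ of the $\safe$-set $\{r \in \mathbb{Q} \mid r \le a\} \cup \{x \in \mathbb{R}\cap \text{(something set-sized)}\}$; rather than fuss, I would invoke Theorem~\ref{thm:lubinterval}: $C$ is a nonempty separable $\safe$-subclass of $\mathbb{R}$ (separable because it is dense-in-itself via rationals, or because one takes $C\cap\mathbb{Q}$ which is a $\safe$-set by Prop.~\ref{prop:Basic setclass}(2) and is dense in $C$), bounded above by $b$, so $c := l.u.b(C) \in \mathbb{R}$ exists and lies in $[a,b]$.

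Next I would show $F(c) = d$ by ruling out $F(c) < d$ and $F(c) > d$ using continuity of $F$ (Definition~\ref{Def cont}). If $F(c) < d$, set $\varepsilon := d - F(c) > 0$; continuity gives $\delta > 0$ with $|F(x) - F(c)| < \varepsilon$ for all $x \in [a,b]$ with $|x - c| < \delta$. Then $c < b$ (since $F(b) > d$ forces $c \ne b$), so picking any rational $x$ with $c < x < \min(c+\delta, b)$ yields $x \in [a,b]$ and $F(x) < F(c) + \varepsilon = d$, so $x \in C$, contradicting $c = l.u.b(C)$ since $x > c$. Symmetrically, if $F(c) > d$, set $\varepsilon := F(c) - d$; continuity gives $\delta>0$, and now $c > a$ (since $F(a) < d$ forces $c \ne a$), so every $x \in [a,b]$ with $c - \delta < x \le c$ satisfies $F(x) > d$, hence $x \notin C$; but then any upper bound of $C$ in $(c-\delta, c]$ shows $c$ is not the \emph{least} upper bound, a contradiction. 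Hence $F(c) = d$.

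All of this is standard real analysis; the only genuinely delicate point — and the one I expect to be the main obstacle — is the very first step, namely arranging that the least upper bound of the cut $C$ actually exists \emph{within} $RST_{HF}^m$. Since $[a,b]$ is a proper $\safe$-class and not a $\safe$-set, one cannot naively form $C$ as a $\safe$-set, and Theorem~\ref{prop:R-completeness} does not apply to it directly; one must instead verify that $C$ is separable (which follows from $[a,b]$ being a non-degenerate interval, hence separable by Prop.~\ref{prop:separ}, together with the fact that an ``open-in-$C$'' argument or a direct density argument shows $C \cap \mathbb{Q}$ is dense in $C$) and then invoke Theorem~\ref{thm:lubinterval}. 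I would spell this out: $C$ is nonempty ($a \in C$), bounded above ($b$ is a bound), and separable (take the $\safe$-set $C \cap \mathbb{Q}$, which is a $\safe$-set by Prop.~\ref{prop:Basic setclass}(2) since $\mathbb{Q}$ is a $\safe$-set and membership in $C$ is $\safe\emptyset$; a standard argument shows every open ball about a point of $C$ meets $C \cap \mathbb{Q}$, so $C \cap \mathbb{Q}$ is dense in $C$), whence Theorem~\ref{thm:lubinterval} gives $c := l.u.b(C) \in \mathbb{R}$. The remaining $\varepsilon$--$\delta$ case analysis then goes through verbatim as in any textbook, and can be formalized as a scheme in $\lng$ in the manner illustrated in Examples~\ref{Example2} and \ref{Example3}.
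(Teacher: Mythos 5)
Your proof is correct, but it takes a genuinely different route from the paper's. You form the cut $C=\{x\in[a,b]\mid F(x)<d\}$ as a proper $\safe$-class and then invoke the class-level least upper bound theorem (Theorem~\ref{thm:lubinterval}), which forces you to first verify that $C$ is separable via the dense $\safe$-subset $C\cap\mathbb{Q}$. The paper instead sidesteps classes entirely at this point: it defines the $\safe$-\emph{set} $Q_d=\left\Vert\setin{x\in\widetilde{\mathbb{Q}}}{x\in\overline{[a,b]}\wedge F(x)\le d}\right\Vert$ (rationals only, so it is a subset of the $\safe$-set $\mathbb{Q}$ and hence a $\safe$-set by Prop.~\ref{prop:Basic setclass}), shows it is nonempty and bounded, and applies the simpler set-level Theorem~\ref{prop:R-completeness} directly; the subsequent $\varepsilon$--$\delta$ case analysis is essentially identical to yours. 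The two routes land in the same place --- indeed the proof of Theorem~\ref{thm:lubinterval} itself reduces to taking the l.u.b.\ of the dense $\safe$-subset, which here is exactly the paper's $Q_d$ --- but the paper's version is leaner, while yours stays closer to the textbook statement of the cut at the cost of an extra separability check. One point you should make explicit: the density of $C\cap\mathbb{Q}$ in $C$ is \emph{not} automatic from $[a,b]$ being a non-degenerate interval; it genuinely requires the continuity of $F$ (a discontinuous $F$ could make $C$ consist only of irrationals), so the ``standard argument'' you wave at must be spelled out as a continuity argument, just as the paper's nonemptiness claim for $Q_d$ uses continuity together with the density of $\mathbb{Q}$ in $\mathbb{R}$.
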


\begin{proof}
Let $d\in\mathbb{R}$ such that $F\left(a\right)<d<F\left(b\right)$.
Define
\begin{displaymath}
  Q_{d}:=\left\Vert \setin{x\in\mathbb{\tilde{Q}}}{x\in\overline{\left[a,b\right]}\wedge F\left(x\right)\le d}\right\Vert.
\end{displaymath}
$Q_{d}$ is clearly bounded (e.g. by $b$). Since $F\left(a\right)<d$
, standard arguments that use the continuity of $F$ and the denseness
of $\mathbb{Q}$ in $\mathbb{R}$ show that there is a rational $a\leq q$
s.t. $F\left(q\right)\leq d$. Thus, $Q_{d}$ is non-empty and by
Thm. \ref{prop:R-completeness} it has a least upper bound, denote
it by $c$. Since $Q_{d}$ is non-empty and $b$ is an upper bound
for it, $c\in\left[a,b\right]$. Assume by contradiction that $F\left(c\right)<d$,
and pick $\varepsilon=d-F\left(c\right)$. By the continuity of $F$
there exists $\delta>0$ s.t. for any $x\in\left[a,b\right]$, if
$\left|x-c\right|<\delta$, then $\left|F\left(x\right)-F\left(c\right)\right|<\varepsilon=d-F\left(c\right)$.
This yields the existence of a rational $q\in\left(c,c+\delta\right)$
(again, by the denseness of $\mathbb{Q}$ in $\mathbb{R}$) s.t. $F\left(q\right)<d$,
which is a contradiction. Now, assume by contradiction that $F\left(c\right)>d$,
and pick $\varepsilon=F\left(c\right)-d$. In this case there exists
$\delta>0$ s.t. for any $x\in\left[a,b\right]$, if $\left|x-c\right|<\delta$,
then $F\left(x\right)>d$. But then $c-\delta$ is also an upper bound
for $Q_{d}$, which is again a contradiction. Hence, $F\left(c\right)=d$.
\end{proof}

\begin{thm}
[Extreme Value Theorem]\label{thm:-extreme} Let $F$ be a continuous
real $\i$- function on a non-degenerate interval $\left[a,b\right]$.
It is provable in $RST_{HF}^m$ that $F$ attains its maximum
and minimum.
\end{thm}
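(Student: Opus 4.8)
The plan is to run the classical proof of the Extreme Value Theorem — first show $F$ is bounded on $[a,b]$, then take the supremum of its values and show it is attained — but to systematically replace every appeal to compactness of $[a,b]$ (unavailable here, since a non-degenerate interval is only a proper $\i$-class, never a $\i$-set, cf. Proposition~\ref{prop:Basic setclass}(3)) by the least upper bound principle for intervals, Corollary~\ref{cor:interval lub}, and to pass to the dense $\i$-set $\mathbb{Q}\cap[a,b]$ (a $\i$-set by Proposition~\ref{prop:Basic setclass}(2), dense in $[a,b]$ as in the proof of Proposition~\ref{prop:separ}) whenever a set, rather than a class, is needed.

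First I would prove that $F$ is bounded above on $[a,b]$. Consider $S:=\{x\in[a,b]\mid F\text{ is bounded above on }[a,x]\}$. It is downward closed in $[a,b]$, hence an interval (Definition~\ref{Def: interval}); it contains $a$; using continuity of $F$ at $a$ with $\varepsilon=1$ together with $a<b$, it contains some $x_{0}>a$, so it is non-degenerate; and $b$ bounds it above. Hence $c:=\mathrm{lub}(S)$ exists in $\mathbb{R}$ by Corollary~\ref{cor:interval lub}, with $a<c\le b$, so $F$ is defined at $c$. Applying continuity of $F$ at $c$ with $\varepsilon=1$ produces $\delta>0$ with $F<F(c)+1$ on the $\delta$-neighbourhood of $c$ intersected with $[a,b]$; since $c=\mathrm{lub}(S)$ there is $x'\in S$ with $c-\delta<x'\le c$, and splicing the bound on $[a,x']$ with the bound near $c$ shows $F$ is bounded above on $[a,y]$ for every $y\in[a,b]$ with $y<c+\delta$. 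Were $c<b$ this would produce an element of $S$ above $c$, a contradiction; so $c=b$, and the same splicing (with $y=b$) shows $b\in S$. Thus $F$ is bounded above on $[a,b]$, and applying this to $-F$ (a continuous real $\i$-function: composition with negation, Lemma~\ref{prop:comp sum prod} and Proposition~\ref{prop:R operations}) shows $F$ is bounded below as well.

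Next I would attain the maximum. Put $A:=\mathbb{Q}\cap[a,b]$, a nonempty $\i$-set dense in $[a,b]$; then $F\upharpoonright_{A}$ is small by Proposition~\ref{prop:functions}(1), so its image $F[A]$ is a $\i$-set (equivalently, by the Replacement Proposition~\ref{prop:replacement}), and it is bounded above by Step~1. Hence $M:=\mathrm{lub}(F[A])$ exists in $\mathbb{R}$ by Theorem~\ref{prop:R-completeness}. A density-and-continuity argument — given $x\in[a,b]$ and $\varepsilon>0$, choose a rational in a small enough neighbourhood of $x$ within $[a,b]$ — shows $F(x)\le M$ for every $x\in[a,b]$, so $M$ is the supremum of the values of $F$ on $[a,b]$. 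Finally, suppose for contradiction that $F(x)<M$ for all $x\in[a,b]$. Then $G:=\lambda x\in\overline{[a,b]}.\bigl(\text{the real }1/(M-F(x))\bigr)$ is a well-defined continuous real $\i$-function on $[a,b]$ with values in $\mathbb{R}^{+}$ (the reciprocal on $\mathbb{R}^{+}$ being a continuous real $\i$-function, composed with the continuous $x\mapsto M-F(x)$ via Lemma~\ref{prop:comp sum prod}), so by Step~1 it is bounded above by some $K>0$; but then $F(x)\le M-1/K$ for all $x\in[a,b]$, in particular for $x\in A$, contradicting $M=\mathrm{lub}(F[A])$. Therefore $F(c)=M$ for some $c\in[a,b]$, i.e.\ $F$ attains its maximum at $c$; applying this conclusion to $-F$ gives the minimum.

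I expect the crux to be Step~1: the absence of a compactness argument (no finite subcover, no Bolzano--Weierstrass on a $\i$-sequence) forces the somewhat delicate ``good set'' argument, whose success rests on verifying that $S$ meets the hypotheses of Corollary~\ref{cor:interval lub} — interval, non-degenerate, bounded above — and on carefully opening $S$ up at its supremum using continuity. A subsidiary point needing care is that the reciprocal function on $\mathbb{R}^{+}$, hence $G$, really is a continuous real $\i$-function. As elsewhere in this section the argument is to be read as a scheme of $RST_{HF}^m$; the only auxiliary collections introduced, $S$ and $G$, are unproblematic (e.g.\ $S$ can be defined with the bound $M$ ranging over the $\i$-set $\widetilde{\mathbb{Q}}$, keeping its defining formula safe w.r.t.\ $\emptyset$).
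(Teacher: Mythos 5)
Your argument is correct in substance but takes a genuinely different route from the paper's in both halves. For boundedness, the paper argues by contradiction: it sets $C_{n}=\{x\in Q\mid F(x)>n\}$ for $Q=[a,b]\cap\mathbb{Q}$, takes $c_{n}=\mathrm{lub}(C_{n})$ via Theorem~\ref{prop:R-completeness}, extracts a convergent subsequence of $\lambda n.c_{n}$, and contradicts continuity at the limit. So, contrary to your parenthetical remark, the paper \emph{does} use a Bolzano--Weierstrass-type extraction on a $\i$-sequence (and this is in fact available: a $\i$-sequence is a small $\i$-function by Proposition~\ref{prop:functions}(1), so its image is a bounded $\i$-set and the usual monotone-subsequence or bisection argument goes through with Theorem~\ref{prop:R-completeness}). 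Your ``good set'' argument via Corollary~\ref{cor:interval lub} avoids subsequences entirely; it is more self-contained and closer in spirit to the rest of the section, at the cost of having to certify that $S$ is a legitimate $\i$-class. For attainment, the paper only gestures at ``arguments similar to the IVT proof,'' whereas your $1/(M-F)$ reciprocal trick is fully explicit; it does require division (not constructed in Proposition~\ref{prop:R operations}, though routine) to be available as a continuous $\i$-function on $\mathbb{R}^{+}$, which you rightly flag.

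One point needs repair. Your safety check on $S$ is incomplete: bounding the witness $M$ by $\widetilde{\mathbb{Q}}$ handles the outer existential, but the inner clause ``$F(y)\le M$ for all $y\in[a,x]$'' quantifies over the proper $\i$-class $[a,x]$, and $\neg\exists y\left(y\in\overline{[a,x]}\wedge\neg\, F(y)\le M\right)$ is not derivably safe w.r.t.\ $\emptyset$, because $y\in\overline{[a,x]}$ is only safe w.r.t.\ $\emptyset$ and not w.r.t.\ $\{y\}$, so the existential clause of Definition~\ref{def: safety} cannot fire. The fix is standard in this setting: let $y$ range over the $\i$-set $\widetilde{\mathbb{Q}}$ with the side condition $a\le y\le x$, and use continuity together with the density of $\mathbb{Q}$ (exactly as in your own Step~2) to show that boundedness of $F$ on $[a,x]\cap\mathbb{Q}$ is equivalent to boundedness on $[a,x]$. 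With that emendation, $S$ is a $\i$-class, it is a downward-closed hence genuine interval, non-degenerate by continuity at $a$, bounded by $b$, and the rest of your argument goes through.
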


\begin{proof}
Let $Q$ be the $\i$-set $\left[a,b\right]\cap\mathbb{Q}$. $F\left[Q\right]$
is a $\i$-set by Prop. \ref{prop:replacement}, and it is non-empty
by the denseness of $\mathbb{Q}$ in $\mathbb{R}$. Assume by contradiction
that $F\left[Q\right]$ is not bounded, and define for every $n\in\mathbb{N}$
$C_{n}=\left\Vert \setin{x\in\tilde{Q}}{F\left(x\right)>n}\right\Vert $.
By the assumption $C_{n}$ is a non-empty, bounded $\i$-set. Therefore,
by Thm. \ref{prop:R-completeness}, each $C_{n}$ has a least upper
bound, denote it $c_{n}$. It is easy to see that $c_{n}\in\left[a,b\right]$
for each $n\in\mathbb{N}$. Now, define the $\i$-sequence $\lambda n\in\mathbb{N}.c_{n}$
(which is indeed a $\i$-sequence by Thm. \ref{prop:R-completeness}).
Standard arguments show that since $\left[a,b\right]$ is closed and
bounded, there is a subsequence of $\lambda n\in\mathbb{N}.c_{n}$,
$\lambda k\in\mathbb{N}.c_{n_{k}}$, which converges to a limit, denote
it $m$. By Prop. \ref{prop:closeCauchy-1-1} we have that $m\in\left[a,b\right]$.
Now, since $F$ is continuous, we easily get that $\lambda k\in\mathbb{N}.F\left(c_{n_{k}}\right)$
converges to $F\left(m\right)$. But, for each $k\in\mathbb{N}$:
$F\left(c_{n_{k}}\right)>n_{k}\geq k$, which contradicts the convergence
of the sequence. Hence, $F\left[Q\right]$ is bounded, and by Thm.
\ref{prop:R-completeness} it has a least upper bound, denote it by
$d$. Assume by contradiction that there exists $u\in\left[a,b\right]$
s.t. $F\left(u\right)>d$. Picking $\varepsilon=F\left(u\right)-d$,
the continuity of $F$ entails that there exists $\delta$ s.t. for
every $x\in B_{\delta}\left(u\right)$, $F\left(x\right)\geq d$.
But the denseness of $\mathbb{Q}$ entails that there is a rational
number $q\in B_{\delta}\left(u\right)$, and thus $F\left(q\right)\geq d$,
which is a contradiction. The proof that there exists $x\in\left[a,b\right]$
s.t. that $F\left(x\right)=d$ uses arguments similar to the ones
used in the proof of Thm. \ref{thm:Intermediate}. The proof that
$F$ attains its minimum is symmetric. 
\end{proof}

The next step is to introduce in $RST_{HF}^m$ the concepts of
differentiation, integration, power series, etc, and develop their
theories. It should now be clear that there is no difficulty in doing
so. Since a thorough exposition obviously could not fit in one paper 
we omit it here, but use some relevant facts 
 in what follows.

\bs

We end this section by showing that all elementary functions that
are relevant to $J_{2}$ are available in $RST_{HF}^{m}$ in the
sense that they are formalizable as $\i$-functions and their basic
properties are provable in $RST_{HF}^{m}$. Of course, not all
constant functions on the ``real'' real line are available in $J_{2}$,
even though for every $y$ in $\mathbb{R}$, $\lambda x\in\mathbb{R}.y$
is available in $RST_{HF}^{m}$ as a $\i$-function. The reason
is that $\lambda x\in\mathbb{R}.y$ does not exists in $J_{2}$ for
\emph{every} ``real'' number $y$ (for the simple fact that not
every ``real'' real number is available in $RST_{HF}^{m}$).
Thus we next define what is an ``$J_{2}$-elementary function''
(see, for example, \cite{Risch1979Analysis} for a standard definition
of ``elementary function'').
\begin{defi}
The collection of $J_{2}$-elementary functions is the minimal collection
that is closed under addition, subtraction, multiplication, division,
and composition, and includes the following:

\begin{itemize}
\item $J_{2}$-constant functions: $\lambda x\in\mathbb{R}.c$ where $c$
is a real number in $J_{2}$.
\item Exponential: $\lambda x\in\mathbb{R}.e^{x}$
\item Natural logarithm: $\lambda x\in\mathbb{R}^{+}.\mbox{ln}x$
\item Trigonometric functions:$\lambda x\in\mathbb{R}.\mbox{sin}x$.
\item Inverse trigonometric functions: $\lambda x\in\left[-1,1\right].\mbox{arcsin}x$.
\end{itemize}
\end{defi}
\begin{prop}
\label{prop:All--polynomials}All $J_{2}$-polynomials (i.e. with
coefficients in $J_{2}$) on $\mathbb{R}$ are available in $RST_{HF}^{m}$
as $\i$-function, and it is provable in $RST_{HF}^{m}$ that they
are continuous.
\end{prop}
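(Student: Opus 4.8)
The plan is to build every $J_{2}$-polynomial out of the two simplest kinds of continuous real $\i$-functions --- constant functions and the identity --- using only addition and multiplication, and then to invoke the closure properties of continuous real $\i$-functions already established in Lemma~\ref{prop:comp sum prod}.

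First I would record the building blocks. A $J_{2}$-polynomial on $\mathbb{R}$ has the form $p(x)=\sum_{i=0}^{n}c_{i}x^{i}$ with each coefficient $c_{i}$ a real number lying in $J_{2}$; since $c_{i}\in J_{2}$, Theorem~\ref{thm:in J2}(2) ensures that $c_{i}$ is a $\i$-set, so that $\widetilde{c_{i}}$ is available and $\lambda x\in\bar{\mathbb{R}}.\widetilde{c_{i}}$ is a legitimate function term. By Proposition~\ref{prop:functions}(6) both this constant function and the identity $\lambda x\in\bar{\mathbb{R}}.x$ are $\i$-functions on the $\i$-class $\mathbb{R}$, and both are trivially continuous in the sense of Definition~\ref{Def cont}: for a constant function any $\delta$ witnesses the condition, while for the identity one takes $\delta=\varepsilon$. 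These two verifications are immediate in $RST_{HF}^{m}$.

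Next, for a concrete polynomial $p$ of degree $n$, I would observe that each monomial $\lambda x\in\bar{\mathbb{R}}.c_{i}x^{i}$ is the product of the constant function $\lambda x\in\bar{\mathbb{R}}.\widetilde{c_{i}}$ with $i$ copies of the identity; by iterating Lemma~\ref{prop:comp sum prod}(1) (the product of two continuous real $\i$-functions is again one) this monomial is a continuous real $\i$-function. Then $\lambda x\in\bar{\mathbb{R}}.p(x)$ is the sum of the $n+1$ monomials, so a further finite chain of applications of Lemma~\ref{prop:comp sum prod}(1), now for sums, shows that it too is a continuous real $\i$-function, which is the assertion of the proposition. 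The degenerate cases --- the zero polynomial and polynomials of degree $0$ --- are covered directly by the constant-function case. As in Remark~\ref{Note:scheme}, the statement is to be read as a scheme: it is established uniformly, but with one instance per choice of degree $n$ and coefficients, the meta-level induction being on $n$.

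I do not expect a genuine obstacle here. The only point that needs care is \emph{not} to recast the argument as a single object-level induction of the form ``$\forall n\in\mathbb{N}.\ \lambda x\in\bar{\mathbb{R}}.x^{n}$ is continuous'': the continuity predicate of Definition~\ref{Def cont} involves unbounded quantification over $\mathbb{R}^{+}$, which is a proper $\i$-class rather than a $\i$-set, so the associated formula is not safe with respect to $\emptyset$ and Proposition~\ref{prop:inductionN} is inapplicable to it. Treating each fixed degree separately by a meta-level induction sidesteps this and is enough, since a polynomial is by definition an object of fixed finite degree.
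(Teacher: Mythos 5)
Your proposal is correct and follows essentially the same route as the paper: both start from the $J_{2}$-constant functions and the identity (via Proposition~\ref{prop:functions}), note their continuity is immediate, and then assemble an arbitrary $J_{2}$-polynomial by finitely many applications of the sum/product closure in Lemma~\ref{prop:comp sum prod}. Your added caveat about reading the statement as a scheme with a meta-level induction on the degree, rather than an object-level induction over $\mathbb{N}$ (which would be blocked by the safety restriction on the continuity formula), is a correct and worthwhile clarification consistent with Remark~\ref{Note:scheme}.
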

\begin{proof}
$J_{2}$-constant functions and the identity function are available
in $RST_{HF}^{m}$ by Prop. \ref{prop:functions}, and the proofs
of their continuity is immediate. Composition of $\i$-functions is
also available in $RST_{HF}^{m}$. All $J_{2}$-polynomials on
$\mathbb{R}$ are therefore available in $RST_{HF}^{m}$, since
$+$ and $\cdot$ are $\i$-functions, and they are continuous by
Lemma \ref{prop:comp sum prod}.
\end{proof}
\begin{prop}
\label{prop:The-exponential-trigo}The exponential and trigonometric
functions are available in $RST_{HF}^{m}$, and it is provable
in $RST_{HF}^{m}$ that they are continuous.
\end{prop}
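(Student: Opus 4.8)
The plan is to obtain $\exp$, $\sin$ and $\cos$ as limits of their Taylor polynomials and to appeal to the results already established in this section about $\i$-functions, uniform limits and $J_{2}$-polynomials.

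First I would observe that $\lambda n\in\mathbb{N}.n!$ is available in $RST_{HF}^{m}$ as a small $\i$-function, and that $\lambda\langle n,x\rangle\in\mathbb{N}\times\mathbb{R}.x^{n}$ is available as a $\i$-function: both are defined by bounded recursion on $n$, with the relevant ``history'' --- the finite function $\langle 0!,\dots,n!\rangle$, respectively $\langle x^{0},\dots,x^{n}\rangle$ --- being a $\i$-set, so that a defining term can be written down exactly as the term for $+$ in the proof of Proposition~\ref{prop:+<}. Consequently, for each fixed $N$ the partial sum $P_{N}:=\lambda x\in\mathbb{R}.\sum_{n=0}^{N}\frac{x^{n}}{n!}$ is a $J_{2}$-polynomial (its coefficients $\tfrac{1}{n!}$ being rationals, hence real numbers in $J_{2}$), so by Proposition~\ref{prop:All--polynomials} it is a $\i$-function which $RST_{HF}^{m}$ proves to be continuous; moreover the un-curried family $\lambda\langle N,x\rangle\in\mathbb{N}\times\mathbb{R}.P_{N}(x)$ is a $\i$-sequence of continuous real $\i$-functions, and its restriction to $\mathbb{N}\times[-k,k]$ is one as well, for every $k\in\mathbb{N}$ (Proposition~\ref{prop:functions}(3)).

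Next, for each $k\in\mathbb{N}$ the standard tail estimate for the exponential series --- provable in $RST_{HF}^{m}$ since it only uses the already established arithmetic of $\mathbb{Q}$ and $\mathbb{R}$ together with induction on $\mathbb{N}$ for absolute formulas (Proposition~\ref{prop:inductionN}) --- shows that $\bigl(P_{N}\upharpoonright_{[-k,k]}\bigr)_{N\in\mathbb{N}}$ is uniformly Cauchy, so by Lemma~\ref{prop:comp sum prod}(2) its uniform limit is a continuous real $\i$-function on $[-k,k]$. On the other hand $(P_{N}(x))_{N}$ converges for every $x\in\mathbb{R}$, so by Proposition~\ref{prop:Pointwise} the pointwise limit $\exp:=\lambda x\in\mathbb{R}.\lim_{N}P_{N}(x)$ is available in $RST_{HF}^{m}$ as a $\i$-function on $\mathbb{R}$. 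Continuity of $\exp$ is then obtained locally: given $a\in\mathbb{R}$, pick $k\in\mathbb{N}$ with $|a|<k$ (by the Archimedean property, which is provable in $RST_{HF}^{m}$); on $(-k,k)$ the function $\exp$ agrees with the uniform limit of $\bigl(P_{N}\upharpoonright_{[-k,k]}\bigr)_{N}$, which is continuous at $a$, and since $a$ was arbitrary $RST_{HF}^{m}$ proves $\exp$ continuous on $\mathbb{R}$.

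The very same argument, applied to $\sum_{n}\frac{(-1)^{n}x^{2n+1}}{(2n+1)!}$ and $\sum_{n}\frac{(-1)^{n}x^{2n}}{(2n)!}$, makes $\sin$ and $\cos$ available in $RST_{HF}^{m}$ as $\i$-functions which it proves continuous; the remaining trigonometric functions then follow by the closure operations of Proposition~\ref{prop:functions} and Lemma~\ref{prop:comp sum prod}(1) (e.g.\ $\tan$ as $\sin/\cos$ on the $\i$-class $\{x\in\mathbb{R}\mid\cos x\neq0\}$), and the basic identities ($\exp(x+y)=\exp x\cdot\exp y$, $\sin^{2}x+\cos^{2}x=1$, the monotonicity of $\exp$, and so on) are proved by the usual manipulations of these series, all of whose quantifiers range over $\i$-sets and $\i$-classes already constructed. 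I expect the main obstacle to be the bookkeeping hidden in the first and last steps: one must genuinely exhibit the term witnessing that $\lambda\langle N,x\rangle.P_{N}(x)$ is a $\i$-function (which is where the bounded-recursion definitions of $n!$ and $x^{n}$ enter), and one must be careful that convergence of the Taylor series is only \emph{locally} uniform, so that continuity is established first on each $[-k,k]$ and only afterwards transferred to $\mathbb{R}$.
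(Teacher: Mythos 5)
Your proposal is correct and follows essentially the same route as the paper's (very terse) proof: definability of $\exp$, $\sin$, $\cos$ via power series and Proposition~\ref{prop:Pointwise}, and continuity via Lemma~\ref{prop:comp sum prod} together with Proposition~\ref{prop:Pointwise}. Your version merely fills in the details the paper leaves implicit --- in particular the observation that convergence is only \emph{locally} uniform, so continuity must first be established on each $[-k,k]$ and then transferred to $\mathbb{R}$, which is a genuine (and welcome) sharpening of the paper's one-line appeal to the uniform-limit lemma.
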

\begin{proof}
Since the exponential and the trigonometric functions all have power
series, their definability as $\i$-functions follows from Prop.
\ref{prop:Pointwise}. It is straightforward to verify that the basic
properties of these $\i$-functions are provable in $RST_{HF}^{m}$.
Examples of such properties are: the monotonicity of the exponential,
the power rules of the exponential, trigonometric identities like
$\mathrm{sin}\left(\alpha+\beta\right)=\mathrm{sin}\,\alpha\,\mathrm{cos\,}\beta\,+\mathrm{sin}\,\beta\mathrm{\,cos}\,\alpha$,
the fact that $\mathrm{sin}$ has a period of $2\pi$ (where $\pi$
is its first positive root), etc.\footnote{We can prove the standard properties of the exponent and the trigonometric
functions as listed, e.g., in \cite{ahlfors1964complex}, using the
notion of differentiation.} The continuity of these functions follows from Lemma \ref{prop:comp sum prod}
and Prop. \ref{prop:Pointwise}.
\end{proof}
\begin{lem}
Let $F$ be a continuous, monotone real $\i$-function on a real interval
$\left[a,b\right]$, and suppose $F\left(a\right)<F\left(b\right)$.
It is provable in $RST_{HF}^{m}$ that
\[
\forall y\in\bar{\mathbb{R}}\left(\exists x\in\overline{\left[a,b\right]}.\bar{F}\left(x\right)=y\leftrightarrow y\in\overline{\left[F\left(a\right),F\left(b\right)\right]}\right)
\]
\end{lem}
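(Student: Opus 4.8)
The plan is to split the biconditional into its two implications, for an arbitrary fixed $y\in\mathbb{R}$, reading the whole statement as a scheme in the free function variable denoting $F$, exactly as the other results of this subsection are handled. The left-to-right implication is the ``easy'' half, and the right-to-left implication is where the Intermediate Value Theorem enters.

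For the left-to-right direction I would first observe that, since $a<b$ and $F(a)<F(b)$, the monotonicity of $F$ must be of the non-decreasing kind on $[a,b]$ (the non-increasing case is ruled out by $F(a)<F(b)$). Hence, given $x\in[a,b]$ with $F(x)=y$, from $a\le x\le b$ we obtain $F(a)\le F(x)\le F(b)$ by monotonicity, i.e.\ $y\in[F(a),F(b)]$. This uses only the order properties of $\mathbb{R}$ established in Prop.~\ref{prop:R operations} and the definition of a monotone $\i$-function; neither continuity nor the Intermediate Value Theorem is needed here.

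For the right-to-left direction, assume $y\in[F(a),F(b)]$ and perform a three-way case split. If $y=F(a)$, take $x:=a$; if $y=F(b)$, take $x:=b$; in both cases $x\in[a,b]$ and $F(x)=y$ immediately (recall $[a,b]$ is non-degenerate, so $a,b\in[a,b]$). Otherwise $F(a)<y<F(b)$, and I would invoke Theorem~\ref{thm:Intermediate} (the Intermediate Value Theorem, already proved in $RST_{HF}^m$) with $d:=y$ to get $c\in[a,b]$ with $F(c)=y$, and take $x:=c$. Combining the three cases yields $\exists x\in[a,b].F(x)=y$.

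The proof is short precisely because all the analytic content has been packaged into Theorem~\ref{thm:Intermediate}; what is left is a case analysis together with some arithmetic with $\le$ and $<$. The part that actually needs care --- and the only place where the argument departs from the textbook one --- is the formalization: since $F$ is a free function variable and $[a,b]$ and $\mathbb{R}$ are proper $\i$-classes (not $\i$-sets), the statement must be read as a scheme in $\lng$, and one has to check that the hypotheses (``$F$ continuous and monotone on $[a,b]$'', ``$F(a)<F(b)$'') together with the matrix of the biconditional form a legitimate formula, in particular that every class-membership subformula occurring in it is safe w.r.t.\ $\emptyset$; this follows from Prop.~\ref{prop:Basic setclass}, the notational conventions for $[a,b]$ and $\mathbb{R}$, and the $\beta$-reduction convention for $\lambda$-terms fixed in Section~\ref{sec:Real-Analysis}. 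Granting that, the derivation above goes through verbatim in $RST_{HF}^m$.
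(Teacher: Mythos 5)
Your proof is correct and follows essentially the same route as the paper's: the left-to-right implication comes directly from monotonicity, and the right-to-left implication is an application of Theorem~\ref{thm:Intermediate} (with the endpoint cases handled separately). The paper's own proof is just a two-sentence version of exactly this argument.
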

\begin{proof}
The left-to-right implication is immediate from the monotonicity of
$F$. The right-to-left implication follows from Thm. \ref{thm:Intermediate}.
\end{proof}
\begin{prop}
\label{prop:F-1}Let $F$ be a continuous, strictly monotone real
$\i$-function on a real interval. Then it is provable in $RST_{HF}^{m}$
that the inverse function $F^{-1}$ is available in $RST_{HF}^{m}$
as a $\i$-function, and its continuity is provable in $RST_{HF}^{m}$.
\end{prop}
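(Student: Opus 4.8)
The plan is to produce an explicit term $s$ with $Fv(s)\subseteq\{y\}$ that represents $F^{-1}$ on the $\safe$-class $J:=F[I]$, and then to check continuity by the usual $\varepsilon$--$\delta$ argument read over $\safe$-classes. Assume $F$ is strictly increasing (the decreasing case being symmetric), and let $t$ be a term representing it, so that $Fv(t)\subseteq\{x\}$ and $F=\|\lambda x\in\bar I.t\|$. By continuity, strict monotonicity and the Intermediate Value Theorem (Thm.~\ref{thm:Intermediate}, cf.\ the Lemma preceding this proposition), the image $J=F[I]$ is again a real interval whose endpoints are the $F$-images of the endpoints of $I$ (or $\pm\infty$); hence $J$ is a $\safe$-class, definable by a formula that is $\safe\emptyset$ — here one uses that $\le$ on $\mathbb{R}$ is $\subseteq$, so that e.g.\ ``$F(a)\le y$'' abbreviates the absolute formula ``$t\!\left[\nicefrac{a}{x}\right]\subseteq y$''. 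Also $F^{-1}$ is strictly increasing: if $y_1<y_2$ in $J$ then $F^{-1}(y_1)<F^{-1}(y_2)$, since otherwise applying $F$ would yield $y_1\ge y_2$.

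The one genuinely non-routine step is the definability of $F^{-1}$ by a legal term. One cannot take $\iota x.(x\in\bar I\wedge F(x)=y)$, because over the proper class $I$ the body is only known to be $\safe\emptyset$ and not $\safe\{x\}$, so $\{x\mid x\in\bar I\wedge F(x)=y\}$ is not a legal set term. Instead I would exploit that $F^{-1}(y)$ is a Dedekind cut, i.e.\ a subset of $\mathbb{Q}$, and build it by hand from the $\safe$-set $Q_I:=\mathbb{Q}\cap I$ (a $\safe$-set by Prop.~\ref{prop:Basic setclass}(2)): for each $y\in J$ one has $F^{-1}(y)=\sup\{q\in Q_I\mid F(q)\le y\}$, together with the rationals below $\inf I$ when $I$ is bounded below. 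Concretely, set
\[
\chi(r,y)\ :=\ r<\inf I\ \vee\ \exists q\in\widetilde{Q_I}\,\bigl(r<q\ \wedge\ F(q)\le y\bigr),
\]
where ``$r<\inf I$'' is read as $\bot$ when $I$ has no lower bound and as ``$r<a$'' (i.e.\ $r\in a$) when $a$ is the left endpoint of $I$. Since ``$F(q)\le y$'' is the absolute formula ``$t\!\left[\nicefrac{q}{x}\right]\subseteq y$'' and $q\in\widetilde{Q_I}\safe\{q\}$, the $\wedge$- and $\exists$-clauses of the safety relation give $\exists q\in\widetilde{Q_I}(r<q\wedge F(q)\le y)\safe\emptyset$, hence $\chi(r,y)\safe\emptyset$; therefore $s:=\{r\in\widetilde{\mathbb{Q}}\mid\chi(r,y)\}$ is a legal term with $Fv(s)\subseteq\{y\}$, and $F^{-1}=\|\lambda y\in\bar J.s\|$ is a $\safe$-function. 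It then remains to prove in $RST_{HF}^m$ that $s$ really represents the inverse — that $\|s\|_{[y:=Y]}$ is a Dedekind cut equal to $\sup\{q\in Q_I\mid F(q)\le Y\}$, and that $F(F^{-1}(y))=y$ for $y\in J$ and $F^{-1}(F(x))=x$ for $x\in I$ — which follows from strict monotonicity, continuity and the density of $\mathbb{Q}$ in $\mathbb{R}$. I expect the bookkeeping around the endpoints of $I$ (whether $\inf I$ is rational, whether it is attained) to be the only fiddly part; it is exactly what forces the extra disjunct above and a small case split, but it involves nothing conceptually deep.

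Finally, for the continuity of $F^{-1}$ I would argue directly. Fix $y_0\in J$, put $x_0:=F^{-1}(y_0)\in I$, and let $\varepsilon\in\mathbb{R}^{+}$. If $x_0$ is interior to $I$, choose $x_1,x_2\in I$ with $x_1<x_0<x_2$, $x_0-x_1<\varepsilon$ and $x_2-x_0<\varepsilon$; then $F(x_1)<y_0<F(x_2)$ by strict monotonicity, so for $\delta:=\min(y_0-F(x_1),F(x_2)-y_0)>0$ every $y\in J$ with $|y-y_0|<\delta$ satisfies $F(x_1)<y<F(x_2)$, hence $x_1<F^{-1}(y)<x_2$ (monotonicity of $F^{-1}$), i.e.\ $|F^{-1}(y)-x_0|<\varepsilon$; at an endpoint of $I$ one uses the obvious one-sided version. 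This is a first-order argument about the $\safe$-classes $I,J\subseteq\mathbb{R}$ and the $\safe$-function $F$ using only already-established facts (the order on $\mathbb{R}$, density of $\mathbb{Q}$, Thm.~\ref{thm:Intermediate}), so it transcribes directly into $RST_{HF}^m$; alternatively one could feed into Prop.~\ref{prop:epsilon} the observation that $F$ pulls back open $\safe$-classes of $\mathbb{R}$ to traces on $I$ of open $\safe$-classes, but the direct route is shorter. The main obstacle of the whole proof is thus the first step — securing term-definability of $F^{-1}$ even though its domain $J$ and range $I$ are proper classes — after which everything is the analysis one would find in any textbook.
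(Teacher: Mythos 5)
Your proposal is correct and follows essentially the same route as the paper: both obtain $F^{-1}(y)$ as the supremum (i.e.\ the Dedekind cut built from) the rationals $q$ in the domain with $F(q)\le y$, which is exactly what secures term-definability despite the domain and range being proper $\safe$-classes. You are in fact somewhat more careful than the paper's own write-up --- your extra disjunct handles an irrational left endpoint (where the paper's $Q_y$ would be empty for $y=F(a)$), and you supply the $\varepsilon$--$\delta$ continuity argument that the paper asserts but does not spell out.
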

\begin{proof}
We here prove the claim for continuous, strictly monotone real $\i$-function
on a finite closed interval $\left[a,b\right]$. The extension from
finite closed intervals to arbitrary interval is standard. Suppose
$F$ is increasing. The proof is similar to the proof of Thm. \ref{thm:Intermediate}.
For any $y\in\left[F\left(a\right),F\left(b\right)\right]$ define
the $\i$-set $Q_{y}:=\left\Vert \setin{q\in\widetilde{\mathbb{Q}}}{q\in\overline{\left[a,b\right]}\wedge F\left(q\right)\leq y}\right\Vert $.
It is easy to see that $Q_{y}$ is non-empty and bounded, thus, by
Thm. \ref{prop:R-completeness}, $Q_{y}$ has a least upper bound.
Now, $\left\Vert \lambda y\in\overline{\left[F\left(a\right),F\left(b\right)\right]}.l.u.b
(\widetilde{Q_{y}})\right\Vert $
is the desired inverse $\safe$-function. It is not difficult to prove
the basic properties of the inverse function in $RST_{HF}^{m}$.
We demonstrate the proof that $F^{-1}\circ F=id_{\left[a,b\right]}$.
For this we need to show that for any $x\in\left[a,b\right]$, $l.u.b(Q_{F\left(x\right)})=x$.
By the monotonicity of $F$, $x$ is clearly an upper bound for $Q_{F\left(x\right)}$.
Assume by contradiction that there is a real number $w<x$ which is
an upper bound of $Q_{F\left(x\right)}$. Thus, in the interval $\left(w,x\right)$
there is a rational number $q$ such that $F\left(q\right)\leq F\left(x\right)$
(by monotonicity). But then, $q\in Q_{F\left(x\right)}$ and $w<q$,
which is a contradiction.
\end{proof}
\begin{prop}
\label{prop:elementary-functions}All $J_{2}$-elementary functions
are available in $RST_{HF}^{m}$.
\end{prop}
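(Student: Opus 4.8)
The plan is to prove the proposition by induction on the way a $J_{2}$-elementary function is built up, strengthening the induction hypothesis to the statement that each $J_{2}$-elementary function is available in $RST_{HF}^{m}$ as a $\i$-function \emph{and} is provably continuous on its ($\i$-class) domain. The extra clause has to be carried along because the inverse-function base cases and the closure steps below invoke Proposition~\ref{prop:F-1} and Lemma~\ref{prop:comp sum prod}, which concern continuous functions.

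For the base cases I would argue as follows. The $J_{2}$-constant functions are available by Proposition~\ref{prop:functions}(6) and are trivially continuous; the exponential $\lambda x\in\mathbb{R}.e^{x}$ and the sine $\lambda x\in\mathbb{R}.\mathrm{sin}\,x$ are available and continuous by Proposition~\ref{prop:The-exponential-trigo}. For the natural logarithm I would use that it is $\exp^{-1}$: since $\exp$ is a continuous, strictly increasing real $\i$-function on the interval $\mathbb{R}$, Proposition~\ref{prop:F-1} yields that $\lambda x\in\mathbb{R}^{+}.\mathrm{ln}\,x$ is available in $RST_{HF}^{m}$ as a $\i$-function and is provably continuous. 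Likewise, I would obtain $\mathrm{arcsin}$ by first restricting $\sin$ (via Proposition~\ref{prop:functions}(3)) to the interval $\left[-\nicefrac{\pi}{2},\nicefrac{\pi}{2}\right]$ --- recall that $\pi$ is available, as noted in the proof of Proposition~\ref{prop:The-exponential-trigo} --- on which it is continuous and strictly increasing, and then apply Proposition~\ref{prop:F-1} to get $\lambda x\in\left[-1,1\right].\mathrm{arcsin}\,x$.

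For the inductive step, let $F$ and $G$ be $J_{2}$-elementary functions which, by the induction hypothesis, are available as continuous $\i$-functions on $\i$-classes $X$ and $Y$ with images in $\mathbb{R}$. The cases of composition, addition, subtraction and multiplication are routine: composition is handled by Proposition~\ref{prop:functions}(4) --- after restricting $F$ to the $\i$-class $F^{-1}[Y]$, so that its image lies in the domain $Y$ of $G$, using Proposition~\ref{prop:functions}(2),(3) --- together with Lemma~\ref{prop:comp sum prod}(1); addition and multiplication by Lemma~\ref{prop:comp sum prod}(1), writing e.g. $F+G=\left\Vert\lambda x\in\overline{X\cap Y}.\bar{F}(x)+\bar{G}(x)\right\Vert$ and using that $+$ and $\cdot$ on $\mathbb{R}$ are themselves $\i$-functions (Proposition~\ref{prop:R operations}); and subtraction by reducing it to addition together with composition with the continuous unary negation $\i$-function on $\mathbb{R}$ constructed in the proof of Proposition~\ref{prop:R operations}. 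In all these cases the required basic properties follow from the standard $\varepsilon$-$\delta$ calculations, which go through in $RST_{HF}^{m}$ thanks to the provability there of the triangle inequality.

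The hard part will be closure under division, for two reasons: division is a partial operation, so the domain of $F/G$ must be cut down to the $\i$-class on which $G$ does not vanish, and one first needs the reciprocal operation available as a provably continuous $\i$-function. For the reciprocal I would note that on $\mathbb{R}^{+}$ it is represented by a legal $\lambda$-term producing the Dedekind cut of $1/y$, e.g. $\lambda y\in\overline{\mathbb{R}^{+}}.\setin{q\in\widetilde{\mathbb{Q}}}{q\leq 0\vee\exists p\in\widetilde{\mathbb{Q}}\left(p\notin y\wedge q\cdot p<1\right)}$, whose defining formula is safe with respect to $\emptyset$; one then extends it by oddness, putting $rcp(y)=-\,rcp^{+}(-y)$ for $y<0$, to a $\i$-function $rcp$ on $\mathbb{R}\setminus\{0\}$ via Proposition~\ref{prop:functions}(5) (the two pieces have disjoint domains), its continuity on each open ray following from a routine $\varepsilon$-$\delta$ argument. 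Given $F$ on $X$ and $G$ on $Y$ as above, I would then set $Z=\left(X\cap Y\right)\cap G^{-1}\!\left[\mathbb{R}\setminus\{0\}\right]$, which is a $\i$-class by Propositions~\ref{pro:classes} and~\ref{prop:functions}(2), and define $F/G=\left\Vert\lambda x\in\bar{Z}.\bar{F}(x)\cdot rcp(\bar{G}(x))\right\Vert$; this is a $\i$-function on $Z$, continuous by Lemma~\ref{prop:comp sum prod}(1) and Proposition~\ref{prop:functions}(3),(4), and its defining property $\bar{G}(x)\cdot(F/G)(x)=\bar{F}(x)$ for $x\in Z$ is provable in $RST_{HF}^{m}$. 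This closes the induction. The only genuine subtlety, as indicated, is the division case; everything else is a bookkeeping exercise on top of the machinery already developed above.
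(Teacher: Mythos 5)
Your proposal is correct and follows essentially the same route as the paper: the base cases come from Propositions~\ref{prop:functions}, \ref{prop:All--polynomials} and \ref{prop:The-exponential-trigo}, the logarithm and $\mathrm{arcsin}$ from Proposition~\ref{prop:F-1}, and closure under composition and the arithmetic operations from Proposition~\ref{prop:R operations}, Proposition~\ref{prop:functions} and Lemma~\ref{prop:comp sum prod}. You are in fact more thorough than the paper, whose proof is silent on closure under division; your explicit construction of the reciprocal as a $\succ$-function on $\mathbb{R}\setminus\{0\}$ and the restriction of $F/G$ to the class where $G$ does not vanish supplies a step the paper leaves implicit.
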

\begin{proof}
Props. \ref{prop:All--polynomials} and \ref{prop:The-exponential-trigo}
show that $J_{2}$-polynomials on $\mathbb{R}$, the exponential and
trigonometric functions are available in $RST_{HF}^{m}$. Prop.
\ref{prop:F-1} then enables the availability in $RST_{HF}^{m}$
of the inverse trigonometric functions, and of the natural logarithm
as the inverse of the exponential.
\end{proof}
It is not difficult to see that many standard discontinuous functions
are also available in $RST_{HF}^{m}$, as the next proposition
shows.
\begin{prop}
Any piecewise defined function with finitely many pieces such that
its restriction to any of the pieces is a $J_{2}$-elementary function,
is available in $RST_{HF}^{m}$.
\end{prop}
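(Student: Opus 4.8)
The plan is to induct on the number $k$ of pieces, gluing with Proposition~\ref{prop:functions}(5) and using that every $J_{2}$-elementary function is available in $RST_{HF}^{m}$ (Proposition~\ref{prop:elementary-functions}). So suppose $F$ is specified by cases on finitely many pieces $X_{1},\dots,X_{k}\subseteq\mathbb{R}$, where each $X_{i}$ is a $\i$-class (in the intended applications the pieces are intervals, which are $\i$-classes since their defining formulas are absolute, together with finite unions and intersections of such, which are $\i$-classes by Proposition~\ref{pro:classes}); the pieces are pairwise disjoint or, if they overlap, the specification is consistent on the overlaps --- which is exactly what makes $F$ well-defined; and $F\upharpoonright_{X_{i}}$ coincides with the restriction to $X_{i}$ of some $J_{2}$-elementary function $F_{i}$ whose domain includes $X_{i}$.

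For $k=1$ we have $F=F_{1}\upharpoonright_{X_{1}}$: by Proposition~\ref{prop:elementary-functions} $F_{1}$ is available in $RST_{HF}^{m}$ as a $\i$-function, and by Proposition~\ref{prop:functions}(3) so is its restriction to the $\i$-subclass $X_{1}$ of its domain. For the inductive step, put $X=X_{1}\cup\dots\cup X_{k-1}$, which is a $\i$-class by Proposition~\ref{pro:classes}, and let $G$ be the piecewise function determined on $X$ by $F_{1},\dots,F_{k-1}$; by the induction hypothesis $G$ is available in $RST_{HF}^{m}$ as a $\i$-function on $X$. Let $H=F_{k}\upharpoonright_{X_{k}}$, available as in the base case. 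Since $G$ and $H$ agree on $X\cap X_{k}$ (vacuously when the pieces are disjoint, and by the consistency assumption otherwise), Proposition~\ref{prop:functions}(5) gives that $H\cup G$ is available in $RST_{HF}^{m}$ as a $\i$-function on $X_{k}\cup X=X_{1}\cup\dots\cup X_{k}$; and $H\cup G$ is exactly $F$.

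The basic properties of $F$ --- that it takes the prescribed value on each $X_{i}$, that continuity on the interior of each piece is inherited from the $F_{i}$, and so on --- follow by combining the provable properties of the $F_{i}$ (Propositions~\ref{prop:All--polynomials}, \ref{prop:The-exponential-trigo}, \ref{prop:F-1}) with the gluing identity $\forall x\in\bar{X}.\overline{G\cup F}(x)=\bar{F}(x)$ recorded in the proof of Proposition~\ref{prop:functions}(5); as elsewhere in this section (cf. Remark~\ref{Note:scheme}), the whole construction is formalized as a scheme in $\lng$.

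The only point I expect to require genuine care is the hypothesis of Proposition~\ref{prop:functions}(5): one must verify that the two $\i$-functions being glued really do agree on the intersection of their domains. When the pieces are pairwise disjoint this is automatic; when they may share endpoints it is precisely the well-definedness requirement built into the notion of a piecewise defined function, so it has to be assumed (or read off from the concrete description) rather than proved. Everything else --- closure of the relevant $\i$-classes under the unions and restrictions used, and the bookkeeping of the induction --- is routine given the results already established.
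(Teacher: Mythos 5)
Your proposal is correct and follows essentially the same route as the paper, which simply invokes Proposition~\ref{prop:elementary-functions} together with the gluing construction of Proposition~\ref{prop:functions}(5); you have merely spelled out the induction on the number of pieces and the agreement-on-overlaps hypothesis that the paper leaves implicit.
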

\begin{proof}
If the function has finitely many pieces and each of the pieces is
a $J_{2}$-elementary function, then it can be constructed in $RST_{HF}^{m}$
using Prop. \ref{prop:functions}(5).
\end{proof}

\section{\label{sec:Further-Research}Conclusion and Further Research}

In this paper we showed that a minimal computational framework
is sufficient for the development of applicable mathematics. 
Of course, a major future research
task is to implement and test the framework. A critical component
of such implementation will be to scale the cost of checking the safety
relation. We then plan to use the implemented framework to formalize
even larger portions of mathematics, including first of all more analysis,
but also topology and algebra.

\ms

Another important task is to fully exploit the computational power of 
$RST_{HF}^{m}$ and $J_2$.
This includes finding a good notion of canonical terms, and investigating
various reduction properties such as strong normalization.
We intend to try also to profit from this computational power in other
ways, e.g., by using it for proofs by reflection as supported by well-known
proof assistant like Coq \cite{Chl13a}, Nuprl \cite{ConstableImplementing86}
and Isabelle/HOL \cite{nipkow2002isabelle}.

\ms

An intuitionistic
variant of the system $RST_{HF}^m$ can  
be also considered. It is based on intuitionistic first-order
logic (which underlies constructive counterparts of $ZF$,
like $CZF$ \cite{aczel2001CST} and $IZF$ \cite{beeson2012foundations}),
 and is obtained by adding to $RST_{HF}^m$ the axiom of Restricted
Excluded Middle: $\varphi\vee\neg\varphi$, where $\varphi\safe\emptyset$.
This axiom is computationally acceptable since it simply asserts the
definiteness of absolute formulas. The resulting computational theory  should 
allow for a similar formalization of constructive analysis (e.g.,
\cite{myhill1975CST}). 

\ms

Further exploration of the connection between our framework and other
related works is also required. This includes works on: computational
set theory \cite{aczel2001CST,beeson2012foundations,cantone2001set,Friedman77,myhill1975CST},
 operational set theory \cite{FEFERMAN2009971,jagerexpilcit2014},
and rudimentary set theory \cite{beckmann2015safe,mathias2015rudimentary}.

\ms

Another direction for further research is to consider larger computational
structures. This  includes $J_{\omega}$ or even $J_{\omega^{\omega}}$
(which is the minimal model of the minimal computational theory based
on ancestral logic \cite{avron2010new,cohen2015middle}). On the one
hand, in such universes standard mathematical structures can be treated
as sets. On the other hand, they are more comprehensive and less concrete,
thus include more objects which may make computations harder.

\section*{Acknowledgements}
The second author is supported by:
Fulbright Post-doctoral Scholar program; Weizmann Institute of Science
-- National Postdoctoral Award program for Advancing Women in Science;
Eric and Wendy Schmidt Postdoctoral Award program for Women in
Mathematical and Computing Sciences.
\bibliographystyle{plain}
\bibliography{j2}

\end{document}